\newcommand{\bb}{\mathbb}
\definecolor{darkblue}{rgb}{0.0, 0.0, 0.55}
\newcolumntype{P}[1]{>{\raggedright\arraybackslash}p{#1}}
\theoremstyle{definition} \newtheorem{Theorem}{Theorem}
\theoremstyle{definition} \newtheorem{Cor}{Corollary}
\theoremstyle{definition} \newtheorem{Prop}{Proposition}
\theoremstyle{definition} \newtheorem{lemma}{Lemma}
\theoremstyle{definition} \newtheorem{definition}{Definition}
\theoremstyle{definition} 
\theoremstyle{definition} 
\theoremstyle{definition} \newtheorem{remark}{Remark}
\theoremstyle{definition} 
\DeclareMathOperator*{\argmax}{arg\,max}
\title{Games on Endogenous Networks\thanks{We are grateful to Krishna Dasaratha for a very helpful discussion. We also thank (in random order) Nicole Immorlica,  Navin Kartik, Kevin He, Nageeb Ali, Matt Elliott, Mohammad Akbarpour, Zo\"{e} Hitzig, Antonio Cabrales, Mihai Manea,  Yann Bramoull\'{e}, Giacomo Lanzani, Arun Chandrasekhar, Drew Fudenberg, Emily Breza,  Eran Shmaya, Roberto Corrao, and many seminar and conference participants for helpful conversations. Golub gratefully acknowledges support from the National Science Foundation under grant SES-1629446.  Yann Calv\'{o}-L\'{o}pez, Yu-Chi Hsieh, and Rithvik Rao provided excellent research assistance.}} 
\author{Evan Sadler\thanks{Columbia University, es3668@columbia.edu} \and Benjamin Golub\thanks{Northwestern University, bgolub@northwestern.edu}}
\date{\today}
\begin{document}

\maketitle

\begin{abstract}
\noindent We study network games in which players choose the partners with whom they associate as well as an effort level that creates spillovers for those partners. New stability definitions extend standard solution concepts for each choice in isolation:  pairwise stability in links and Nash equilibrium in actions. Focusing on environments in which all agents agree on the desirability of potential partners, we identify conditions that determine the shapes of stable networks. The first condition concerns whether higher actions create positive or negative spillovers for neighbors. The second concerns whether actions are strategic complements or substitutes to links. Depending on which combination of these conditions occurs, equilibrium networks are either \emph{ordered overlapping cliques} or \emph{nested split graphs}---highly structured forms that facilitate the computation and analysis of outcomes. We apply the framework to examine the consequences of competition for status, to microfound matching models that assume clique formation, and to explain empirical findings in which designing groups to leverage peer effects backfired.
\end{abstract}

\section{Introduction}

\hspace{1 pc}
Social contacts influence people's behavior, and that behavior in turn affects the connections they form.  For instance, a good study partner might lead a student to exert more effort in school, and that student's increased effort may incentivize her partner to maintain the collaboration.  Understanding how networks and actions mutually influence one another is crucial for policy design. 
This paper theoretically analyzes this interplay.

While many models analyze peer effects assuming an exogenously fixed social network, an important paper by \citet*{Carrelletal2013} shows this can lead to mistaken predictions and interventions that backfire.  The authors first estimated academic peer effects among cadets at the U.S. Air Force Academy and subsequently designed peer groups to improve the performance of academically weaker cadets.  Extrapolating from peer-effect estimates based on randomly assigned first-year peer groups, the authors expected that low-skilled freshmen would benefit from being placed in groups with higher proportions of high-skilled freshmen.

When this intervention was actually carried out, it produced a \emph{negative} effect of a size comparable to the positive effect that the model predicted.  The authors interpret this as a consequence of endogenous friendship and collaboration networks \emph{within} administratively assigned groups. Friendships between low- and high-skilled freshmen were much less likely to form in the designed groups than in the randomly assigned ones---effects not considered in the model guiding the intervention. Therefore, the positive spillovers that the design sought to maximize failed to happen.  To account for such effects, researchers need models that permit a simultaneous analysis of network formation and peer effects.

We build a framework to study games with network spillovers together with strategic link formation.  Our theoretical contribution is twofold. First, we propose a model that nests standard analyses of each type of interaction on its own. Players choose actions (in our main application, effort levels) from an ordered set, and they also have preferences over links given actions. We define stability concepts that extend Nash equilibrium (for actions) and pairwise stability (for network formation).   Intuitively, in a solution to a network game with network formation, players should have an incentive to change neither their actions\footnote{We use this term from now on to mean the strategic action \emph{other} than the link choice.}  nor their links.  %

Second, we apply this framework to make predictions about network structure and behavior, focusing on a class of environments in which heterogeneity across players is ``vertical,'' with agreement about who desirable partners are.  Our main results identify ordinal payoff properties, described below, under which stable networks are highly structured. One structure,  \emph{ordered overlapping cliques}, has players grouped into cliques that exhibit homophily, with those taking similar efforts grouped together. This pattern seems important for applications, such as the Air Force study, but has not been theoretically understood. The other type of structure emerging in our results is a \emph{nested split graph}, a strongly hierarchical type of network with a core of highly central players connecting the rest. Our characterization unifies and generalizes some prior work on the emergence of this type of structure. After presenting the main theoretical results, we use the tractable structure they deliver to conduct several analyses on questions of applied interest, including our motivating example. 

In more detail, we first focus on a class of \emph{separable} environments in which the benefit from forming a link depends only on the actions of the two players involved.  We obtain sharp characterizations of outcomes using two kinds of order conditions. The first concerns the nature of spillovers. We say a game has \emph{positive spillovers} if players taking higher actions are more attractive neighbors.  Correspondingly, a game has \emph{negative spillovers} if players taking higher actions are less attractive neighbors.\footnote{Note these spillover properties are distinct from strategic complements or substitutes \emph{in actions}. Our properties concern levels of utility rather than how others' actions affect incentives to take higher actions.}  The second type of condition concerns the relationship between a player's own action incentives and number of links.  The game exhibits \emph{action--link  complements} if the returns from taking higher actions increase with one's degree (number of neighbors) in the network. The game exhibits \emph{action--link substitutes} if these returns decrease with one's degree.  Our main result characterizes the structure of both actions and links for any combination of order conditions (one of each type). \cref{tab:2by2} summarizes our findings.

{\footnotesize

\renewcommand{\arraystretch}{1.5}
\begin{table}[t]
\begin{tabular}{p{2.6cm}  p{3cm} P{4.2cm} P{4.2cm}  }

 & & \multicolumn{2}{c}{\footnotesize \emph{Higher actions create\ldots}} \\
 & & \textbf{positive spillovers} & \textbf{negative spillovers} \\
  \hline
\multirow{2}{2.7cm}{\footnotesize{\emph{Links and actions are strategic\ldots }}} & \textbf{complements}   & Nested split graph, higher degree implies higher action    &  Ordered overlapping cliques, neighbors take similar actions \\
 &  \textbf{substitutes} &    Ordered overlapping cliques, neighbors take similar actions  &  Nested split graph, higher degree implies lower action  \\
 \hline
\end{tabular}
\caption{Summary of main results; each cell indicates which network and action configurations are stable under the corresponding pair of assumptions on payoffs.}\label{tab:2by2}
\end{table}

\renewcommand{\arraystretch}{1}
}

\begin{figure}
    \centering 
    \subfloat[Nested split graph]{{\includegraphics[width=0.3\textwidth]{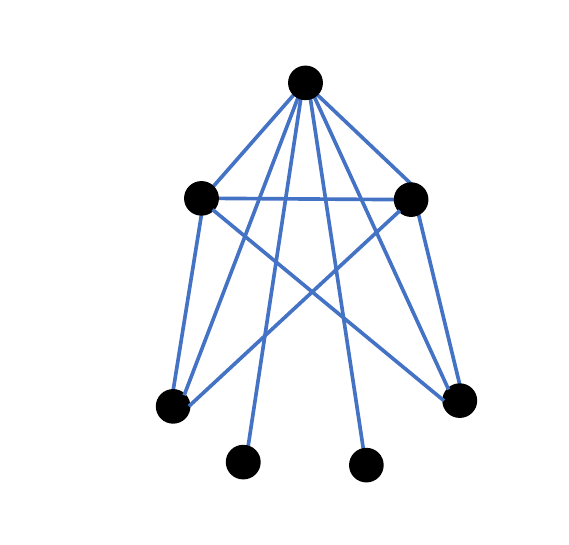} }}\quad \subfloat[Overlapping cliques]{{\includegraphics[width=0.3\textwidth]{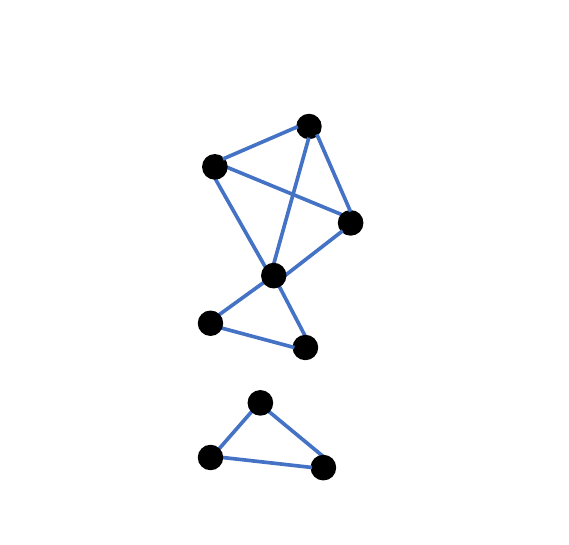} }}\caption{Examples of the two types of networks that are characterized by our main result. Players' actions (effort levels) are ordered according to their vertical position in the diagram.}\label{fig:graphs} 
\end{figure}

\cref{fig:graphs} illustrates the two types of graphs our model predicts.  Nested split graphs are strongly hierarchical networks: nodes are partitioned into classes according to their degrees, and when one node is in a higher class than another, its neighborhood is a strict superset of the other node's neighborhood. On the other hand, ordered overlapping cliques entail that we can order nodes such that every neighborhood is an interval under this ordering, and the endpoints of a node's neighborhood are increasing in the node's own position in the order---imagine a ranking of nodes, and each is connected to all others that are close enough in rank.\footnote{For recent econometric work concerning the estimation of related models, see \citet*{chandrasekhar2021network}.} In each case, equilibrium action levels are ordered in a way that corresponds to the network structure.  To distill the key forces driving these characterizations, we identify ordinal properties of linking incentives that are sufficient for these structures to emerge. These conditions, which we term \emph{consistency} and \emph{alignment}, yield this dichotomy in a much larger class of environments.

We then specialize our framework to interpret the findings of \citet{Carrelletal2013}.  We represent their environment as one of positive spillovers combined with action--link substitutes. That is, we assume that students who spend more time studying create benefits for their peers, but more time studying also makes link formation and maintenance more costly.  We identify conditions for a study group to be completely connected, and different conditions under which it is fragmented.  The message of these characterizations is that the complete graph becomes harder to sustain as types get more spread out---types in the middle are necessary to induce types at the extremes to link with one another.  Small changes in group composition that replace medium-ability cadets with high-ability ones keep the network connected and benefit low-ability cadets through better peer effects. However,  if we make the group more bimodal in ability levels---as in the designed squadrons of \cite{Carrelletal2013}---it endogenously fragments.
When the group divides into two cliques, one with high-ability cadets and one with low-ability students, low-ability cadets no longer experience peer effects from high-ability students, so the benefits of group design disappear. We discuss how the mechanism in our model closely tracks the interpretation that \citet{Carrelletal2013} conjecture for their results based on analysis of post-treatment data.

To further demonstrate the framework, we briefly study two additional applications.  First, we introduce a model of ``status games'' based on \citet*{Immorlicaetal2017}.  Competitions for status entail a combination of action--link complements and negative spillovers. Imagine, for example, people who compete for social status through conspicuous consumption and also choose their network connections. Those who engage in more conspicuous consumption have an incentive to form more social connections to reap status benefits (action--link complements).  On the other hand, those who do so are less attractive friends since linking with them creates negative comparisons (negative spillovers).  In this setting, our model predicts that individuals will sort into cliques with those  who invest similar amounts in status signaling---a finding consistent with stylized facts from sociological studies. Moreover, those in larger friend groups---popular individuals---engage in more conspicuous consumption due to heightened competition, and an increase in status concerns causes the social graph to fragment into smaller cliques.

Our final application provides a microfoundation for  ``club'' or ``group matching'' models. Theories of endogenous matching for public goods or team production often \emph{assume} that spillovers occur in disjoint cliques, which is critical for tractability. The question is then:  which cliques form? We show that even if agents can arrange their interactions into more complex structures if they wish,
there are natural conditions under which cliques are still the predicted outcome. 

Following our applications, we address several important questions that are tangential to the main analysis.  First, we report conditions under which pairwise stable outcomes are guaranteed to exist---while existence is assured in all of our examples, it is not immediate in general because the presence of a link is a discrete outcome, and our solution concept considers joint deviations.  Second, we examine key modeling choices in more detail.  Finally, as our predicted structures are more stark that what we observe in real networks, we discuss two ways to accommodate more complex structures.  %

Endogenizing network structure and behavior together is both theoretically important and challenging.  \citet*{kranton2001theory} studied \emph{market} exchange following strategic network formation, while \citet*{jackson2002formation} introduced an interplay between network formation and a network game (in their case a coordination game). Other important early papers studying  network games and network formation include \citet*{GaleottiGoyal2010}, and \citet*{Cabralesetal2011}.   Our methodological contributions include (i) stability concepts that extend canonical static solution concepts for network formation and equilibrium behavior in isolation; and (ii) an approach to characterizing outcomes via ordinal incentive properties, in the spirit of monotone comparative statics. We also contribute a new substantive prediction: we identify natural conditions under which cliques form among those similar on a ``vertical'' attribute such as productivity.  This type of homophilous structure  is quite different from those studied in the aforementioned theoretical literature, yet it seems important in practice---as highlighted by our applications inspired by \citet{Carrelletal2013},  \citet{Immorlicaetal2017}, and  other work. In contrast, theoretical treatments of the endogenous emergence of homophily have mainly focused on \emph{horizontal} preferences,  such as people preferring to be friends with those who are similar to them (see, e.g., \citet{holme2006nonequilibrium,polanski2023homophily,kivinen2017polarization}). 

Nested split graphs have received considerable attention in earlier network formation models with payoff functions more similar to ours \citep*{Hiller2017,Hellmann2020}---they are thought, for instance, to capture key features of inter-bank lending networks and other trading networks \citep*{Konigetal2014}.  By identifying ordinal payoff properties that produce these structures, we help clarify and unify key forces driving earlier results. For instance, we find that strategic complements in actions are \emph{not} relevant to whether nested split graphs appear---what matters instead is how actions relate to one's attractiveness as a neighbor and one's desire for links.  More detailed connections are drawn to these and other papers in \cref{sec:related}, once we can make reference to the specifics of our results.

\section{Framework} \label{sec:framework}

\hspace{1 pc}
A \textbf{network game with network formation} is a tuple $\langle N, (S_i)_{i \in N}, (u_i)_{i \in N}\rangle$ consisting of the following data:
\begin{itemize}
    \item There is a finite set $N$ of players; we write $\mathcal{G}$ for the set of all simple, undirected graphs on $N$.\footnote{The set $N$ is fixed, so we identify a graph with its set $E$ of \emph{edges} or \emph{links}---an edge is an unordered pair of players.  We write $ij$ for the edge $\{i,j\}$.}
    
    \item For each player $i \in N$, there is a set $S_i$ of actions; we write $\mathcal{S} = \prod_{i \in N} S_i$ for the set of all action profiles.
    
    \item  For each player $i \in N$, there is a payoff function $u_i \, : \, \mathcal{G} \times \mathcal{S} \to \bb{R}$. This gives player $i$'s payoff as a function of a graph $G \in \mathcal{G}$ and a profile of players' actions.
\end{itemize}
A pair $(G,\mathbf{s}) \in \mathcal{G} \times \mathcal{S}$ is an \textbf{outcome} of the game.  Given a graph $G$, we write $G_i$ for the neighbors of player $i$; we write $G +  E$ for the graph $G$ with the links $E$ added and $G-E$ for the graph $G$ with the links $E$ removed.

We next introduce solution concepts that distinguish equilibrium outcomes of this game. We then present conditions on the game---first more intuitive and then more general ones---which are central to our subsequent characterizations.

\subsection{Solution concepts} \label{sec:solution_concepts}

\hspace{1 pc}
Intuitively, in a solution to a network game with network formation, players should have an incentive to change neither their actions nor their links.  We propose two nested solution concepts reflecting this idea. These correspond to existing concepts in the network formation literature, and we extend them to our setting with action choices.\footnote{See \citet*{bloch2006definitions} and \citet*[Chapter 6]{jackson2008}.}  

\begin{definition}
An outcome $(G,\mathbf{s})$ is \textbf{strictly pairwise stable} if the following conditions hold.
\begin{itemize}
    \item The action profile $\mathbf{s}$ is a Nash equilibrium of the game $\langle N,(S_i)_{i \in N}, (u_i(G, \cdot))_{i \in N}\rangle$ in which $G$ is fixed and players only choose actions $s_i$.
    
    \item There is no link $ij \in G$ such that $u_i(G -ij, \mathbf{s}) \geq u_i(G, \mathbf{s})$.
    
    \item There is no link $ij \notin G$ such that both $u_i(G + ij, \mathbf{s}) \geq u_i(G, \mathbf{s})$ and $u_j(G + ij, \mathbf{s}) \geq u_j(G, \mathbf{s})$.
\end{itemize}

\noindent The outcome $(G,\mathbf{s})$ is \textbf{strictly pairwise Nash stable} if additionally there is no pair $(s'_i,H_i)$, consisting of an action $s'_i \in S_i$ and a subset of $i$'s neighbors, $H_i \subseteq G_i$, such that $u_i\left(G - \{ik \, : \, k \in H_i\}, (s'_i, s_{-i})\right) \geq u_i(G, \mathbf{s})$.

\end{definition}

\begin{remark}
Note that our primary definitions require strict preferences over links.  This is technically convenient because it facilitates the cleanest characterization of stable networks, but one could obtain weaker results using analogous solution concepts that permit indifference.\footnote{We say the outcome is \textbf{weakly pairwise stable} if in the second bullet we replace the weak inequality with a strict inequality (there can exist a link $ij$ such that $i$ is indifferent about deleting it), and in the third bullet we require one of the two inequalities to be strict (two players may both be indifferent about adding the missing link between them). For the definition of weak pairwise Nash stability, the weak inequality in the last line becomes a strict inequality---we only require that no player can strictly benefit from deleting a subset of her links.}  Whenever we simply write ``pairwise (Nash) stable,'' this should be understood as ``strictly pairwise (Nash) stable.''  More substantively, both of these solution concepts reflect that link formation requires mutual consent.  An outcome is strictly pairwise stable if $\mathbf{s}$ is a Nash equilibrium given the graph, no player wants to unilaterally delete a link, and no pair of players jointly wish to form a link.  Pairwise Nash stability adds the stronger requirement that no player benefits from simultaneously changing her action and deleting some subset of her links.
\end{remark}

\begin{remark}
Whenever a player considers changing her action, she treats the network as fixed, and whenever two players consider adding a link between them, they treat the action profile $\mathbf{s}$ as fixed. Other modeling choices (e.g., allowing more deviations) may come to mind; \cref{sec:foundations} offers a detailed discussion the methodological reasons for our choice.  However, even if we evaluated a larger set of deviations, results that apply to every pairwise stable outcome still apply to whatever remains. Thus, we see the set of deviations as \emph{the minimal ones needed to ensure the predictions in our results hold}, rather than as a constraint on which deviations players are allowed to consider. 
\end{remark}

\begin{remark}
Note that standard models of network games and strategic network formation are nested in our framework.  To represent a network game on a fixed graph $G$, one can take the utility functions from the network game and add terms so it is strictly optimal for all players to include exactly the links in $G$.  Pairwise stable outcomes in the corresponding network game with network formation correspond to Nash equilibria in the original network game.  To represent a model of network formation, simply make each $S_i$ a singleton (and let the payoffs correspond to those in the network formation model).
\end{remark}
\subsection{Separable network games} \label{sec:separable}

\hspace{1 pc}
We now introduce some structure on payoffs; the environment we study nests several canonical models of network games and permits a simple statement of sufficient conditions for our characterization of network structures.  Suppose all players have the same action set $S_i = S \subseteq \bb{R}$, and payoffs take the form
\begin{equation}\label{eq:parametricpayoff}
u_{i}(G, \mathbf{s}) = v_i(\mathbf{s}) + \sum_{j \in G_i} g(s_i, s_j)
\end{equation}
for each player $i$.  Here, the function $v_i : \mathbb{R}^n \to \mathbb{R}$ is idiosyncratic to player $i$ and captures strategic incentives that are independent of the network.  These payoffs embed two substantive assumptions about the incremental value of any link to its participants, namely that this value:  (i)  is invariant to permutations of players' labels (linking incentives are \emph{anonymous}); and (ii) depends only on the actions of the two players involved (linking incentives are \emph{separable}).  These properties imply that the direct effects of links on agents' payoffs are separable,
which is a canonical feature of network games.\footnote{This refers only to the part of the payoff that affects linking incentives, since $v_i(\mathbf{s})$ is completely arbitrary. Importantly, there may be \emph{indirect} effects of other players' actions on a player's payoffs, but these are mediated through the action choices of neighbors.}  Implicitly, this also means that the cost of forming a link is constant holding actions fixed. For this reason, it is best to view this as a model of network formation within relatively small communities in which one could realistically maintain links with most of the other players.\footnote{In Section \ref{sec:more_complex_payoffs} we discuss how our analysis can accommodate relaxations of both (i) and (ii).}  For formalizations of conditions (i) and (ii) and a statement that these imply the payoff form \eqref{eq:parametricpayoff}, see \cref{sec:formalization_payoff}.

The specification of the payoffs is inspired by canonical network game models in which each individual takes a single action with all counterparties, with some further structure as described above. Indeed, using payoffs of the form \eqref{eq:parametricpayoff}, we can endogenize linking decisions in many common models of network games.  For instance, if we take $v_i(\mathbf{s}) = b_i s_i - \frac{1}{2}s_i^2$, and $g(s_i, s_j) = \alpha s_i s_j$, we obtain a linear-quadratic model similar to that in \citet*{Ballesteretal2006}.  While we can accommodate idiosyncratic private incentives for effort, note that the anonymity assumption precludes varying the weights attached to different links.  However, the bilateral term $g$ can readily incorporate linking costs that depend arbitrarily on player $i$'s action $s_i$, allowing us to capture a wide range of linking incentives.  Despite this flexibility, the payoff form \eqref{eq:parametricpayoff} is still considerably more restrictive than what our results require, and thus the next subsection introduces weaker, purely ordinal assumptions.\footnote{See also \Cref{sec:more_complex_payoffs} on models with more heterogeneity.
}

Within the class of separable network games, we can state simple order conditions that determine the structure of stable networks, and these cover all of our applications. In a network game with network formation, we can view higher actions and additional links as two different activities in which a player can invest; the following definition captures a notion of complementarity between these activities.  

\begin{definition} Given payoffs of the form \eqref{eq:parametricpayoff}, the game exhibits \textbf{action--link complements} if $g$ satisfies a single-crossing condition in its first argument:
$$g(s,r) >  (\geq) \; 0 \quad \implies \quad g(s',r) > (\geq) \; 0$$
whenever $s' > s$.  Actions and links are complements if taking higher actions makes forming additional links more attractive.  The game exhibits \textbf{action--link substitutes} if the above implication holds whenever $s' < s$.\footnote{This definition readily extends to arbitrary network games with network formation---a game exhibits action--link complements if $u_i(G +ij, \mathbf{s})  - u_i(G - ij, \mathbf{s}) \geq (>) \; 0$ implies $ u_i(G+ij, s'_i, s_{-i}) - u_i(G - ij, s'_i, s_{-i}) \geq (>) \; 0)$ whenever $s'_i > s_i$ and action--link substitutes if the implication holds for $s'_i < s_i$.}   Actions and links are substitutes if taking higher actions makes forming additional links less attractive. \end{definition}

Whereas the previous definition captures how increases in the player's action affect that player's willingness to link,  the  next one concerns how such increases affect the player's attractiveness  as a neighbor. 
\begin{definition} The game exhibits \textbf{positive spillovers} if $g$ satisfies a single-crossing condition in its second argument:
$$g(s,r) > (\geq) \; 0 \quad \implies \quad g(s,r') > (\geq) \; 0$$
whenever $r' > r$, and the game exhibits \textbf{negative spillovers} if the above implication holds whenever $r' < r$.\footnote{Again, these definitions naturally extend to arbitrary games---a game exhibits positive spillovers if $u_i(G+ij, \mathbf{s}) - u_i(G - ij, \mathbf{s}) \geq (>) \; 0$ implies $u_i(G + ij, s'_j, s_{-j}) - u_i(G - ij, s'_j, s_{-j}) \geq (>) \; 0$ whenever $s'_j > s_j$ and negative spillovers if the implication holds for $s'_j < s_j$.} \end{definition}

If the game exhibits positive spillovers, players who take higher actions, all else equal, are more attractive neighbors.  This assumption naturally captures situations in which the action $s_i$ represents effort that benefits neighbors (e.g., studying or gathering information).  If the game exhibits negative spillovers, then players who take higher actions are less attractive neighbors.

\begin{remark}
We emphasize that whether a network game with network formation exhibits action--link complements/substitutes, or positive/negative spillovers is unrelated to whether there are strategic complements or substitutes in \emph{actions}:  the actions $s_i$ and $s_j$ of two different players $i$ and $j$ could be (strategic) complements, substitutes, or neither.  This should be clear as the term $v_i(\mathbf{s})$ in \eqref{eq:parametricpayoff}, which is independent of the graph, can depend on the entire profile of actions $\mathbf{s}$ in a completely arbitrary way. Moreover, there is no complementarity or substitutability assumed within the function $g$.  Action--link complements/substitutes is about how taking higher actions affects a given player $i$'s \emph{incentive to form links}. Similarly, positive/negative spillovers is about how other players' actions affect player $i$'s linking incentives.  Our characterization of stable network structures accordingly says nothing about strategic complements or substitutes in actions, and indeed adding such structure places no further restrictions on the networks that can occur in stable outcomes.
\end{remark}

\subsection{More general ordinal conditions} \label{sec:ordinal_assumptions}

\hspace{1 pc}
The single-crossing properties introduced in the previous subsection will turn out to yield orderings on players that capture their preferences for linking and their desirability as partners. These orderings are ultimately the keys to characterizing stable outcomes. Separable payoffs serve as a simple case that motivates these more general properties. This subsection substantiates the claim by introducing order conditions that are sufficient to characterize equilibria.  %
The main economic content of these conditions is that the environment features mainly vertical heterogeneity, both in terms of which partners players desire and how much they want to link.\footnote{In Section \ref{sec:discussion} we comment on the limitations and extensions of this structure.} \Cref{sec:class} will characterize equilibrium actions and behavior in separable network games as a corollary.

In the following definitions, we write
$$\Delta_{ij} u_i(G, \mathbf{s}) = u_i(G + ij, \mathbf{s}) - u_i(G - ij, \mathbf{s})$$
for the marginal value of link $ij$ to player $i$ at outcome $(G,\mathbf{s})$. We write
\begin{align*} 
W_i^+(G,\mathbf{s}) &= \{j \in N \, : \, \Delta_{ji} u_j(G,\mathbf{s}) \geq 0\} \\
S_i^{+}(G,\mathbf{s}) &= \{j \in N \, : \, \Delta_{ji} u_j(G,\mathbf{s}) > 0\} 
\end{align*}
for the set of players with a weak (resp., strict) incentive to link with $i$. We write $i \in W_j^-(G,\mathbf{s})$ if $j \in W_i^+(G,\mathbf{s})$, and analogously define $S_j^-(G,\mathbf{s})$.

\begin{definition}\label{def:consistent}
Given a network game with network formation, linking incentives are \textbf{consistent} if there is no outcome $(G,\mathbf{s})$, and a corresponding collection of players $i,j,k,\ell$, such that \begin{enumerate}
\item $S_i^+(G,\mathbf{s})$ contains $k$ but not $\ell$;
\item $S_j^+(G,\mathbf{s})$ contains $\ell$ but not $k$. \end{enumerate}
\end{definition}

In words, linking incentives are consistent if, given any outcome, the players agree on who is a more desirable linking partner.  There are no players $i,j,k,\ell$ such that $k$ wishes to link with $i$ but not $j$, while $\ell$ wishes to link with $j$ but not $i$.  Note this does not preclude heterogeneity in preferences over links---far from it---but if some player $i$ wishes to link with $k$ but not $\ell$, then any player who wishes to link with $\ell$ necessarily also wishes to link with $k$. 

Fixing a given outcome, we define $i \succeq_{\text{in}} j$ if $S_i^+(G,\mathbf{s}) \supseteq S_j^+(G,\mathbf{s})$. It is easy to see (and we will state formally below) that consistency implies that at every outcome, $\succeq_{\text{in}}$ is a complete weak order, capturing players' desirability as partners.

\begin{definition} \label{def:aligned}
Linking incentives are \textbf{aligned} if the following holds at any outcome $(G,\mathbf{s})$. Consider any three players $i,j,k$ such that, at that outcome, $k \succeq_{\text{in}} j \succeq_{\text{in}} i$. Then (omitting arguments $(G,\mathbf{s})$) \begin{align*} W^-_i \cap W^-_k  \; &\subseteq \; W^-_j \subseteq W^-_i \cup W^-_k,  \\ 
\text{and } \quad S^-_i \cap S^-_k  \; &\subseteq \; S^-_j. \end{align*} \end{definition}

Alignment relates a player's desirability as a linking partner to her own linking incentives.  Suppose $k$ is a more desirable neighbor than $j$, and $j$ is a more desirable neighbor than $i$, so that $j$'s desirability is between that of $i$ and $k$. Alignment says that $j$'s desire for links is also between that of $i$ and $k$. That is, if both $k$ and $i$ wish to link with some other player $\ell$, then $j$ also wishes to link with $\ell$ (the first containment in the definition), and if neither $k$ nor $i$ wish to link with $\ell$, then $j$ also does not wish to link with $\ell$ (the second containment). The second line states that $j$ strictly wishes to link to $\ell$ whenever both $i$ and $k$ do.

Together, the two properties of consistency and alignment allow us to measure players' desirability as partners, and their inclination to form links, on the same one-dimensional scale.  The following lemma states that if consistency and alignment are satisfied, then we can define two orderings $\succeq_{\text{in}}$ and $\succeq_{\text{out}}$. The first ranks agents by how much others want to link to them. The second ranks agents by how much they want to link to others.

\begin{lemma}\label{lem:order}
Suppose a network game with network formation has consistent and aligned linking incentives.  Then, at any outcome $(G,\mathbf{s})$, there exist weak orders $\succeq_{\text{in}}$ and $\succeq_{\text{out}}$ on the players such that (again omitting arguments $(G,\mathbf{s})$) \begin{enumerate} \item $W^+_k \supseteq W^+_j$ and $S^+_k \supseteq S^+_j$ whenever $k \succeq_{\text{out}} j$;
\item $W^-_k \supseteq W^-_j$ and $S^-_k \supseteq S^-_j$  whenever $k \succeq_{\text{out}} j$.
\end{enumerate}
That is, if $k \succeq_{\text{in}} j$, then every player who (strictly) wants to link with $j$ (strictly) wants to link with $k$ as well, and if $k \succeq_{\text{out}} j$, then $k$ (strictly) wants to link with every player with whom $j$ (strictly) wants to link.  

Moreover, the two orders are either identical or directly opposed.
\end{lemma}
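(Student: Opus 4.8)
The plan is to read the two orders directly off of set inclusions recording ``who wants to link with whom,'' and then extract the dichotomy from the alignment axiom. Fix an outcome $(G,\mathbf{s})$, and for each player $i$ let $D_i = \{j : \Delta_{ij}u_i(G,\mathbf{s}) \ge 0\}$ be the set of partners $i$ (weakly) wants, so that $S_i^+(G,\mathbf{s}) = \{j : \Delta_{ij}u_j(G,\mathbf{s}) \ge 0\}$ is the set of players who (weakly) want $i$. Define $\succeq_{\text{in}}$ by $i \succeq_{\text{in}} j \iff S_j^+ \subseteq S_i^+$ and $\succeq_{\text{out}}$ by $i \succeq_{\text{out}} j \iff D_j \subseteq D_i$. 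The two asserted implications are then essentially definitional: if $k \succeq_{\text{in}} j$ and $\Delta_{ij}u_i \geq 0$ then $i \in S_j^+ \subseteq S_k^+$, so $\Delta_{ik}u_i \geq 0$; symmetrically for $\succeq_{\text{out}}$. The strict (``$>$'') versions are obtained the same way, using instead the strict sets $\{j : \Delta_{ij}u_j > 0\}$ and $\{j : \Delta_{ij}u_i > 0\}$, which by the very same argument below also form chains.

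The first thing to prove is that $\succeq_{\text{in}}$ and $\succeq_{\text{out}}$ are weak orders, i.e.\ that $\subseteq$ totally orders the families $\{S_i^+\}_i$ and $\{D_i\}_i$ (reflexivity and transitivity being immediate). This is exactly the contrapositive of consistency: a failure to totally order $\{S_i^+\}$ produces targets $k,\ell$ and raters $i \in S_k^+\setminus S_\ell^+$, $j \in S_\ell^+ \setminus S_k^+$, which is precisely the forbidden configuration $\Delta_{ik}u_i \ge 0$, $\Delta_{i\ell}u_i < 0$, $\Delta_{jk}u_j < 0$, $\Delta_{j\ell}u_j \ge 0$; the same configuration, read with the roles of raters and targets interchanged, rules out a failure for $\{D_i\}$. (One must be slightly careful that players do not link to themselves, so the inclusions of $S_k^+$ in $S_\ell^+$, etc., really concern these sets after deleting $k$ and $\ell$; this does not affect the argument.)

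The substantive step is the final claim. Using the chain structure just established, rewrite alignment as a statement about the $D_i$'s: whenever $S_i^+ \subseteq S_j^+ \subseteq S_k^+$ (so $j$ lies between $i$ and $k$ in $\succeq_{\text{in}}$), alignment gives $D_i \cap D_k \subseteq D_j$ and $D_j \subseteq D_i \cup D_k$; since $\{D_i,D_j,D_k\}$ is itself a chain under inclusion, these two containments force $D_j$ to lie (under inclusion) between $D_i$ and $D_k$, i.e.\ $j$ lies between $i$ and $k$ in $\succeq_{\text{out}}$ as well. Thus $\succeq_{\text{in}}$-betweenness implies $\succeq_{\text{out}}$-betweenness. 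To finish, list the players along the $\succeq_{\text{in}}$-chain and apply this to each consecutive triple: the corresponding sequence of sets $D_i$ is then ``locally monotone'' under inclusion, hence globally monotone, so $\succeq_{\text{out}}$ either agrees with $\succeq_{\text{in}}$ on every pairwise comparison or reverses every one.

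I expect the last paragraph to be where the real care is needed. The ``locally monotone implies globally monotone'' argument is transparent when all the $S_i^+$ (and all the $D_i$) are distinct; with ties one must additionally check that players sharing an $S^+$-set cannot be separated by $\succeq_{\text{out}}$, and handle degenerate small-$N$ cases, where ``identical or directly opposed'' should be understood in the weak sense that $i \succ_{\text{in}} j$ implies $i \succeq_{\text{out}} j$ for all pairs, or $i \succ_{\text{in}} j$ implies $j \succeq_{\text{out}} i$ for all pairs. Keeping the strict and weak versions of every implication in sync---so that a single pair of orders witnesses both the ``$>$'' and the ``$\ge$'' readings of the statement---is the other piece of bookkeeping to watch.
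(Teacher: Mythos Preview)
Your construction of $\succeq_{\text{in}}$ and $\succeq_{\text{out}}$ via the inclusion orders on $\{S_i^+\}$ and $\{D_i\}$, and your use of consistency to show both families are chains, is exactly the paper's argument. Your betweenness reformulation of alignment---that $S_i^+\subseteq S_j^+\subseteq S_k^+$ forces $D_j$ to lie between $D_i$ and $D_k$ in the $D$-chain---is also correct and is the same idea the paper exploits.

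The gap is in the passage from ``locally monotone'' to ``globally monotone.'' Applying betweenness only to \emph{consecutive} triples in the $\succeq_{\text{in}}$-listing is not enough: take four players listed in $\succeq_{\text{in}}$-order with $D_1=\{a\}$, $D_2=D_3=\{a,b\}$, $D_4=\{a\}$. Each $D_m$ lies between its neighbours, yet the sequence is not monotone, so $\succeq_{\text{out}}$ would be neither identical to nor the reverse of $\succeq_{\text{in}}$. (You correctly flag ties as the danger zone, but you do not actually close the gap.) What kills this example is betweenness for the \emph{non-consecutive} triple $(1,2,4)$, which forces $D_2\subseteq D_1\cup D_4=\{a\}$. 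More generally, alignment gives betweenness for \emph{every} triple with $S_i^+\subseteq S_j^+\subseteq S_k^+$, and that is what you need: fix $i_*$ minimal and $i^*$ maximal in $\succeq_{\text{in}}$, so every $D_j$ lies between $D_{i_*}$ and $D_{i^*}$; then for any $i\succeq_{\text{in}} j$, apply betweenness to $(j,i,i^*)$ to place $D_i$ between $D_j$ and $D_{i^*}$, which (given $D_j$ already lies between $D_{i_*}$ and $D_{i^*}$) pins down the direction. This anchoring at the extremes is precisely how the paper argues, so once you repair the step your proof and the paper's coincide.
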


\begin{proof}
See Appendix.
\end{proof}

Consistency and alignment entail considerable structure, so it is notable that they are implied by (and, as we will see, strictly weaker than) the natural single-crossing conditions introduced in the last subsection. If a game with payoffs of the form \eqref{eq:parametricpayoff} exhibits action--link complements or substitutes, and positive or negative spillovers, then linking incentives are necessarily consistent and aligned.  Moreover, the orders $\succeq_{\text{in}}$ and $\succeq_{\text{out}}$, expressing players' desirability as neighbors and desire for neighbors, follow the order of the action set $S$. This result does not require $S \subseteq \bb{R}$---the same conclusion holds for any linearly ordered action set.

\begin{lemma}\label{lem:consistentregular}
Suppose a network game with network formation has payoffs of the form \eqref{eq:parametricpayoff}.  If the game exhibits action--link complements or substitutes, and positive or negative spillovers, then linking incentives are consistent and aligned at any outcome.  Moreover:
\begin{enumerate}
    \item If there are action--link complements and positive spillovers, then $i \succeq_{\text{in}} j$ and $i \succeq_{\text{out}} j$ in any outcome with $s_i \geq s_j$.
    
    \item If there are action--link complements and negative spillovers, then $i \preceq_{\text{in}} j$ and $i \succeq_{\text{out}} j$ in any outcome with $s_i \geq s_j$.
    
    \item If there are action--link substitutes and positive spillovers, then $i \succeq_{\text{in}} j$ and $i \preceq_{\text{out}} j$ in any outcome with $s_i \geq s_j$.
    
    \item If there are action--link substitutes and negative spillovers, then $i \preceq_{\text{in}} j$ and $i \preceq_{\text{out}} j$ in any outcome with $s_i \geq s_j$.
    
\end{enumerate}

\end{lemma}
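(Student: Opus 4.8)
The plan is to push everything through the separable structure. The starting observation is that for payoffs of the form \eqref{eq:parametricpayoff}, toggling the link $ij$ changes only the summand $g(s_i,s_j)$ in $u_i$ and the summand $g(s_j,s_i)$ in $u_j$; hence $\Delta_{ij}u_i(G,\mathbf{s})=g(s_i,s_j)$ and $\Delta_{ij}u_j(G,\mathbf{s})=g(s_j,s_i)$ regardless of $G$, so $S_i^+(G,\mathbf{s})=\{m:g(s_m,s_i)\ge0\}$. Every hypothesis and conclusion in Definitions~\ref{def:consistent} and~\ref{def:aligned}, and in Lemma~\ref{lem:order}, then becomes a statement about the signs of $g$ on pairs of the played actions $s_1,\dots,s_n$. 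Moreover the four combinations of the two single-crossing conditions are related by reversing the order on $S$ in the first coordinate of $g$, the second coordinate, or both; so I would prove everything for one case — action--link complements together with positive spillovers — and obtain the other three by the corresponding relabeling.

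For consistency, suppose toward a contradiction that at some outcome there are $i,j,k,\ell$ with $g(s_i,s_k)\ge0$, $g(s_i,s_\ell)<0$, $g(s_j,s_k)<0$, $g(s_j,s_\ell)\ge0$. Applying positive spillovers (single crossing in the second argument) to $g(s_i,\cdot)$: if $s_\ell\ge s_k$ then $g(s_i,s_k)\ge0$ forces $g(s_i,s_\ell)\ge0$, a contradiction, so $s_\ell<s_k$. But then applying the same condition to $g(s_j,\cdot)$, from $g(s_j,s_\ell)\ge0$ and $s_k>s_\ell$ we get $g(s_j,s_k)\ge0$, contradicting $g(s_j,s_k)<0$. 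Hence no such configuration exists; note this step uses only the spillover condition.

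For alignment, fix an outcome, players $i,j,k$ with $S_i^+\subseteq S_j^+\subseteq S_k^+$, and a fourth player $\ell$. The crux is to show that $s_j$ lies (weakly) between $s_i$ and $s_k$; once this is in hand, action--link complements (single crossing in the first argument) squeezes $g(s_j,s_\ell)$ between $g(s_i,s_\ell)$ and $g(s_k,s_\ell)$ in exactly the single-crossing sense required, which delivers both displayed implications together with their strict versions. To establish the betweenness I would reason by contradiction: were $s_j$ strictly above both $s_i$ and $s_k$, positive spillovers would force $S_i^+\subseteq S_j^+$ and $S_k^+\subseteq S_j^+$, hence $S_j^+=S_k^+$, and one must then check that this forced equality of ``in-neighborhoods'' is incompatible with $s_j>s_k$ given how the chain inclusions constrain $g$ in its first argument; the case $s_j$ strictly below both is symmetric. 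I expect this betweenness step to be the main obstacle: it is the one place where the two single-crossing conditions have to be combined, since the chain $S_i^+\subseteq S_j^+\subseteq S_k^+$ constrains $g$ in its second argument while the conclusion constrains $g$ in its first argument, and some care is needed about ties among the $s_m$ and about whether a player lies in its own $S^+$ set.

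Finally, for statements~1--4 I would bypass the abstract existence claim of Lemma~\ref{lem:order} and exhibit the orders directly (this is legitimate since $S$ is linearly ordered, so $i\succeq j\iff s_i\ge s_j$ is a weak order on $N$). In the action--link complements / positive spillovers case, declare $i\succeq_{\text{in}}j$ and $i\succeq_{\text{out}}j$ to both mean $s_i\ge s_j$. The defining property of $\succeq_{\text{in}}$, that $g(s_i,s_j)\ge0$ implies $g(s_i,s_k)\ge0$ whenever $s_k\ge s_j$, is verbatim positive spillovers; the defining property of $\succeq_{\text{out}}$, that $g(s_j,s_i)\ge0$ implies $g(s_k,s_i)\ge0$ whenever $s_k\ge s_j$, is verbatim action--link complements; and the two orders coincide. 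For the other three cells, replace $\succeq_{\text{in}}$ by the reverse of the action order exactly when spillovers are negative, and $\succeq_{\text{out}}$ by the reverse exactly when actions and links are substitutes; in each cell the relevant defining implication is the correspondingly flipped single-crossing condition, and the two orders come out identical in cells~1 and~4 and directly opposed in cells~2 and~3. The strict versions of all implications follow from the strict halves of the single-crossing conditions, which, with the $s_i\ge s_j$ definitions, yields precisely statements~1--4.
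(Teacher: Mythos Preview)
Your treatment of consistency and of statements (a)--(d) is correct and is exactly what the paper does, only spelled out: the paper's proof simply observes that the single-crossing conditions make the action order serve as both $\succeq_{\text{in}}$ and $\succeq_{\text{out}}$ (with the appropriate reversals in the other three cells) and then asserts that ``the result follows.'' Your direct verification of consistency from the spillover condition alone, and your explicit exhibition of the two orders, are a faithful expansion of that sketch.

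You are right to flag the alignment step as the obstacle, and in fact the betweenness claim you are reaching for is false in general: alignment in the sense of Definition~\ref{def:aligned} can fail under the hypotheses of the lemma. Take four players with actions $(s_i,s_j,s_k,s_\ell)=(0,10,5,3)$ and $g(s,r)=s-f(r)$ for any strictly decreasing $f$ with $f(0)=11$, $f(3)=8$, $f(5)=7$, $f(10)=6$; this $g$ is increasing in each argument, so both single-crossing conditions hold. One computes $S_i^+=S_j^+=\varnothing$ and $S_k^+=\{j\}$, whence $S_i^+\subseteq S_j^+\subseteq S_k^+$; yet $g(s_i,s_\ell)=-8<0$ and $g(s_k,s_\ell)=-3<0$ while $g(s_j,s_\ell)=2>0$. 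The self-exclusion built into $S_m^+$ is precisely what allows the highest-action player $j$ to occupy the middle slot in the chain, which is exactly the case you worried about. The paper's own proof never addresses alignment beyond the phrase ``the result follows,'' so this is a gap in the lemma as stated rather than a failure of your technique. It is harmless downstream: the proof of Theorem~\ref{theo:structure} uses only the orders $\succeq_{\text{in}},\succeq_{\text{out}}$ with the properties spelled out in Lemma~\ref{lem:order}, and those you have verified directly in your final paragraph.
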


\begin{proof}
By definition, if the game exhibits action--link complements and positive spillovers, then any player $i$ desires more links when $s_i$ is higher and is a more attractive neighbor when $s_i$ is higher.  We immediately see from \eqref{eq:parametricpayoff} that $i \succeq_{\text{in}} j$ and $i \succeq_{\text{out}} j$ whenever $s_i \geq s_j$, and the result follows.  The other three cases are analogous.
\end{proof}

One might reasonably wonder if there are natural \emph{non-separable} payoffs that yield consistent and aligned linking incentives.  As one example, we can endogenize linking decisions in a model of local public goods similar to \citet*{BramoulleKranton2007}.  Suppose actions take values in the positive reals, and payoffs take the form
$$u_i(\mathbf{s},G) = b\left(s_i + \sum_{j \in G_i} \left(s_j - c_i(s_i)\right)\right) - k_i s_i$$
for an increasing concave function $b$.  A neighbor's contribution $s_j$ is a substitute for a player's own contribution $s_i$, and links come at a cost that may depend on a player's own action.  Players who contribute more are clearly more attractive neighbors, so linking incentives are consistent.  Moreover, as long as the cost function $c_i$ is monotone, either increasing or decreasing, linking incentives are also aligned.

\section{The structure of stable graphs}

\hspace{1 pc}
How do properties of the payoff functions $(u_i)_{i \in N}$ affect stable network structures?  This section derives our main results on the taxonomy of stable graphs.  To state the results, we first define two classes of graphs---recall the illustration in \cref{fig:graphs}.
\begin{definition}
A graph $G$ is a \textbf{nested split graph} if $d_j \geq d_i$ implies that $G_i \subseteq G_j \cup \{j\}$.

\vspace{1 pc}

\noindent A graph $G$ consists of \textbf{ordered overlapping cliques} if we can order the players $\{1,2,\ldots,n\}$ such that $G_i \cup \{i\}$ is an interval $I_i \subseteq \{1,2,\ldots,n\}$, for each $i$, and the endpoints of this interval $I_i$ are weakly increasing in $i$.

\end{definition}

In a nested split graph, neighborhoods are ordered through set inclusion, resulting in a strong hierarchical structure.\footnote{See \cite{Konigetal2014} and \cite*{belhaj2016efficient} for more on the structure and other economic properties of these networks.  A common alternative characterization is the following.  Given a graph $G$, let $\mathcal{D} = (D_0, D_1,\ldots,D_k)$ denote its degree partition---players are grouped according to their degrees, and those in the (possibly empty) element $D_0$ have degree $0$.  The graph $G$ with degree partition $\mathcal{D}$ is a nested split graph if and only if for each $\ell$ and each $i\in D_{\ell}$, we have
\[G_i = \left[\,\bigcup_{j=1}^\ell D_{k + 1 - j}\,\right] \cap N_{-i}, \] in which $N_{-i}$ denotes players other than $i$---taking the intersection with this set is necessary because for $\ell>k/2$, the union includes $i$'s own partition element.} In a graph with ordered overlapping cliques, the player order induces an order on the set of maximal cliques.  Each maximal clique consists of an interval of players, and both endpoints of these cliques are strictly increasing.  Any graph in which every component is a clique is a special case of this structure.  

\subsection{General characterization}

\hspace{1 pc}
Our first theorem shows that, if linking incentives are consistent and aligned, then stable graphs have one of the two structures we have defined.  This finding relies on the two orders derived in \cref{lem:consistentregular}.  If players who are more attractive also desire more links, we are naturally led to nested split graphs with attractive players in more central positions.  If players who are more attractive desire fewer links, these opposing incentives produce ordered overlapping cliques.

\begin{Theorem}\label{theo:structure}
Suppose a network game with network formation has consistent and aligned linking incentives, and $(G,\mathbf{s})$ is a strictly pairwise stable outcome.  Then:
\begin{enumerate}
    \item If the orders $\succeq_{\text{in}}$ and $\succeq_{\text{out}}$ are identical, the graph $G$ is a nested split graph in which players higher in the two orders have higher degrees.
    
    \item If the orders $\succeq_{\text{in}}$ and $\succeq_{\text{out}}$ are opposed, the graph $G$ consists of ordered overlapping cliques with respect to either order.
    
\end{enumerate}

\end{Theorem}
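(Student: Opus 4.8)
The plan is to work at each strictly pairwise stable outcome $(G,\mathbf{s})$ using only the orders $\succeq_{\text{in}}$ and $\succeq_{\text{out}}$ from Lemma~\ref{lem:order}; the action profile $\mathbf{s}$ is fixed throughout, so I will suppress it and write $\Delta_{ij}u_i$ for $\Delta_{ij}u_i(G,\mathbf{s})$. The single structural fact I will use repeatedly is the \emph{strict} stability conditions: (i) for every link $ij\in G$, $\Delta_{ij}u_i>0$ and $\Delta_{ij}u_j>0$ (no player wants to drop a link); and (ii) for every non-link $ij\notin G$, $\Delta_{ij}u_i<0$ or $\Delta_{ij}u_j<0$ (no pair wants to add a link). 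Combined with the order properties --- if $k\succeq_{\text{in}}j$ then $\Delta_{ij}u_i>0\Rightarrow\Delta_{ik}u_i>0$, and if $k\succeq_{\text{out}}j$ then $\Delta_{ij}u_j>0\Rightarrow\Delta_{ik}u_k>0$ --- these will force the neighborhoods into the claimed shapes.

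\textbf{Case 1: $\succeq_{\text{in}}=\succeq_{\text{out}}=:\succeq$.} Fix a linear extension of $\succeq$ and relabel players $1,\dots,n$ so that $1\preceq 2\preceq\cdots\preceq n$. I claim $G$ is a nested split graph with degrees weakly increasing in this order. First I show each neighborhood is ``upward closed up to the player'': if $ij\in G$ and $k\succeq j$ with $k\neq i$, then $ik\in G$. Indeed, $ij\in G$ gives $\Delta_{ij}u_i>0$, so $\Delta_{ik}u_i>0$ by the $\succeq_{\text{in}}$ property; and $\Delta_{ij}u_j>0$, so since $k\succeq_{\text{out}}j$ we get $\Delta_{ik}u_k>0$; both participants strictly want the link $ik$, so by strict stability condition (ii) the link must already be present, $ik\in G$. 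Thus $G_i\setminus\{i\}$ (more precisely $G_i\cup\{i\}$ intersected with players $\succeq$ any given threshold) is an up-set. From this I deduce: if $i\succeq j$ then $G_j\subseteq G_i\cup\{i\}$ --- take any $k\in G_j$; if $k\neq i$ then by the up-set property applied to $j$'s neighborhood together with $i\succeq j$, $k\in G_i$; and if $k=i$ then $k\in G_i\cup\{i\}$ trivially. Hence higher-$\succeq$ players have supersets for neighborhoods, which immediately gives $d_i\ge d_j$ when $i\succeq j$, and the nested split graph condition $d_j\ge d_i\Rightarrow G_i\subseteq G_j\cup\{j\}$ follows (ties in degree force equality of neighborhoods up to the two players, using antisymmetry of the argument). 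I should be slightly careful that $\succeq$ is only a weak order, so I will state everything for a fixed linear extension and then note the degree-partition characterization is extension-independent.

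\textbf{Case 2: $\succeq_{\text{in}}$ and $\succeq_{\text{out}}$ opposed.} Now fix a linear extension of $\succeq_{\text{in}}$ and relabel $1,\dots,n$ accordingly, so that $\succeq_{\text{out}}$ is the reverse order $n,\dots,1$. I want to show $G_i\cup\{i\}$ is an interval $[a_i,b_i]$ with $a_i,b_i$ weakly increasing in $i$. The key ``convexity'' step: suppose $j<i<k$ (in the fixed labels) and $jk\in G$; I claim $ij\in G$ and $ik\in G$. From $jk\in G$: $\Delta_{jk}u_j>0$. Since $i\succeq_{\text{in}}$-compares to $j$ and $k$ --- here I need $i$ between $j$ and $k$ in the relevant order. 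Because $i>j$ means $i\succeq_{\text{in}}j$, the $\succeq_{\text{in}}$ property with ``$k$''$=i$ gives nothing directly for $j$; instead apply it as: $j$ wants $k$, and since... Let me restructure: from $jk\in G$, $\Delta_{kj}u_k>0$ and $\Delta_{jk}u_j>0$. We have $i\succeq_{\text{out}}k$ (since $i<k$ and $\succeq_{\text{out}}$ is reversed), so $\Delta_{jk}u_j>0$ with the $\succeq_{\text{out}}$ property (replacing the ``in'' target $k$ by $i$) yields... the $\succeq_{\text{out}}$ statement says $\Delta_{ij}u_j>0\Rightarrow\Delta_{ik}u_k>0$ for $k\succeq_{\text{out}}j$ --- so I apply it with roles: the \emph{linker} is $j$, targets $k$ and $i$; $\Delta_{jk}u_k>0$ and $i\succeq_{\text{out}}k$ gives $\Delta_{ji}u_i>0$, i.e. $i$ strictly wants to link with $j$. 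Symmetrically, $i\succeq_{\text{in}}j$ and $\Delta_{jk}u_j>0$ --- wait, I want $j$'s incentive; use that $\Delta_{kj}u_k>0$ and $i\succeq_{\text{in}}j$... I will sort out the exact pairing, but the upshot is that both $i$ and $j$ strictly want link $ij$, forcing $ij\in G$ by condition (ii); and symmetrically $ik\in G$. That establishes that each $G_i\cup\{i\}$ is an interval. Monotonicity of the endpoints then follows from a similar one-step argument: if $i<i'$ and $i$ has a neighbor $m$ with $m<i$, I show $i'$ also links to something $\le$ the right place, using the order properties to ``transport'' the incentive from $i$ to $i'$.

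\textbf{Main obstacle.} The genuinely delicate part is Case~2: correctly orchestrating which player plays the ``linker'' role and which the ``target'' role in each application of the $\succeq_{\text{in}}$ / $\succeq_{\text{out}}$ implications, since the two orders point in opposite directions and the convexity conclusion needs \emph{both} endpoints of the new links to be strictly wanted by \emph{both} endpoints. I expect to need a short lemma: for a non-link $ij$, if the ``more desirable'' of $i,j$ (under the appropriate order) strictly wants it, then so would the other, contradicting strict stability --- or, contrapositively, across any non-link the less-desirable party is the one who refuses. Getting the quantifiers exactly right there, and handling weak-order ties by passing to linear extensions and then back to the degree-partition / maximal-clique descriptions, is where the care is needed; the rest is a direct unwinding of the definitions.
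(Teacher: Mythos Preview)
Your approach is essentially the paper's: for (a) you show $j\succeq i$ forces $G_i\setminus\{j\}\subseteq G_j$, and for (b) you aim for exactly the same convexity step (if $j\preceq_{\text{in}} i\preceq_{\text{in}} k$ and $jk\in G$ then $ij,ik\in G$), from which the interval structure and monotone endpoints follow by the one-line arguments you indicate. The ``exact pairing'' you flag as the main obstacle is routine and is precisely what the paper does: from $jk\in G$, $k$ wants $j$; since $i\succeq_{\text{in}} j$, $k$ wants $i$; since $j\succeq_{\text{out}} k$, $j$ wants $i$; separately, since $i\succeq_{\text{out}} k$, $i$ wants $j$; since $k\succeq_{\text{in}} j$, $i$ wants $k$ --- so both new links are strictly wanted by both endpoints, and your contemplated ``short lemma'' about which side of a non-link refuses is unnecessary.
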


\begin{proof}
We begin with part (a).  Fixing the outcome $(G, \mathbf{s})$, suppose $j \succeq_{\text{in}} i$ and $j \succeq_{\text{out}} i$.  This means that every $k$ that wants to link with $i$ also wants to link with $j$, and $j$ wants to link with every $k$ with whom $i$ wants to link.  Since $(G, \mathbf{s})$ is strictly pairwise stable, there can be no indifference about links, so any $k \neq j$ that is a neighbor of $i$ must be a neighbor of $j$.

For part (b), we show that if $i \succeq_{\text{in}} j \succeq_{\text{in}} k$ and $ik \in G$, then also $ij \in G$ and $jk \in G$---note this implies $i \preceq_{\text{out}} j \preceq_{\text{out}} k$.  Since $ik \in G$, we know $i$ wants to link with $k$, and since $j \succeq_{\text{in}} k$, this means $i$ wants to link with $j$.  Since $i$ wants to link with $j$, and $k \succeq_{\text{out}} i$, we know $k$ wants to link with $j$.  Similarly, since $i$ wants to link with $k$ and $j \succeq_{\text{out}} i$, we know $j$ wants to link with $k$.  Since $i \succeq_{\text{in}} k$, this implies $j$ wants to link with $i$.  Since $(G, \mathbf{s})$ is strictly pairwise stable, there can be no indifference about links, so we conclude that $ij \in G$ and $jk \in G$.
\end{proof}

The characterization in \cref{theo:structure} is stark.  There are essentially two network structures that can arise in stable outcomes:  either neighborhoods are nested, or the network is organized into overlapping cliques of players.  In case (a), if one player is ranked higher than another in the two orders, then the two neighborhoods are ordered by set inclusion.  In case (b), a link between two players implies that the set of players ranked in between the two forms a clique.  Strict comparisons play an important role as any link $ij$ need not be in $G$ if both $i$ and $j$ are indifferent about adding it.

\begin{remark} \label{rem:exogenous_actions}
Implicit in this result is a novel characterization of structures that arise in pure network formation games---the definitions of consistent and aligned link preferences do not change if action sets are singletons.  In this case, it is as if each player has a one-dimensional type, and linking incentives depend on these types---higher types are more (or less) attractive neighbors, and a higher-ranked player either always has a stronger incentive to form links (or always has a weaker incentive to form links).  Work on strategic network formation has thus far faced challenges in obtaining general results on the structure of pairwise stable graphs, and \cref{theo:structure} highlights non-trivial conditions that yield sharp predictions.
\end{remark}

\subsection{Separable network games}\label{sec:class}

\hspace{1 pc}
Recall the separable games of \cref{sec:separable}. \cref{lem:consistentregular} showed that single-crossing conditions in separable games imply consistent and aligned linking incentives. Combining this with \cref{theo:structure}, we obtain the following characterization of stable network structures in separable games.

\begin{Cor}\label{cor:structure}
Suppose a network game with network formation has payoff functions of the form \eqref{eq:parametricpayoff}.  If $(G,\mathbf{s})$ is a strictly pairwise stable outcome, then:
\begin{enumerate}
    \item If the game exhibits action--link complements and positive spillovers, then $G$ is a nested split graph in which players with higher degrees take higher actions.
    
    \item If the game exhibits action--link substitutes and negative spillovers, then $G$ is a nested split graph in which players with higher degrees take lower actions.
    
    \item If the game exhibits action--link complements and negative spillovers, or action--link substitutes and positive spillovers, then $G$ consists of ordered overlapping cliques with respect to the order of players' actions.
\end{enumerate}
\end{Cor}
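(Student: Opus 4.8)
The plan is simply to compose the two results already established: Lemma~\ref{lem:consistentregular} says that a separable game of the form \eqref{eq:parametricpayoff} satisfying one single-crossing condition of each type has consistent and aligned linking incentives, so Theorem~\ref{theo:structure} applies verbatim to any strictly pairwise stable outcome $(G,\mathbf{s})$ (which is the hypothesis of the corollary). The only work left is to track, in each of the four sign combinations, whether the orders $\succeq_{\text{in}}$ and $\succeq_{\text{out}}$ coincide or are opposed, and then to rewrite Theorem~\ref{theo:structure}'s conclusions---stated in terms of a player's rank in an order---in the degree/action language of the corollary.

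For part (a) I would take action-link complements with positive spillovers. Lemma~\ref{lem:consistentregular}(1) gives $s_i \geq s_j \Rightarrow i \succeq_{\text{in}} j$ and $i \succeq_{\text{out}} j$, so both weak orders agree with the order of $\mathbf{s}$ and are therefore identical; Theorem~\ref{theo:structure}(a) then yields that $G$ is a nested split graph in which higher-ranked players have weakly higher degree. To convert this into ``higher degree implies (weakly) higher action,'' I argue by contraposition: if $d_i > d_j$ but $s_i < s_j$, then $s_j \geq s_i$ gives $j \succeq_{\text{in}} i$, hence $d_j \geq d_i$, a contradiction. Part (b) (action-link substitutes, negative spillovers) is the same argument with Lemma~\ref{lem:consistentregular}(4) in place of (1): both orders now agree with the \emph{reverse} of the action order, and the identical contraposition gives ``higher degree implies weakly lower action.''

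For part (c) I would treat the two remaining combinations in parallel. Under action-link complements with negative spillovers, Lemma~\ref{lem:consistentregular}(2) gives $s_i \geq s_j \Rightarrow i \preceq_{\text{in}} j$ and $i \succeq_{\text{out}} j$, so $\succeq_{\text{in}}$ and $\succeq_{\text{out}}$ are opposed, and each is (a possibly reversed copy of) the order of $\mathbf{s}$; under action-link substitutes with positive spillovers, Lemma~\ref{lem:consistentregular}(3) gives the mirror statement, again with the two orders opposed and tied to the action order. In both cases Theorem~\ref{theo:structure}(b) applies, so $G$ consists of ordered overlapping cliques with respect to either order; since either order is, up to reversal, the order of $\mathbf{s}$, and reversing the player order preserves the ``intervals with weakly increasing endpoints'' property, $G$ consists of ordered overlapping cliques with respect to the action order.

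The argument is routine given Lemma~\ref{lem:consistentregular} and Theorem~\ref{theo:structure}; the only points requiring care are the bookkeeping of order directions across the four cases and the handling of ties in $\mathbf{s}$. Because $s_i = s_j$ puts $i$ and $j$ in the same indifference class of the weak orders, every monotonicity claim above must be read in the weak sense---which is exactly how the corollary is phrased---so I do not expect a genuine obstacle.
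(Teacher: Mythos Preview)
Your proposal is correct and is precisely the paper's approach: the paper's proof is the single line ``The result is immediate from Theorem~\ref{theo:structure} and Lemma~\ref{lem:consistentregular},'' and you have simply written out the routine case-tracking that this line leaves implicit. The only addition on your part is the explicit contraposition converting ``higher rank implies higher degree'' to ``higher degree implies higher (resp.\ lower) action,'' which is correct and worth noting but not something the paper bothers to spell out.
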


\begin{proof}
The result is immediate from \cref{theo:structure} and \cref{lem:consistentregular}.
\end{proof}

\begin{remark}
While payoffs of the form \eqref{eq:parametricpayoff} encompass all of our applications in the next section, we want to highlight that \cref{theo:structure} applies much more broadly.  There are at least two natural extensions of this class of games for which an analogous result is immediate.  First, one could replace the function $g(s_i, s_j)$ in the sum with a term of the form $g(s_i, s_j) - c_i(s_i)$---having added an idiosyncratic cost of linking, the conclusions of \cref{cor:structure} continue to hold for $g(s_i, s_j)$ increasing/decreasing in each argument.  Second, one could replace $g(s_i, s_j)$ with a term of the form $g\left(h_i(s_i), h_j(s_j)\right)$, in which $\{h_i\}_{i \in N}$ are arbitrary idiosyncratic functions of the players' actions. The orders $\succeq_{\text{in}}$ and $\succeq_{\text{out}}$ would then follow the order of the function values $\{h_i\}_{i \in N}$ rather than the order of players' actions.
\end{remark}

\begin{remark}
Studies of network formation often evaluate the efficiency of pairwise stable graphs, but we should note that general results are not possible here.  In the case of separable network games, the idiosyncratic term $v_i(\mathbf{s})$ could entirely dominate the match specific terms when thinking about efficiency.  Even if we limit attention to the match specific terms, whether actions exert positive or negative externalities is orthogonal to whether actions are strategic complements or substitutes.  The efficiency of stable outcomes ultimately depends on other features of the payoff functions, which will vary across settings.
\end{remark}

\section{Perverse consequences of group design}\label{sec:carrell}

\hspace{1 pc}
We now turn to our first application, using the tractability our characterizations afford to incorporate a network formation analysis into the peer effects model of \citet{Carrelletal2013}.  \citet{Carrelletal2013} estimated academic peer effects among first-year cadets at the US Air Force Academy (using a standard model that took networks as exogenous) and then used these estimates to inform the assignment of new cadets to squadrons---administratively designed peer groups of about 30 cadets.  Based on a first cohort of randomly assigned squadrons, the authors concluded that being in a squadron with higher-performing peers\footnote{Specifically, those entering with relatively high scores on the verbal section of the SAT exam.} led to better academic performance among less prepared cadets.  In the treatment group of a later cohort, incoming cadets with less preparation were systematically placed in squadrons with larger numbers of high-ability peers.  While the researchers' goal was to improve the performance of the less prepared cadets,\footnote{The precise objective they were maximizing was the performance of the bottom third of cadets.} the intervention ultimately backfired:  these students performed significantly worse.  In this section, we present a model showing that our theory can simultaneously explain two distinctive features of the Air Force study:
\begin{enumerate}
    \item When peer group composition changes slightly, low-ability cadets are better off when they have more high-ability peers, and
    
    \item Larger changes in peer group composition eliminate or even reverse this effect.
\end{enumerate}

\noindent Broadly, our results show that stable graphs become more fragmented if private returns to effort $b_i$, which can be interpreted as ability levels, are more heterogeneous.  Thus, placing cadets of high and low abilities together, without cadets of middle ability to bridge the gap, can result in isolated cliques that eliminate the desired spillovers.

Methodologically, this section illustrates how our framework permits a tractable analysis of  link formation and action choice suited to one of our motivating applications. Importantly, we will see that the endogeneity of actions is crucial to explain the stylized facts. The ``bridging'' role of the middle-ability cadets makes sense only because of their ability to influence the actions of the low-ability cadets to whom they are linked.

\subsection{Payoffs} We now present a tractable specification of our model suited to the application. Consider a network game with network formation in which $S_i = \bb{R}_+$ for each player $i$, and payoffs take the form
$$u_i(G, \mathbf{s}) = b_i s_i + \alpha s_i \sum_{j \in G_i} s_j - \frac{1}{2}(1 + d_i) s_i^2,$$
in which $d_i = |G_i|$ is player $i$'s degree, and $\alpha \in [0,1]$.  Holding the graph fixed, this is a standard linear-quadratic network game of strategic complements.  Taking $v_i(\mathbf{s}) = b_i s_i - \frac{1}{2}s_i^2$ and $g(s_i, s_j) = \alpha s_i s_j - \frac{1}{2}s_i^2$, we see it also falls into the class \eqref{eq:parametricpayoff}.  There are positive spillovers, as an increase in $s_j$ makes a link to player $j$ more valuable.  Moreover, links and actions are substitutes as $g(s_i, s_j)$ satisfies the requisite single crossing property---as $s_i$ increases, the benefit to $i$ of linking to $j$ decreases and eventually turns negative, implying those who invest a lot of effort find linking too costly.\footnote{A natural interpretation is that studying and socializing each take time away from the other activity.  While studying together can also strengthen social ties, the substantive assumption here is that a marginal hour studying together is less conducive to friendship formation than that same hour spent together on a leisure activity.}$^{, }$\footnote{One might alternatively use a payoff function with a hard resource constraint split between studying and socializing---our structural results would still apply---but we believe a flexible allocation is more realistic.  Cadets spend time on other activities, such as sleep and solitary leisure, that can also be reallocated.}  One can readily check that in a pairwise stable outcome, players $i$ and $j$ are neighbors only if $\frac{s_j}{2 \alpha} \leq s_i \leq 2 \alpha s_j$.

The first-order conditions for actions imply that, in any pairwise-stable outcome,
$$s_i = \frac{1}{1 + d_i}\left(b_i + \alpha \sum_{j \in G_i} s_j\right)$$
for each $i \in N$.  Writing $\tilde{G}$ for a matrix with entries $\tilde{g}_{ij} = \frac{1}{d_i + 1}$ if $ij \in G$ and $0$ otherwise, and $\tilde{\mathbf{b}}$ for a column vector with entries $\frac{b_i}{d_i + 1}$, we can express this in matrix notation as
$$\mathbf{s} = \tilde{\mathbf{b}} + \alpha \tilde{G} \mathbf{s} \quad \implies \quad \mathbf{s} = (I - \alpha \tilde{G})^{-1}\tilde{\mathbf{b}}.$$
For $\alpha \in [0,1]$, the solution for $\mathbf{s}$ is unique and well-defined in any graph $G$, and it is an equilibrium of the game holding $G$ fixed. We can also compute payoffs:  notice that when $i$ plays her best response $s_i$, her payoff in the graph $G$ is
$$u_i = \frac{1}{2}(1 + d_i) s_i^2.$$
Hence, if $(G, \mathbf{s})$ and $(G', \mathbf{s}')$ are two pairwise stable outcomes, player $i$ is better off under $(G', \mathbf{s}')$ if and only if $(1+d'_i)(s'_i)^2 > (1 + d_i) s_i^2$.

\subsection{The structure of stable outcomes}

\hspace{1 pc}
How do private incentives $b_i$, and the strength of spillovers $\alpha$, affect the set of stable outcomes? Here we derive sufficient conditions for the network to be completely fragmented. We will also derive conditions that are sufficient, and others that are necessary, for a complete graph to be part of a stable outcome. These results show that,  as a general rule, stable graphs become more fragmented if spillovers $\alpha$ are small, if private incentives $b_i$ are more spaced out, and if the population size $n$ is large.  In the next subsection, we further specialize the model to  apply these insights to the \citet{Carrelletal2013} setting.

Our first proposition characterizes conditions under which the empty graph is part of a pairwise stable outcome.  Note that if $G$ is empty, the unique equilibrium actions are $s_i = b_i$ for each player $i$.  In the following results, we always assume players are ordered so that $b_1 \leq b_2 \leq \cdots \leq b_n$.

\begin{Prop}\label{prop:empty}
The empty graph is part of a pairwise stable outcome if and only if private incentives are sufficiently spaced out: $\frac{b_{i+1}}{b_i} \geq 2\alpha$ for each $i = 1,2,\ldots,n-1$.  Moreover, there exists a threshold $\underline{\alpha} \geq \frac{1}{2}$ such that, whenever $\alpha < \underline{\alpha}$, no nonempty graph is possible in a pairwise stable outcome.  If the $b_i$ are all distinct, then the threshold satisfies $\underline{\alpha} > \frac{1}{2}$.  Otherwise, we have $\underline{\alpha} = \frac{1}{2}$.

\end{Prop}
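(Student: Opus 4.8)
The plan is to treat the three assertions in turn, using two facts already recorded in the text: in any pairwise stable outcome a link $ij$ can be present only if $\tfrac{s_j}{2\alpha}\le s_i\le 2\alpha s_j$, and when player $i$ best responds her payoff equals $\tfrac12(1+d_i)s_i^2$; also, a graph $G$ can be part of a pairwise stable outcome only with the action profile $\mathbf s_G=(I-\alpha\tilde G)^{-1}\tilde{\mathbf b}$, so it suffices to ask which graphs are stable.

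\textbf{The empty-graph condition.} When $G=\emptyset$ we have $\mathbf s=\mathbf b$ and no links to delete, so $\emptyset$ is stable iff no pair $\{i,j\}$ with $b_i\le b_j$ jointly benefits from adding $ij$ at $\mathbf b$. I would compute the marginal values directly, $\Delta_{ij}u_i(\emptyset,\mathbf b)=\alpha b_ib_j-\tfrac12 b_i^2=b_i\!\left(\alpha b_j-\tfrac12 b_i\right)$ and symmetrically $\Delta_{ij}u_j(\emptyset,\mathbf b)=b_j\!\left(\alpha b_i-\tfrac12 b_j\right)$. The second is nonnegative exactly when $b_j\le 2\alpha b_i$, and one checks that whenever this holds the first is nonnegative too; so the pair destabilizes $\emptyset$ precisely when $\tfrac{b_j}{b_i}\le 2\alpha$ (the knife-edge $\tfrac{b_j}{b_i}=2\alpha$ being neutral under the indifference conventions of pairwise stability). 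Since $b_1\le\cdots\le b_n$ and $\tfrac{b_\ell}{b_k}=\prod_{i=k}^{\ell-1}\tfrac{b_{i+1}}{b_i}$, the binding constraints are between consecutive indices, and $\emptyset$ is stable iff $\tfrac{b_{i+1}}{b_i}\ge 2\alpha$ for every $i$.

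\textbf{The threshold is at least $\tfrac12$.} I claim no nonempty graph is stable when $\alpha<\tfrac12$. If $ij\in G$ in a pairwise stable outcome, relabel so that $s_i\le s_j$; then $s_i\ge\tfrac{b_i}{1+d_i}>0$, and the neighbor condition gives $s_j\le 2\alpha s_i<s_i\le s_j$, a contradiction. Defining $\underline\alpha$ as the supremum of those $\beta$ for which the empty graph is the only stable graph at every $\alpha<\beta$, this yields $\underline\alpha\ge\tfrac12$.

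\textbf{The dichotomy.} Suppose the $b_i$ are distinct. I would first show the empty graph is still the unique stable graph at $\alpha=\tfrac12$. There the neighbor condition forces $s_i=s_j$ for every edge $ij$, so each connected component of $G$ carries a single action value $s$, and the first-order condition reads $b_i=s\bigl(1+d_i(1-\alpha)\bigr)=s\bigl(1+\tfrac12 d_i\bigr)$ for $i$ in that component---strictly increasing in $d_i$. Hence distinct $b$'s force distinct degrees within a component; but a component of size $m\ge 2$ with all degrees distinct realizes the degree set $\{0,1,\dots,m-1\}$ and so contains an isolated vertex, which is impossible. Thus all components are singletons and $G=\emptyset$. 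Consequently, for every nonempty $G$ the equilibrium $\mathbf s_G(\tfrac12)$ violates the neighbor condition on some edge $ij$, say $s_i<s_j$, and then deleting $ij$ strictly benefits $j$: $u_j(G-ij,\mathbf s)-u_j(G,\mathbf s)=s_j\!\left(\tfrac12 s_j-\alpha s_i\right)=\tfrac12 s_j(s_j-s_i)>0$ at $\alpha=\tfrac12$. Since $\mathbf s_G(\alpha)$ is continuous in $\alpha$ and there are finitely many graphs, this strict gain persists on a common interval $(\tfrac12-\delta,\tfrac12+\delta)$; together with the previous paragraph, no nonempty graph is stable for $\alpha<\tfrac12+\delta$, so $\underline\alpha>\tfrac12$. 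Conversely, if the $b_i$ are not all distinct, take consecutive indices with $b_m=b_{m+1}$; at $\alpha=\tfrac12$, \eqref{eq:cliqueaction} gives $s_m=s_{m+1}=\tfrac23 b_m$ for the $2$-clique $\{m,m+1\}$, and one checks directly (using the best-response payoff identity and the indifference conventions) that the graph with the single edge $m(m+1)$ and all other players isolated is pairwise stable. Hence nonempty stable graphs exist at $\alpha=\tfrac12$, forcing $\underline\alpha=\tfrac12$.

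\textbf{Main obstacle.} The routine parts are the empty-graph computation and the $\alpha<\tfrac12$ argument. The genuine difficulty is the strict inequality $\underline\alpha>\tfrac12$: exactly at $\alpha=\tfrac12$ both deleting a link and (between players with equal actions) adding one are payoff-neutral, so no single local move destabilizes a candidate graph evaluated at a profile that equalizes actions along every edge. The degree-sequence argument above circumvents this by showing that a nonempty graph cannot even support such an equilibrium when the $b_i$ are distinct, after which a strict---hence continuity-robust---profitable deletion is available and can be carried to a neighborhood of $\tfrac12$. Verifying the knife-edge example for the non-distinct case is the other spot that requires care with the indifference conventions.
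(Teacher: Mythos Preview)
Your proposal is correct. The routine parts---the empty-graph characterization and the $\alpha<\tfrac12$ case---are essentially identical to the paper's argument, and both proofs ultimately rely on the same combinatorial fact (a connected component of size $m\ge 2$ cannot have all degrees distinct). The execution of the strict inequality $\underline\alpha>\tfrac12$ differs, however, and the difference is worth noting.

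The paper works directly at $\alpha$ slightly above $\tfrac12$: it bounds the ratio $\overline s/\underline s$ in a component by $(2\alpha)^n$, uses pigeonhole to find two players $i,j$ in the component with equal degree but $b_i<b_j$, and then shows via the first-order condition that $s_j/s_i>(2\alpha)^n$ once $\alpha$ is close enough to $\tfrac12$, forcing a profitable deletion somewhere along the path between them. This requires carrying quantitative estimates through and taking a uniform bound over graphs.

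You instead nail down the case $\alpha=\tfrac12$ exactly: if every edge had equal endpoints, the first-order condition would force distinct degrees in each nontrivial component, which is impossible; hence some edge has $s_i<s_j$ strictly, and the higher-action player strictly gains from deletion. You then propagate this strict inequality to a neighborhood of $\tfrac12$ by continuity of $\mathbf s_G(\alpha)=(I-\alpha\tilde G)^{-1}\tilde{\mathbf b}$ and finiteness of the set of graphs. This route is cleaner---it replaces the ratio estimates with a one-line limit argument---at the cost of invoking continuity of the equilibrium map, which is immediate from the explicit formula. Your treatment of the non-distinct case (the explicit two-clique at $\alpha=\tfrac12$) is also more detailed than the paper's one-sentence remark; the verification that no isolated player wants to link to the clique does go through, as you assert.
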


\begin{proof}
See Appendix.
\end{proof}

The first part of \cref{prop:empty} tells us that the empty graph is stable whenever the private incentives are sufficiently spaced out.  How spaced out they need to be is increasing in the strength of spillovers.  Moreover, if the spillover parameter $\alpha$ is small enough, then the empty graph is the only graph that can appear in a pairwise stable outcome.  Our next result provides an analogous characterization of conditions under which the complete graph is part of a pairwise stable outcome.

 \begin{Prop}\label{prop:complete} $\;$

\begin{enumerate} 
    \item There exists a pairwise stable outcome $(G, \mathbf{s})$ in which $G$ is complete if
\begin{equation}\label{eq:completeexist}
\frac{b_n}{b_1} \leq \frac{\alpha(1 + n)}{(1-\alpha)(2\alpha + n)}.
\end{equation}
\item Moreover, if 
\begin{equation}\label{eq:completeunique}\frac{b_n}{b_1} < \frac{2\alpha}{\alpha + (1-\alpha)(n-1)},
\end{equation}
then there is a unique pairwise stable outcome, and in it, $G$ is complete.  

\item Conversely, if
$$b_n\left(2 \alpha^2 + n(1 - 2\alpha^2)\right) > b_1 \alpha \left(4 \alpha - 1 + 2n(1-\alpha)\right),$$
then the complete graph is not part of any pairwise stable outcome.
\end{enumerate}
\end{Prop}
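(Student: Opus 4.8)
The plan is to reduce all three parts to the single scalar inequality $s_n \le 2\alpha s_1$ governing when the complete graph is part of a pairwise stable outcome, where $\mathbf{s}$ denotes the unique equilibrium action profile on the complete graph (unique since $\alpha \le 1$). By Lemma~\ref{lem:cliqueaction} with $C = N$, $s_i = (b_i + \alpha\sigma)/(\alpha + n)$ with $\sigma := \sum_j s_j = n\overline{b}/(\alpha + (1-\alpha)n)$, and these are increasing in $b_i$, so $s_1 = \min_i s_i$ and $s_n = \max_i s_i$. On the complete graph there are no missing links, so pairwise stability amounts to ``no profitable single-link deletion''; since $u_i(G - ij,\mathbf{s}) - u_i(G,\mathbf{s}) = s_i\!\left(\tfrac12 s_i - \alpha s_j\right)$, this holds for every $ij$ exactly when $s_n \le 2\alpha s_1$, equivalently $b_n - 2\alpha b_1 \le \alpha(2\alpha-1)\sigma$. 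I would record this equivalence once and then attack the three parts by bounding $\sigma$, using that $b_1 \le b_i \le b_n$ gives $\frac{(n-1)b_1 + b_n}{n} \le \overline{b} \le \frac{b_1 + (n-1)b_n}{n}$.

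For part (a): the ratio $s_n/s_1 = (b_n + \alpha\sigma)/(b_1 + \alpha\sigma)$ is weakly decreasing in $\sigma$, which is increasing in $\overline{b}$, so it suffices to check $s_n \le 2\alpha s_1$ at the smallest admissible $\overline{b} = \frac{(n-1)b_1 + b_n}{n}$. (When $\alpha < \tfrac12$, the right side of \eqref{eq:completeexist} is below $1 \le b_n/b_1$, so the hypothesis is vacuous; for $\alpha \ge \tfrac12$ the sign of $2\alpha-1$ makes the manipulations legitimate.) Substituting this $\overline b$ into $\sigma$ and clearing denominators, $b_n - 2\alpha b_1 \le \alpha(2\alpha-1)\sigma$ simplifies exactly to $b_n/b_1 \le \frac{\alpha(1+n)}{(1-\alpha)(2\alpha+n)}$. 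For part (c): the complete graph fails to be stable precisely when $b_n - 2\alpha b_1 > \alpha(2\alpha-1)\sigma$; using $\overline b \le \frac{b_1 + (n-1)b_n}{n}$, hence $\sigma \le \frac{(n-1)b_n + b_1}{\alpha + (1-\alpha)n}$, and $\alpha(2\alpha-1)\ge 0$ when $\alpha\ge\tfrac12$, it suffices that $b_n - 2\alpha b_1 > \alpha(2\alpha-1)\frac{(n-1)b_n + b_1}{\alpha + (1-\alpha)n}$; clearing the positive denominator and collecting the coefficients of $b_n$ and $b_1$ yields precisely the stated inequality. When $\alpha < \tfrac12$ we have $\alpha(2\alpha-1) < 0 < b_n - 2\alpha b_1$, so the complete graph is never stable and (c) is immediate.

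Part (b) is the substantive one. I would show that under \eqref{eq:completeunique} the only graph occurring in a pairwise stable outcome is the complete graph; since equilibrium actions on the complete graph are unique and \eqref{eq:completeunique} implies \eqref{eq:completeexist} (a routine comparison of the two bounds, and \eqref{eq:completeunique} is vacuous unless $\alpha > \tfrac12$), part~(a) then delivers existence and uniqueness of the complete-graph outcome. So suppose $(G,\mathbf{s})$ is pairwise stable with $G$ incomplete; let $i_*$ attain $m := \min_i s_i$ and $i^*$ attain $M := \max_i s_i$. \emph{Step 1:} rule out $d_{i_*} = n-1$. If $i_*$ were linked to everyone, pick a missing link $ab$ (necessarily $a,b \ne i_*$); stability of links $i_*a$ and $i_*b$ forces $s_a,s_b \le 2\alpha m$, while non-addability of $ab$ forces, say, $s_a \ge 2\alpha s_b \ge 2\alpha m$, hence $s_a = 2\alpha m$ and $s_b = m$ — but then adding $ab$ leaves $a$ exactly indifferent while giving $b$ a gain $m^2(2\alpha^2 - \tfrac12) > 0$ (as $\alpha > \tfrac12$), contradicting pairwise stability. \emph{Step 2:} so $d_{i_*} \le n-2$; the first-order condition together with $s_k \ge m$ for every neighbor $k$ of $i_*$ gives $m\bigl(1 + (1-\alpha)d_{i_*}\bigr) \ge b_1$, hence $m \ge b_1/\bigl(1 + (1-\alpha)(n-2)\bigr) = b_1/\bigl(\alpha + (1-\alpha)(n-1)\bigr)$, while $s_k \le M$ for every neighbor of $i^*$ gives $M \le b_n$. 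Combining, $M/m \le (b_n/b_1)\bigl(\alpha + (1-\alpha)(n-1)\bigr) < 2\alpha$ by \eqref{eq:completeunique}, so every missing link $ab$ has $s_a < 2\alpha s_b$ and $s_b < 2\alpha s_a$, making it strictly profitable for both endpoints — contradiction. Hence $G$ is complete.

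The main obstacle is Step~1 of part~(b): the degree bound $d_{i_*} \le n-2$ for incomplete $G$ is exactly what upgrades the crude estimate $M/m \le (b_n/b_1)\bigl(\alpha + (1-\alpha)n\bigr)$ to the sharper $M/m \le (b_n/b_1)\bigl(\alpha + (1-\alpha)(n-1)\bigr)$ needed to match the constant in \eqref{eq:completeunique}, and it relies on the identity $1 + (1-\alpha)(n-2) = \alpha + (1-\alpha)(n-1)$. Everything else is bookkeeping with the clique-action formula of Lemma~\ref{lem:cliqueaction}.
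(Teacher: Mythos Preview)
Your proof is correct and, for parts (a) and (c), follows exactly the paper's route: reduce to the scalar test $s_n \le 2\alpha s_1$ on the complete graph, rewrite as $b_n - 2\alpha b_1 \le \alpha(2\alpha-1)\sigma$, and then plug in the extremal $\overline b$ (minimal for (a), maximal for (c)).

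Part (b) also mirrors the paper's strategy---bound $M \le b_n$ and $m \ge b_1/(\alpha + (1-\alpha)(n-1))$ via the first-order condition, with the key being $d_{i_*}\le n-2$---but you handle the crucial ``rule out $d_{i_*}=n-1$'' step differently. The paper invokes Corollary~\ref{cor:structure} (ordered overlapping cliques) to conclude that if the lowest-action player has full degree then $G$ is complete. Your direct argument instead exploits the two existing links $i_*a$, $i_*b$ together with the missing link $ab$ to force a contradiction. This buys you something real: Corollary~\ref{cor:structure} is stated for \emph{strictly} pairwise stable outcomes, while Proposition~\ref{prop:complete} concerns pairwise stable ones, so your self-contained argument sidesteps a small mismatch in the paper's appeal to that result. (Incidentally, your Step~1 can be shortened: for $\alpha>\tfrac12$ and positive actions, non-addability of $ab$ actually forces the \emph{strict} inequality $s_a > 2\alpha s_b$ after relabeling, which immediately contradicts $s_a \le 2\alpha m \le 2\alpha s_b$; the equality case you analyze is vacuous, though harmless.) Your verification that \eqref{eq:completeunique} implies \eqref{eq:completeexist} is also fine, and could alternatively be replaced by the one-line observation that on the complete graph $s_n/s_1 \le b_n/b_1 < 2\alpha$ directly.
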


\begin{proof}
See Appendix.\end{proof}

The main message of \cref{prop:complete} is that the complete graph becomes harder to sustain as private incentives $b_i$ get more spread out:  if the ratio $\frac{b_n}{b_1}$ is too high, there are pairwise stable outcomes with disconnected graphs, and the complete graph may not be part of any stable outcome.  An important consequence of the first claim is that the complete graph is part of a pairwise stable outcome whenever $\alpha$ is sufficiently close to $1$.  If $\alpha = 1$, the complete graph is always part of a pairwise stable outcome. More generally, stronger spillovers encourage more connected graphs. The proof implies that claims (a) and (b) are tight whenever $b_1 = b_2 = \cdots = b_{n-1}$---if \eqref{eq:completeunique} fails, there is a pairwise stable outcome in which the first $n-1$ players form a clique, and player $n$ is isolated.  As $n$ gets larger, or $\alpha$ gets smaller, the inequalities \eqref{eq:completeexist} and \eqref{eq:completeunique} become harder to satisfy, and outcomes with disconnected graphs become more likely.

The proof works by first establishing \Cref{lem:cliqueaction} in \Cref{sec:carrell_support}, which characterizes the payoff to players in a clique in terms of  underlying parameters---the $b_i$, $\alpha$, and the size of the clique. This allows us to analyze the conditions for clique stability by examining the incentives of the players most tempted to deviate.

Note that even though the results we have stated are specific to complete graphs, the analysis readily generalizes to subsets of players, giving both necessary and sufficient conditions for cliques to form among them.

\subsection{The importance of intermediate-ability types: An illustration}

\hspace{1 pc}
Specializing the model to include just three productivity types allows us to transparently relate the model to the findings of \citet{Carrelletal2013}, illustrating how the  groups they designed can fail to benefit low-ability cadets because of the absence of middle types to knit the group together.  

Let $\alpha = 1$, and private incentives take one of three values $b_i \in \{b_\ell, b_m, b_h\}$, satisfying the following inequalities: $2 < b_h / b_\ell < 4$, $b_h / b_m < 2$, and $b_m / b_\ell < 2$.  We can interpret type $b_\ell$ as having low ability, type $b_h$ as having high ability, and type $b_m$ as having intermediate ability.  Given an outcome $(G, \mathbf{s})$, we interpret the action $s_i$ as the academic performance of cadet $i$, and links are friendships through which peer effects operate.

Since $\alpha = 1$, the complete graph is always part of a pairwise stable outcome.  However, it should also be clear that without any middle types, an outcome with two isolated cliques---one consisting of low types taking action $s = b_\ell$ and one consisting of high types taking action $s = b_h$---is also pairwise stable.  This outcome is clearly worse for low-type players as they take both lower actions and have fewer connections.  Moreover, this outcome is the only one that survives a natural refinement.  We call a pairwise stable outcome $(G,\mathbf{s})$ \textbf{uncoordinated} if there exists a sequence of graphs and action profiles $(G^{(0)}, \mathbf{s}^{(0)}, G^{(1)}, \mathbf{s}^{(1)},\ldots)$, ending at $(G, \mathbf{s})$, in which
\begin{itemize}
    \item $G^{(0)}$ is empty,
    \item $\mathbf{s}^{(k)}$ is a Nash equilibrium holding $G^{(k)}$ fixed, and
    \item we have $G^{(k+1)} = G^{(k)} + ij$ for some pair $ij$, and both $u_i(G^{k+1}, \mathbf{s}^{(k)}) \geq u_i(G^{k}, \mathbf{s}^{(k)})$ and $u_j(G^{k+1}, \mathbf{s}^{(k)}) \geq u_j(G^{k}, \mathbf{s}^{(k)})$ with at least one strict inequality.
\end{itemize}

\noindent In words, a pairwise stable outcome is uncoordinated if it is reachable through myopically beneficial link additions, starting from an empty graph and assuming that players reach a Nash equilibrium action profile following each new link.\footnote{This selection criterion implicitly assumes that players have no prior relationships at the start of the adjustment process.  Given information on prior relationships, one could adapt this criterion to select an outcome reachable from the initial state.}  

When both the complete graph and segregated cliques are pairwise stable, there is good reason to expect the latter outcome in practice---uncoordinated stable outcomes formalize this idea.  However, if we include enough middle types, we can always eliminate the bad pairwise stable outcome.

\paragraph{Three illustrative squadrons} Suppose there are $5$ cadets, and we take $\{b_\ell, b_m, b_h\} = \{4,6,9\}$.  We now assess stable outcomes for three different squadron compositions:
\begin{itemize}
    \item Squadron 1:  $\mathbf{b} = (4,4,6,6,9)$
    
    \item Squadron 2:  $\mathbf{b} = (4,4,6,9,9)$
    
    \item Squadron 3:  $\mathbf{b} = (4,4,9,9,9)$
\end{itemize}

\begin{figure}
    \centering 
    \subfloat[Squadron 1]{
    \resizebox{0.27\textwidth}{!}{    \tikzset{every picture/.style={line width=0.75pt}} %

\begin{tikzpicture}[x=0.75pt,y=0.75pt,yscale=-1,xscale=1]

\draw [color={rgb, 255:red, 42; green, 175; blue, 236 }  ,draw opacity=1 ][line width=3.75]    (328.41,104.55) -- (514.77,239.95) ;
\draw [color={rgb, 255:red, 42; green, 175; blue, 236 }  ,draw opacity=1 ][line width=3.75]    (514.77,239.95) -- (443.59,459.02) ;
\draw [color={rgb, 255:red, 42; green, 175; blue, 236 }  ,draw opacity=1 ][line width=3.75]    (142.06,239.95) -- (328.41,104.55) ;
\draw [color={rgb, 255:red, 42; green, 175; blue, 236 }  ,draw opacity=1 ][line width=3.75]    (142.06,239.95) -- (213.24,459.02) ;
\draw [color={rgb, 255:red, 42; green, 175; blue, 236 }  ,draw opacity=1 ][line width=3.75]    (213.24,459.02) -- (443.59,459.02) ;
\draw [color={rgb, 255:red, 42; green, 175; blue, 236 }  ,draw opacity=1 ][line width=3.75]    (328.41,104.55) -- (443.59,459.02) ;
\draw [color={rgb, 255:red, 42; green, 175; blue, 236 }  ,draw opacity=1 ][line width=3.75]    (142.06,239.95) -- (443.59,459.02) ;
\draw [color={rgb, 255:red, 42; green, 175; blue, 236 }  ,draw opacity=1 ][line width=3.75]    (514.77,239.95) -- (213.24,459.02) ;
\draw [color={rgb, 255:red, 42; green, 175; blue, 236 }  ,draw opacity=1 ][line width=3.75]    (142.06,239.95) -- (514.77,239.95) ;
\draw [color={rgb, 255:red, 42; green, 175; blue, 236 }  ,draw opacity=1 ][line width=3.75]    (328.41,104.55) -- (213.24,459.02) ;
\draw  [color={rgb, 255:red, 42; green, 175; blue, 236 }  ,draw opacity=1 ][fill={rgb, 255:red, 255; green, 255; blue, 255 }  ,fill opacity=1 ][line width=3.75]  (292.3,104.55) .. controls (292.3,84.61) and (308.47,68.44) .. (328.41,68.44) .. controls (348.36,68.44) and (364.53,84.61) .. (364.53,104.55) .. controls (364.53,124.5) and (348.36,140.67) .. (328.41,140.67) .. controls (308.47,140.67) and (292.3,124.5) .. (292.3,104.55) -- cycle ;
\draw  [color={rgb, 255:red, 42; green, 175; blue, 236 }  ,draw opacity=1 ][fill={rgb, 255:red, 255; green, 255; blue, 255 }  ,fill opacity=1 ][line width=3.75]  (478.65,239.95) .. controls (478.65,220) and (494.82,203.83) .. (514.77,203.83) .. controls (534.71,203.83) and (550.88,220) .. (550.88,239.95) .. controls (550.88,259.89) and (534.71,276.06) .. (514.77,276.06) .. controls (494.82,276.06) and (478.65,259.89) .. (478.65,239.95) -- cycle ;
\draw  [color={rgb, 255:red, 42; green, 175; blue, 236 }  ,draw opacity=1 ][fill={rgb, 255:red, 255; green, 255; blue, 255 }  ,fill opacity=1 ][line width=3.75]  (407.47,459.02) .. controls (407.47,439.07) and (423.64,422.91) .. (443.59,422.91) .. controls (463.53,422.91) and (479.7,439.07) .. (479.7,459.02) .. controls (479.7,478.97) and (463.53,495.13) .. (443.59,495.13) .. controls (423.64,495.13) and (407.47,478.97) .. (407.47,459.02) -- cycle ;
\draw  [color={rgb, 255:red, 42; green, 175; blue, 236 }  ,draw opacity=1 ][fill={rgb, 255:red, 255; green, 255; blue, 255 }  ,fill opacity=1 ][line width=3.75]  (177.12,459.02) .. controls (177.12,439.07) and (193.29,422.91) .. (213.24,422.91) .. controls (233.19,422.91) and (249.35,439.07) .. (249.35,459.02) .. controls (249.35,478.97) and (233.19,495.13) .. (213.24,495.13) .. controls (193.29,495.13) and (177.12,478.97) .. (177.12,459.02) -- cycle ;
\draw  [color={rgb, 255:red, 42; green, 175; blue, 236 }  ,draw opacity=1 ][fill={rgb, 255:red, 255; green, 255; blue, 255 }  ,fill opacity=1 ][line width=3.75]  (105.94,239.95) .. controls (105.94,220) and (122.11,203.83) .. (142.06,203.83) .. controls (162,203.83) and (178.17,220) .. (178.17,239.95) .. controls (178.17,259.89) and (162,276.06) .. (142.06,276.06) .. controls (122.11,276.06) and (105.94,259.89) .. (105.94,239.95) -- cycle ;
\draw  [color={rgb, 255:red, 42; green, 175; blue, 236 }  ,draw opacity=1 ][fill={rgb, 255:red, 255; green, 255; blue, 255 }  ,fill opacity=1 ][line width=3.75]  (478.3,240.55) .. controls (478.3,220.61) and (494.47,204.44) .. (514.41,204.44) .. controls (534.36,204.44) and (550.53,220.61) .. (550.53,240.55) .. controls (550.53,260.5) and (534.36,276.67) .. (514.41,276.67) .. controls (494.47,276.67) and (478.3,260.5) .. (478.3,240.55) -- cycle ;
\draw  [color={rgb, 255:red, 42; green, 175; blue, 236 }  ,draw opacity=1 ][fill={rgb, 255:red, 255; green, 255; blue, 255 }  ,fill opacity=1 ][line width=3.75]  (407.3,459.55) .. controls (407.3,439.61) and (423.47,423.44) .. (443.41,423.44) .. controls (463.36,423.44) and (479.53,439.61) .. (479.53,459.55) .. controls (479.53,479.5) and (463.36,495.67) .. (443.41,495.67) .. controls (423.47,495.67) and (407.3,479.5) .. (407.3,459.55) -- cycle ;
\draw  [color={rgb, 255:red, 42; green, 175; blue, 236 }  ,draw opacity=1 ][fill={rgb, 255:red, 255; green, 255; blue, 255 }  ,fill opacity=1 ][line width=3.75]  (177.3,458.55) .. controls (177.3,438.61) and (193.47,422.44) .. (213.41,422.44) .. controls (233.36,422.44) and (249.53,438.61) .. (249.53,458.55) .. controls (249.53,478.5) and (233.36,494.67) .. (213.41,494.67) .. controls (193.47,494.67) and (177.3,478.5) .. (177.3,458.55) -- cycle ;
\draw  [color={rgb, 255:red, 42; green, 175; blue, 236 }  ,draw opacity=1 ][fill={rgb, 255:red, 255; green, 255; blue, 255 }  ,fill opacity=1 ][line width=3.75]  (105.3,239.55) .. controls (105.3,219.61) and (121.47,203.44) .. (141.41,203.44) .. controls (161.36,203.44) and (177.53,219.61) .. (177.53,239.55) .. controls (177.53,259.5) and (161.36,275.67) .. (141.41,275.67) .. controls (121.47,275.67) and (105.3,259.5) .. (105.3,239.55) -- cycle ;

\draw (318.22,94.06) node [anchor=north west][inner sep=0.75pt]  [font=\huge,color={rgb, 255:red, 0; green, 0; blue, 0 }  ,opacity=1 ] [align=left] {$\displaystyle 9$};
\draw (314.22,10.06) node [anchor=north west][inner sep=0.75pt]  [font=\Large] [align=left] {$\displaystyle 6\frac{1}{3}$};
\draw (559.22,217.1) node [anchor=north west][inner sep=0.75pt]  [font=\Large] [align=left] {$\displaystyle 5\frac{5}{6}$};
\draw (66.22,217.1) node [anchor=north west][inner sep=0.75pt]  [font=\Large] [align=left] {$\displaystyle 5\frac{1}{2}$};
\draw (141.22,463.1) node [anchor=north west][inner sep=0.75pt]  [font=\Large] [align=left] {$\displaystyle 5\frac{1}{2}$};
\draw (481.22,463.1) node [anchor=north west][inner sep=0.75pt]  [font=\Large] [align=left] {$\displaystyle 5\frac{5}{6}$};
\draw (505.22,229.06) node [anchor=north west][inner sep=0.75pt]  [font=\huge,color={rgb, 255:red, 0; green, 0; blue, 0 }  ,opacity=1 ] [align=left] {$\displaystyle 6$};
\draw (433.22,448.06) node [anchor=north west][inner sep=0.75pt]  [font=\huge,color={rgb, 255:red, 0; green, 0; blue, 0 }  ,opacity=1 ] [align=left] {$\displaystyle 6$};
\draw (202.22,448.06) node [anchor=north west][inner sep=0.75pt]  [font=\huge,color={rgb, 255:red, 0; green, 0; blue, 0 }  ,opacity=1 ] [align=left] {$\displaystyle 4$};
\draw (130.22,229.06) node [anchor=north west][inner sep=0.75pt]  [font=\huge,color={rgb, 255:red, 0; green, 0; blue, 0 }  ,opacity=1 ] [align=left] {$\displaystyle 4$};

\end{tikzpicture}}
    }\quad 
    \subfloat[Squadron 2]{
    \resizebox{0.27\textwidth}{!}{    \tikzset{every picture/.style={line width=0.75pt}} %

\begin{tikzpicture}[x=0.75pt,y=0.75pt,yscale=-1,xscale=1]

\draw [color={rgb, 255:red, 42; green, 175; blue, 236 }  ,draw opacity=1 ][line width=3.75]    (328.41,104.55) -- (514.77,239.95) ;
\draw [color={rgb, 255:red, 42; green, 175; blue, 236 }  ,draw opacity=1 ][line width=3.75]    (514.77,239.95) -- (443.59,459.02) ;
\draw [color={rgb, 255:red, 42; green, 175; blue, 236 }  ,draw opacity=1 ][line width=3.75]    (142.06,239.95) -- (328.41,104.55) ;
\draw [color={rgb, 255:red, 42; green, 175; blue, 236 }  ,draw opacity=1 ][line width=3.75]    (142.06,239.95) -- (213.24,459.02) ;
\draw [color={rgb, 255:red, 42; green, 175; blue, 236 }  ,draw opacity=1 ][line width=3.75]    (213.24,459.02) -- (443.59,459.02) ;
\draw [color={rgb, 255:red, 42; green, 175; blue, 236 }  ,draw opacity=1 ][line width=3.75]    (328.41,104.55) -- (443.59,459.02) ;
\draw [color={rgb, 255:red, 42; green, 175; blue, 236 }  ,draw opacity=1 ][line width=3.75]    (142.06,239.95) -- (443.59,459.02) ;
\draw [color={rgb, 255:red, 42; green, 175; blue, 236 }  ,draw opacity=1 ][line width=3.75]    (514.77,239.95) -- (213.24,459.02) ;
\draw [color={rgb, 255:red, 42; green, 175; blue, 236 }  ,draw opacity=1 ][line width=3.75]    (142.06,239.95) -- (514.77,239.95) ;
\draw [color={rgb, 255:red, 42; green, 175; blue, 236 }  ,draw opacity=1 ][line width=3.75]    (328.41,104.55) -- (213.24,459.02) ;
\draw  [color={rgb, 255:red, 42; green, 175; blue, 236 }  ,draw opacity=1 ][fill={rgb, 255:red, 255; green, 255; blue, 255 }  ,fill opacity=1 ][line width=3.75]  (292.3,104.55) .. controls (292.3,84.61) and (308.47,68.44) .. (328.41,68.44) .. controls (348.36,68.44) and (364.53,84.61) .. (364.53,104.55) .. controls (364.53,124.5) and (348.36,140.67) .. (328.41,140.67) .. controls (308.47,140.67) and (292.3,124.5) .. (292.3,104.55) -- cycle ;
\draw  [color={rgb, 255:red, 42; green, 175; blue, 236 }  ,draw opacity=1 ][fill={rgb, 255:red, 255; green, 255; blue, 255 }  ,fill opacity=1 ][line width=3.75]  (478.65,239.95) .. controls (478.65,220) and (494.82,203.83) .. (514.77,203.83) .. controls (534.71,203.83) and (550.88,220) .. (550.88,239.95) .. controls (550.88,259.89) and (534.71,276.06) .. (514.77,276.06) .. controls (494.82,276.06) and (478.65,259.89) .. (478.65,239.95) -- cycle ;
\draw  [color={rgb, 255:red, 42; green, 175; blue, 236 }  ,draw opacity=1 ][fill={rgb, 255:red, 255; green, 255; blue, 255 }  ,fill opacity=1 ][line width=3.75]  (407.47,459.02) .. controls (407.47,439.07) and (423.64,422.91) .. (443.59,422.91) .. controls (463.53,422.91) and (479.7,439.07) .. (479.7,459.02) .. controls (479.7,478.97) and (463.53,495.13) .. (443.59,495.13) .. controls (423.64,495.13) and (407.47,478.97) .. (407.47,459.02) -- cycle ;
\draw  [color={rgb, 255:red, 42; green, 175; blue, 236 }  ,draw opacity=1 ][fill={rgb, 255:red, 255; green, 255; blue, 255 }  ,fill opacity=1 ][line width=3.75]  (177.12,459.02) .. controls (177.12,439.07) and (193.29,422.91) .. (213.24,422.91) .. controls (233.19,422.91) and (249.35,439.07) .. (249.35,459.02) .. controls (249.35,478.97) and (233.19,495.13) .. (213.24,495.13) .. controls (193.29,495.13) and (177.12,478.97) .. (177.12,459.02) -- cycle ;
\draw  [color={rgb, 255:red, 42; green, 175; blue, 236 }  ,draw opacity=1 ][fill={rgb, 255:red, 255; green, 255; blue, 255 }  ,fill opacity=1 ][line width=3.75]  (105.94,239.95) .. controls (105.94,220) and (122.11,203.83) .. (142.06,203.83) .. controls (162,203.83) and (178.17,220) .. (178.17,239.95) .. controls (178.17,259.89) and (162,276.06) .. (142.06,276.06) .. controls (122.11,276.06) and (105.94,259.89) .. (105.94,239.95) -- cycle ;
\draw  [color={rgb, 255:red, 42; green, 175; blue, 236 }  ,draw opacity=1 ][fill={rgb, 255:red, 255; green, 255; blue, 255 }  ,fill opacity=1 ][line width=3.75]  (478.3,240.55) .. controls (478.3,220.61) and (494.47,204.44) .. (514.41,204.44) .. controls (534.36,204.44) and (550.53,220.61) .. (550.53,240.55) .. controls (550.53,260.5) and (534.36,276.67) .. (514.41,276.67) .. controls (494.47,276.67) and (478.3,260.5) .. (478.3,240.55) -- cycle ;
\draw  [color={rgb, 255:red, 42; green, 175; blue, 236 }  ,draw opacity=1 ][fill={rgb, 255:red, 255; green, 255; blue, 255 }  ,fill opacity=1 ][line width=3.75]  (407.3,459.55) .. controls (407.3,439.61) and (423.47,423.44) .. (443.41,423.44) .. controls (463.36,423.44) and (479.53,439.61) .. (479.53,459.55) .. controls (479.53,479.5) and (463.36,495.67) .. (443.41,495.67) .. controls (423.47,495.67) and (407.3,479.5) .. (407.3,459.55) -- cycle ;
\draw  [color={rgb, 255:red, 42; green, 175; blue, 236 }  ,draw opacity=1 ][fill={rgb, 255:red, 255; green, 255; blue, 255 }  ,fill opacity=1 ][line width=3.75]  (177.3,458.55) .. controls (177.3,438.61) and (193.47,422.44) .. (213.41,422.44) .. controls (233.36,422.44) and (249.53,438.61) .. (249.53,458.55) .. controls (249.53,478.5) and (233.36,494.67) .. (213.41,494.67) .. controls (193.47,494.67) and (177.3,478.5) .. (177.3,458.55) -- cycle ;
\draw  [color={rgb, 255:red, 42; green, 175; blue, 236 }  ,draw opacity=1 ][fill={rgb, 255:red, 255; green, 255; blue, 255 }  ,fill opacity=1 ][line width=3.75]  (105.3,239.55) .. controls (105.3,219.61) and (121.47,203.44) .. (141.41,203.44) .. controls (161.36,203.44) and (177.53,219.61) .. (177.53,239.55) .. controls (177.53,259.5) and (161.36,275.67) .. (141.41,275.67) .. controls (121.47,275.67) and (105.3,259.5) .. (105.3,239.55) -- cycle ;

\draw (318.22,94.06) node [anchor=north west][inner sep=0.75pt]  [font=\huge,color={rgb, 255:red, 0; green, 0; blue, 0 }  ,opacity=1 ] [align=left] {$\displaystyle 9$};
\draw (314.22,10.06) node [anchor=north west][inner sep=0.75pt]  [font=\Large] [align=left] {$\displaystyle 6\frac{5}{6}$};
\draw (559.22,217.1) node [anchor=north west][inner sep=0.75pt]  [font=\Large] [align=left] {$\displaystyle 6\frac{5}{6}$};
\draw (481.22,463.1) node [anchor=north west][inner sep=0.75pt]  [font=\Large] [align=left] {$\displaystyle 6\frac{1}{3}$};
\draw (505.22,229.06) node [anchor=north west][inner sep=0.75pt]  [font=\huge,color={rgb, 255:red, 0; green, 0; blue, 0 }  ,opacity=1 ] [align=left] {$\displaystyle 9$};
\draw (433.22,448.06) node [anchor=north west][inner sep=0.75pt]  [font=\huge,color={rgb, 255:red, 0; green, 0; blue, 0 }  ,opacity=1 ] [align=left] {$\displaystyle 6$};
\draw (202.22,448.06) node [anchor=north west][inner sep=0.75pt]  [font=\huge,color={rgb, 255:red, 0; green, 0; blue, 0 }  ,opacity=1 ] [align=left] {$\displaystyle 4$};
\draw (130.22,229.06) node [anchor=north west][inner sep=0.75pt]  [font=\huge,color={rgb, 255:red, 0; green, 0; blue, 0 }  ,opacity=1 ] [align=left] {$\displaystyle 4$};
\draw (80.22,235.1) node [anchor=north west][inner sep=0.75pt]  [font=\Large] [align=left] {$\displaystyle 6$};
\draw (155.22,481.1) node [anchor=north west][inner sep=0.75pt]  [font=\Large] [align=left] {$\displaystyle 6$};

\end{tikzpicture}}
    }\quad 
    \subfloat[Squadron 3]{
    \resizebox{0.27\textwidth}{!}{    \tikzset{every picture/.style={line width=0.75pt}} %

\begin{tikzpicture}[x=0.75pt,y=0.75pt,yscale=-1,xscale=1]

\draw [color={rgb, 255:red, 42; green, 175; blue, 236 }  ,draw opacity=1 ][line width=3.75]    (328.41,104.55) -- (514.77,239.95) ;
\draw [color={rgb, 255:red, 42; green, 175; blue, 236 }  ,draw opacity=1 ][line width=3.75]    (514.77,239.95) -- (443.59,459.02) ;
\draw [color={rgb, 255:red, 42; green, 175; blue, 236 }  ,draw opacity=1 ][line width=3.75]    (142.06,239.95) -- (213.24,459.02) ;
\draw [color={rgb, 255:red, 42; green, 175; blue, 236 }  ,draw opacity=1 ][line width=3.75]    (328.41,104.55) -- (443.59,459.02) ;
\draw  [color={rgb, 255:red, 42; green, 175; blue, 236 }  ,draw opacity=1 ][fill={rgb, 255:red, 255; green, 255; blue, 255 }  ,fill opacity=1 ][line width=3.75]  (292.3,104.55) .. controls (292.3,84.61) and (308.47,68.44) .. (328.41,68.44) .. controls (348.36,68.44) and (364.53,84.61) .. (364.53,104.55) .. controls (364.53,124.5) and (348.36,140.67) .. (328.41,140.67) .. controls (308.47,140.67) and (292.3,124.5) .. (292.3,104.55) -- cycle ;
\draw  [color={rgb, 255:red, 42; green, 175; blue, 236 }  ,draw opacity=1 ][fill={rgb, 255:red, 255; green, 255; blue, 255 }  ,fill opacity=1 ][line width=3.75]  (478.65,239.95) .. controls (478.65,220) and (494.82,203.83) .. (514.77,203.83) .. controls (534.71,203.83) and (550.88,220) .. (550.88,239.95) .. controls (550.88,259.89) and (534.71,276.06) .. (514.77,276.06) .. controls (494.82,276.06) and (478.65,259.89) .. (478.65,239.95) -- cycle ;
\draw  [color={rgb, 255:red, 42; green, 175; blue, 236 }  ,draw opacity=1 ][fill={rgb, 255:red, 255; green, 255; blue, 255 }  ,fill opacity=1 ][line width=3.75]  (407.47,459.02) .. controls (407.47,439.07) and (423.64,422.91) .. (443.59,422.91) .. controls (463.53,422.91) and (479.7,439.07) .. (479.7,459.02) .. controls (479.7,478.97) and (463.53,495.13) .. (443.59,495.13) .. controls (423.64,495.13) and (407.47,478.97) .. (407.47,459.02) -- cycle ;
\draw  [color={rgb, 255:red, 42; green, 175; blue, 236 }  ,draw opacity=1 ][fill={rgb, 255:red, 255; green, 255; blue, 255 }  ,fill opacity=1 ][line width=3.75]  (177.12,459.02) .. controls (177.12,439.07) and (193.29,422.91) .. (213.24,422.91) .. controls (233.19,422.91) and (249.35,439.07) .. (249.35,459.02) .. controls (249.35,478.97) and (233.19,495.13) .. (213.24,495.13) .. controls (193.29,495.13) and (177.12,478.97) .. (177.12,459.02) -- cycle ;
\draw  [color={rgb, 255:red, 42; green, 175; blue, 236 }  ,draw opacity=1 ][fill={rgb, 255:red, 255; green, 255; blue, 255 }  ,fill opacity=1 ][line width=3.75]  (105.94,239.95) .. controls (105.94,220) and (122.11,203.83) .. (142.06,203.83) .. controls (162,203.83) and (178.17,220) .. (178.17,239.95) .. controls (178.17,259.89) and (162,276.06) .. (142.06,276.06) .. controls (122.11,276.06) and (105.94,259.89) .. (105.94,239.95) -- cycle ;
\draw  [color={rgb, 255:red, 42; green, 175; blue, 236 }  ,draw opacity=1 ][fill={rgb, 255:red, 255; green, 255; blue, 255 }  ,fill opacity=1 ][line width=3.75]  (478.3,240.55) .. controls (478.3,220.61) and (494.47,204.44) .. (514.41,204.44) .. controls (534.36,204.44) and (550.53,220.61) .. (550.53,240.55) .. controls (550.53,260.5) and (534.36,276.67) .. (514.41,276.67) .. controls (494.47,276.67) and (478.3,260.5) .. (478.3,240.55) -- cycle ;
\draw  [color={rgb, 255:red, 42; green, 175; blue, 236 }  ,draw opacity=1 ][fill={rgb, 255:red, 255; green, 255; blue, 255 }  ,fill opacity=1 ][line width=3.75]  (407.3,459.55) .. controls (407.3,439.61) and (423.47,423.44) .. (443.41,423.44) .. controls (463.36,423.44) and (479.53,439.61) .. (479.53,459.55) .. controls (479.53,479.5) and (463.36,495.67) .. (443.41,495.67) .. controls (423.47,495.67) and (407.3,479.5) .. (407.3,459.55) -- cycle ;
\draw  [color={rgb, 255:red, 42; green, 175; blue, 236 }  ,draw opacity=1 ][fill={rgb, 255:red, 255; green, 255; blue, 255 }  ,fill opacity=1 ][line width=3.75]  (177.3,458.55) .. controls (177.3,438.61) and (193.47,422.44) .. (213.41,422.44) .. controls (233.36,422.44) and (249.53,438.61) .. (249.53,458.55) .. controls (249.53,478.5) and (233.36,494.67) .. (213.41,494.67) .. controls (193.47,494.67) and (177.3,478.5) .. (177.3,458.55) -- cycle ;
\draw  [color={rgb, 255:red, 42; green, 175; blue, 236 }  ,draw opacity=1 ][fill={rgb, 255:red, 255; green, 255; blue, 255 }  ,fill opacity=1 ][line width=3.75]  (105.3,239.55) .. controls (105.3,219.61) and (121.47,203.44) .. (141.41,203.44) .. controls (161.36,203.44) and (177.53,219.61) .. (177.53,239.55) .. controls (177.53,259.5) and (161.36,275.67) .. (141.41,275.67) .. controls (121.47,275.67) and (105.3,259.5) .. (105.3,239.55) -- cycle ;

\draw (318.22,94.06) node [anchor=north west][inner sep=0.75pt]  [font=\huge,color={rgb, 255:red, 0; green, 0; blue, 0 }  ,opacity=1 ] [align=left] {$\displaystyle 9$};
\draw (505.22,229.06) node [anchor=north west][inner sep=0.75pt]  [font=\huge,color={rgb, 255:red, 0; green, 0; blue, 0 }  ,opacity=1 ] [align=left] {$\displaystyle 9$};
\draw (433.22,448.06) node [anchor=north west][inner sep=0.75pt]  [font=\huge,color={rgb, 255:red, 0; green, 0; blue, 0 }  ,opacity=1 ] [align=left] {$\displaystyle 9$};
\draw (202.22,448.06) node [anchor=north west][inner sep=0.75pt]  [font=\huge,color={rgb, 255:red, 0; green, 0; blue, 0 }  ,opacity=1 ] [align=left] {$\displaystyle 4$};
\draw (130.22,229.06) node [anchor=north west][inner sep=0.75pt]  [font=\huge,color={rgb, 255:red, 0; green, 0; blue, 0 }  ,opacity=1 ] [align=left] {$\displaystyle 4$};
\draw (80.22,235.1) node [anchor=north west][inner sep=0.75pt]  [font=\Large] [align=left] {$\displaystyle 4$};
\draw (155.22,481.1) node [anchor=north west][inner sep=0.75pt]  [font=\Large] [align=left] {$\displaystyle 4$};
\draw (322.22,27.1) node [anchor=north west][inner sep=0.75pt]  [font=\Large] [align=left] {$\displaystyle 9$};
\draw (561.22,235.1) node [anchor=north west][inner sep=0.75pt]  [font=\Large] [align=left] {$\displaystyle 9$};
\draw (482.22,480.1) node [anchor=north west][inner sep=0.75pt]  [font=\Large] [align=left] {$\displaystyle 9$};

\end{tikzpicture}}
    }
    \caption{An illustration of the stable outcomes for the three squadrons. Ability levels $b_i$ appear inside each node, while equilibrium actions $s_i$ are next to the node.}
    \label{fig:squadrons} 
\end{figure}
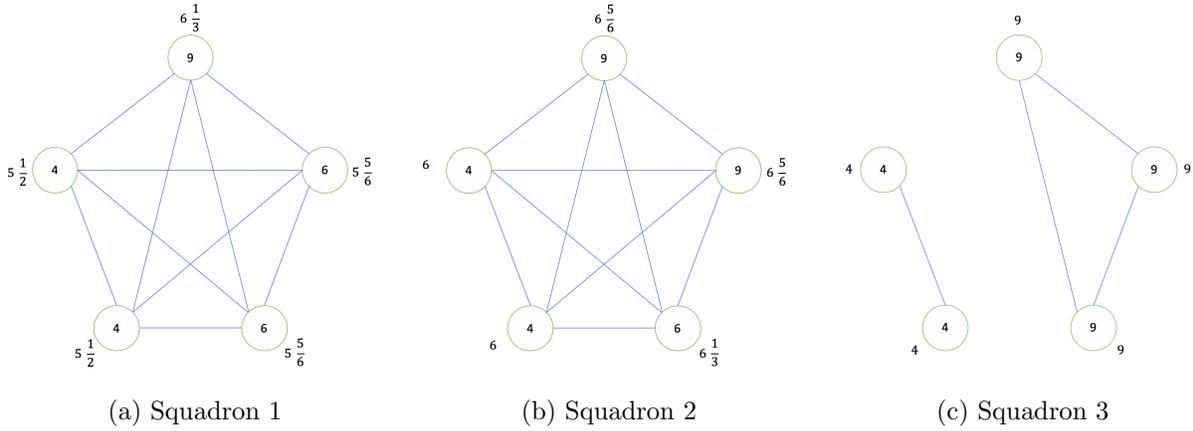

\noindent In each successive squadron, we replace a cadet of intermediate ability with one of high ability, and we are interested in how the actions and welfare of the low-ability cadets change. In the context of \cite{Carrelletal2013}, the first two squadrons represent combinations that should occur frequently in the chance assignments of the first cohort, while the last one represents the designed groups.\footnote{This is consistent with the facts reported in \cite{Carrelletal2013}, who note that the protocol for designing squadrons, in order to group low-ability cadets with  many high-ability peers, tended to exclude cadets of intermediate ability and place these into more homogeneous squadrons---we do not discuss these here.}

In the first two squadrons, the unique pairwise stable outcome involves a complete graph.  In squadron 1, the action vector is $\mathbf{s} = \left(5 \frac{1}{2}, 5 \frac{1}{2}, 5\frac{5}{6},5\frac{5}{6},6 \frac{1}{3}\right)$, and in squadron 2, the action vector is $\mathbf{s} = \left(6,6,6\frac{1}{3}, 6 \frac{5}{6}, 6 \frac{5}{6}\right)$.  From this we see that adding a second high-ability cadet to the squadron increases the performance of low-ability cadets from $5 \frac{1}{2}$ to $6$, and low-ability cadets benefit from this small change in group composition.  What happens if we add another high-ability cadet?  In squadron 3, the unique uncoordinated pairwise stable outcome involves two separate cliques:  the two low-ability cadets form one clique, the three high-ability cadets form the other, and the action vector is $\mathbf{s} = (4,4,9,9,9)$.  A larger change in the group composition results in a marked decline in performance for the low-ability cadets.

\subsection{Discussion}

\hspace{1 pc}
The predicted outcome in the designed squadrons is fragmentation: a critical mass of high-ability cadets forms their own clique, which is consistent with the explanation that  \citet{Carrelletal2013} give for the unintended consequences they observed. Surveying their participants subsequently, they found that treatment squadrons had significantly higher rates of ability homophily than control squadrons.\footnote{These calculations controlled for opportunities to link.} Indeed, even though the designed squadrons had more high-ability cadets (compared to a random squadron), the low-ability cadets in those squadrons were actually less likely to have high-ability cadets as study partners. This is strong evidence of a new force preventing the peer effects from materializing. Our theory explains this via the strategic forces arising from endogenous network formation.

It is worth noting that endogenous actions were a key part of this story. Imagine  a model in which agents' actions are replaced by exogenous ability levels (as in \Cref{rem:exogenous_actions}), and externalities remain separable across relationships. In this case, adding intermediate-ability cadets cannot change whether the high-ability cadets want to link to the low-ability ones.

\section{Other applications}

\hspace{1 pc}
We briefly address two additional applications.  We first study \emph{status games}, in which payoffs incorporate social comparisons---this allows us to interpret stylized facts about clique formation.  We then discuss how our analysis can provide a foundation for group matching models that assume a coarse  notion of network formation in which agents choose cliques in which to participate. Prior work studying network games together with network formation focuses predominantly on the case of action--link complements and positive spillovers, and such models predict nested split graphs.  In contrast, our main applications fall into cells in which stable graphs consist of ordered overlapping cliques.

\subsection{Status games and ordered cliques} \label{sec:statusgames}

\hspace{1 pc}
A natural model of competitions for status features action--link complements and negative spillovers.  For instance, when people care about their relative position in their social neighborhood, (i) having more friends who engage in conspicuous consumption creates stronger incentives to consume more, but at the same time (ii) those who consume conspicuously are less attractive as friends.  \cite*{jackson2019friendship} argues that many social behaviors (e.g., binge drinking) have the same properties:  those with more friends find these behaviors more rewarding, but they exert negative externalities across neighbors (e.g., due to health consequences or crowding out more productive behaviors).  More generally, this pattern applies to any domain in which friends' achievement drives one to excel, but there is disutility from negative comparisons among friends.  Our theory entails that such situations drive the formation of social cliques ordered according to both their popularity and their effort at the activity in question.

This prediction agrees with anthropological and sociological studies documenting the pervasiveness of ranked cliques. For instance, \citet*{davis1967structure} formalize the theory of \citet*{homans1950human}, asserting that small or medium-sized groups (e.g., departments in workplaces, grades in a school) are often organized into cliques with a ranking among them in terms of their sociability and status-conferring behaviors.\footnote{\citet{davis1967structure}  discuss purely graph-theoretic principles that guarantee some features of a ranked-cliques graph, but do not have a model of choices.} \citet*{adler1995dynamics} conduct an ethnographic study of older elementary-school children that highlights the prevalence of cliques.  The authors argue that status differentiation is clear across cliques, and indeed that there are unambiguous orderings, with one clique occupying the ``upper status rung of a grade'' and ``identified by members and nonmembers alike as the `popular clique." This study also emphasizes the salience of status comparisons with more popular individuals, consistent with our negative spillovers assumption. Building on this ethnographic work, \cite*{gest2007features} carry out a detailed quantitative examination of the social structures in a middle school, with a particular focus on gender differences.  The authors' summary confirms the ethnographic narrative: ``girls and boys were similar in their tendency to form same-sex peer groups that were distinct, tightly knit, and characterized by status hierarchies.''

Within the economics literature, \citet{Immorlicaetal2017} introduce a framework in which players exert inefficient effort in a status-seeking activity and earn disutility from network neighbors who exert higher effort---we can view this as a model of conspicuous consumption with upward-looking comparisons.  The authors assume an exogenous network and explore how the network structure influences individual behavior.  Formally, the authors take $S_i = \bb{R}_+$ for each player $i$, and payoffs are
$$u_i(\mathbf{s}) = b_i s_i - \frac{s_i^2}{2} - \sum_{j \in G_i} g_{ij}\max\{s_j - s_i, 0\},$$
in which $g_{ij} \geq 0$ for each $ij \in G$.  The paper shows that an equilibrium partitions the players into classes making the same level of effort, and the highest class consists of the subset of players that maximizes a measure of group cohesion.  Our framework makes it possible to endogenize the network in this model. Under a natural extension of the payoff function, the classes that emerge in equilibrium form distinct cliques in the social graph.

Consider a network game with network formation in which $S_i = \bb{R}_+$ for each $i$, and payoffs take the form
$$u_i(G, \mathbf{s}) = b s_i - \frac{s_i^2}{2} + \sum_{j \in G_i} \left(1 - \delta \max\{s_j - s_i, 0\} \right).$$
In this game, player $i$ earns a unit of utility for each neighbor,\footnote{Note the graph here is unweighted.} but suffers a loss $s_j - s_i$ if neighbor $j$ invests more effort. There are no other linking costs. To highlight the role of network formation, rather than individual incentives, we also specialize the model so that all players have the same private benefit $b$ for effort.  The game clearly falls into the class \eqref{eq:parametricpayoff} of separable network games, with negative spillovers and weak links-action complements.  Hence, stable outcomes consist of ordered overlapping cliques, and we can only have $ij \in G$ if $|s_i - s_j| \leq \frac{1}{\delta}$.  For the purposes of this example, we restrict attention to outcomes in which the cliques partition the players.  Moreover, following \citet{Immorlicaetal2017}, we focus on maximal equilibria of the status game, with players taking the highest actions they can sustain given the graph.  Since all players have the same private benefit $b$, \emph{all players in a clique play the same action, and the maximum equilibrium action in a clique of size $k$ is $b + (k-1) \delta$.}

This characterization implies two notable features of stable outcomes.  First, those in large groups take higher actions---popular individuals invest more in status signaling.  Second, as status concerns increase, the graph can fragment.  Let $c^*$ denote the smallest integer such that $c^* \delta \geq \frac{1}{\delta}$---this is the unique integer satisfying $\delta \in [1/\sqrt{c^*}, 1/\sqrt{c^*-1})$.  If $i$ and $j$ are in different cliques, we must have $|s_i - s_j| \geq \frac{1}{\delta}$, which implies the cliques differ in size by at least $c^*$.  The larger $c^*$ is, the more cohesive stable networks are.  If there are $n$ players in total, and $\delta < \frac{1}{\sqrt{n-2}}$, then the complete graph is the only stable outcome.  As $\delta$ increases, meaning there are greater status concerns, then stable outcomes can involve more fragmented graphs.  If $\delta \geq 1$, then separate cliques need only differ in size by one player, and the maximal number of cliques is the largest integer $k$ such that $\frac{k(k+1)}{2} \leq n$ (which is approximately $\sqrt{2n}$).

\subsection{Foundations for group-matching models} \label{sec:groupmatching}

\hspace{1 pc}
Models of endogenous matching that go beyond standard pair matching frameworks often posit that individuals belong to a \emph{group} of others. Externalities and strategic interactions then occur within or across groups---with the crucial feature that payoffs are invariant to permutations of agents within groups. In essence, these models constrain the network that can form, assuming disjoint cliques. For example, \citet*{baccara2013homophily} study a setting in which individuals join groups (e.g., social clubs) and then choose how much to contribute to an activity within the group. These contributions affect the payoffs of other group members symmetrically. Similarly, \citet*{chade2018matching} model the allocation of experts to teams.  These experts share information within their teams, benefiting all team members equally, but not across teams. %

The interactions motivating these models are not so constrained in reality---there is no reason why pairs cannot meet outside the groups, and in many cases a person could choose to join multiple groups.  However, assuming that interactions happen in groups allows simplifications that are essential to the tractability of these models.  To what extent are these restrictions without loss of generality?  Our results allow us to provide simple sufficient conditions.

For this section, we assume the common action set $S$ is a closed interval in $\bb{R}$, and each player has one of finitely many types---write $t_i \in T$ for player $i$'s type.  Payoffs take the form
\begin{equation}\label{eq:groupmatch}
u_i(G, \mathbf{s}) = v(s_i, t_i) + \sum_{j \in G_i} g(s_i, s_j),
\end{equation}
in which $v$ and $g$ are continuous.  We further assume that players have unique best responses, holding the graph and other players' actions fixed.  Write $s^*_t = \argmax_{s \in S} v(s,t)$ for the action that a type $t$ player would take if isolated with no neighbors---this is the \emph{privately optimal action}.  Payoffs exhibit a \emph{weak preference for conformity} if player $i$'s optimal action always lies somewhere in between her privately optimal benchmark and the actions of her neighbors.  That is, for $\hat{s} = \argmax_{s_i \in S} u_i(G, s_i, s_{-i})$, we have
$$\min\{s^*_{t_i}, \min_{j \in G_i}\{ s_j \} \} \leq \hat{s} \leq \max\{s^*_{t_i}, \max_{j \in G_i}\{ s_j \} \}$$
for all $i$ and $G$.

We say that types form \emph{natural cliques} if there exists a partition $\{T_1, T_2,\ldots,T_K\}$ of $T$ such that
\begin{itemize}
    \item $g\left(s^*_t, s^*_{t'}\right) \geq 0$ for any $t,t' \in T_k$ and any $k$.
    \item Either $g\left(s^*_t, s^*_{t'}\right) \leq 0$ or $g\left(s^*_{t'}, s^*_t\right) \leq 0$ with at least one strict inequality for any $t \in T_k$ and $t' \in T_\ell$ with $k \neq \ell$.
    
\end{itemize}

\noindent In words, this means that if all players were to choose their privately optimal actions, and form the network taking those actions as given, then disjoint cliques based on the partition of types would be pairwise stable.  If payoffs exhibit a weak preference for conformity, these same cliques remain pairwise stable when players can change their actions.

\begin{Prop}\label{prop:disjoint}
Suppose a network game with network formation has payoffs of the form \eqref{eq:groupmatch}, exhibits a weak preference for conformity, and types form natural cliques.  If the game exhibits either positive spillovers and action--link substitutes or negative spillovers and action--link complements, then there exists a pairwise stable outcome in which the network is exactly the partition into natural cliques.
\end{Prop}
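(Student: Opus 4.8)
The plan is to construct one pairwise stable outcome of the asserted form explicitly rather than invoke Theorem~\ref{theo:structure}; the only delicate part will be ruling out profitable cross-clique links. Let $G^{*}$ be the disjoint union of the cliques $C_{k}=\{i : t_{i}\in T_{k}\}$, $k=1,\dots,K$, and for each $k$ write $\underline s_{k}=\min_{t\in T_{k}}s^{*}_{t}$ and $\overline s_{k}=\max_{t\in T_{k}}s^{*}_{t}$ (finite, as extrema over the finite set $T_{k}$). I first exhibit a Nash equilibrium $\mathbf s^{*}$ of the action game with the graph fixed at $G^{*}$ in which $s^{*}_{i}\in[\underline s_{k(i)},\overline s_{k(i)}]$ for every $i$. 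Since $G^{*}$ has no edges between distinct cliques, this fixed-graph game decomposes across cliques, so it suffices to work inside one $C_{k}$. There the box $B_{k}=\prod_{i\in C_{k}}[\underline s_{k},\overline s_{k}]$ is compact and convex; by the weak preference for conformity applied with $G=G^{*}$, if all of $i$'s neighbours choose actions in $[\underline s_{k},\overline s_{k}]$ then $i$'s (unique) best response also lies in $[\underline s_{k},\overline s_{k}]$, using $s^{*}_{t_{i}}\in[\underline s_{k},\overline s_{k}]$; hence the best-response map sends $B_{k}$ into itself, it is continuous by continuity of $u_{i}$, Berge's maximum theorem, and uniqueness of best responses, and Brouwer's theorem gives a fixed point. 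Concatenating over $k$ yields $\mathbf s^{*}$, and by construction $(G^{*},\mathbf s^{*})$ meets the Nash-equilibrium clause in the definition of pairwise stability.

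Next I translate the remaining two requirements into sign conditions on $g$. Since payoffs have the separable form \eqref{eq:groupmatch}, deleting $ij$ changes $u_{i}$ by $-g(s^{*}_{i},s^{*}_{j})$ and adding $ij$ changes $u_{i}$ by $+g(s^{*}_{i},s^{*}_{j})$. So it suffices to show (i) $g(s^{*}_{i},s^{*}_{j})\ge 0$ for all $i,j$ in a common clique, and (ii) for $i\in C_{k}$, $j\in C_{\ell}$ with $k\ne\ell$, it is not the case that $g(s^{*}_{i},s^{*}_{j})\ge 0$ and $g(s^{*}_{j},s^{*}_{i})\ge 0$ both hold. I treat the positive-spillovers/action--link-substitutes case; the negative-spillovers/action--link-complements case is symmetric, interchanging the roles of the two arguments of $g$ (and replacing $\underline s$ by $\overline s$ in the ordering argument below). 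For (i): $g(\overline s_{k},\underline s_{k})\ge 0$ because $\overline s_{k}$ and $\underline s_{k}$ are privately optimal actions of two types in the same class (the within-class clause of natural cliques); action--link substitutes then lowers the first argument to any $s^{*}_{i}\le\overline s_{k}$, and positive spillovers raises the second to any $s^{*}_{j}\ge\underline s_{k}$, yielding $g(s^{*}_{i},s^{*}_{j})\ge 0$. Hence no within-clique link is profitably deleted.

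The substantive step is (ii). Suppose some $ij$ with $i\in C_{k}$, $j\in C_{\ell}$, $k\ne\ell$, were profitable, so $g(s^{*}_{i},s^{*}_{j})\ge 0$ and $g(s^{*}_{j},s^{*}_{i})\ge 0$. Lowering the first argument (action--link substitutes, $\underline s_{k}\le s^{*}_{i}$) and raising the second (positive spillovers, $s^{*}_{j}\le\overline s_{\ell}$) gives $g(\underline s_{k},\overline s_{\ell})\ge 0$; symmetrically $g(\underline s_{\ell},\overline s_{k})\ge 0$. Now $\underline s_{k}$ and $\overline s_{\ell}$ are privately optimal actions of types in \emph{different} classes, so the across-class clause of natural cliques forbids $g(\underline s_{k},\overline s_{\ell})$ and $g(\overline s_{\ell},\underline s_{k})$ from both being nonnegative; hence $g(\overline s_{\ell},\underline s_{k})<0$, and likewise $g(\overline s_{k},\underline s_{\ell})<0$. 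But $g(\overline s_{\ell},\underline s_{\ell})\ge 0$ (two types in $T_{\ell}$), so positive spillovers gives $g(\overline s_{\ell},\underline s_{k})\ge 0$ whenever $\underline s_{k}\ge\underline s_{\ell}$; with $g(\overline s_{\ell},\underline s_{k})<0$ this forces $\underline s_{k}<\underline s_{\ell}$, and the symmetric argument (from $g(\overline s_{k},\underline s_{k})\ge 0$ and $g(\overline s_{k},\underline s_{\ell})<0$) forces $\underline s_{\ell}<\underline s_{k}$ --- a contradiction. Thus no cross-clique link is profitable and $(G^{*},\mathbf s^{*})$ is pairwise stable. The fixed-point construction and part (i) are routine; the main obstacle is exactly this last argument, which hinges on pushing the one-sided linking incentives out to the extremal (``corner'') type pairs and then playing the within-class comparisons against the across-class comparisons to pin down the order of the intervals $[\underline s_{k},\overline s_{k}]$ and extract the contradiction.
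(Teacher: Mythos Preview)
Your proof is correct and follows essentially the same plan as the paper: fix the natural-clique graph, use continuity and weak preference for conformity (via a Brouwer fixed point) to produce a Nash equilibrium in actions with each $s_i^*$ trapped in its clique interval $[\underline s_k,\overline s_k]$, then check within-clique and cross-clique linking incentives using the single-crossing conditions and the natural-cliques hypothesis.

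The one place you genuinely diverge from the paper is the cross-clique step. The paper's proof assumes ``without loss of generality that $\underline{s}_\ell \geq \overline{s}_k$'' and then argues that one side of the link is strictly unprofitable. That WLOG (i.e., that the clique intervals are ordered) is in fact derivable from the hypotheses, but the paper does not prove it, and the displayed chain there appears to have the roles of the two players reversed. Your contradiction argument sidesteps both issues: you never assume the intervals are ordered, and instead push both putative linking incentives out to corner pairs, invoke the across-class clause twice to obtain $g(\overline s_\ell,\underline s_k)<0$ and $g(\overline s_k,\underline s_\ell)<0$, and then play these against the within-class inequalities to force $\underline s_k<\underline s_\ell$ and $\underline s_\ell<\underline s_k$. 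This is slightly longer but more self-contained, and it uses only the single-crossing form of the spillover/substitutes conditions rather than any monotonicity of $g$.
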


\begin{proof}
We carry out the proof assuming positive spillovers and action--link substitutes---the other case is analogous.  Since types form natural cliques, there is a partition $\{T_1,T_2,\ldots,T_K\}$ of types such that, when playing the privately optimal actions, players have an incentive to link if and only if their types are in the same element of the partition.  Suppose this graph forms.  We show it is part of a pairwise stable outcome.

For each $T_k$ let $\underline{s}_k$ and $\overline{s}_k$ denote the lowest and highest values respectively of $s^*_t$ for some type $t \in T_k$.  Continuity together with weak preference for conformity implies that there exists an equilibrium in actions with $s_i \in [\underline{s}_k, \overline{s}_k]$ for every player $i$ with type $t_i \in T_k$.  Given two such players $i$ and $j$, we have
$$g(s_i, s_j) \geq g(s_i, \underline{s}_k) \geq g(\overline{s}_k, \underline{s}_k) \geq 0,$$
in which the first inequality follows from positive spillovers, and the second follows from action--link substitutes.  Hence, these two players have an incentive to link.

For two partition elements $T_k$ and $T_\ell$, with $k \neq \ell$, assume without loss of generality that $\underline{s}_\ell \geq \overline{s}_k$.  For player $i$ with type $t_i \in T_k$ and $j$ with type $t_j \in T_\ell$ we have
$$g(s_i, s_j) \leq g(\underline{s}_k, s_j) \leq g(\underline{s}_k, \overline{s}_\ell) < 0,$$
so the players have no incentive to link.
\end{proof}

Under mild assumptions, stable networks preserve natural cleavages between identifiable types of individuals, and players endogenously organize themselves into disjoint cliques as assumed in group matching models.  Even if the natural cleavages are not so stark, our results show that much of the simplifying structure remains:  individuals can be part of multiple groups, but each group is a clique, and there is a clear ordering among the cliques.  Imposing this slightly weaker assumption in models of group matching may allow for richer analysis while preserving the tractability that comes from group matching assumptions.

\section{Existence}\label{sec:exist}
\hspace{1 pc}
While pairwise stable outcomes exist in all of our applications, we have not yet addressed the general question of the existence of pairwise stable outcomes.  There are two reasons why existence is non-trivial in our setting.  First, the presence or absence of a link is a discrete event, so we cannot use standard arguments that rely on continuity.  Second, pairwise stability requires the absence of profitable joint deviations to form new links.  Nevertheless, there are natural sufficient conditions that ensure existence of pairwise stable outcomes.  In what follows, we assume that players' action sets are complete lattices with order $\geq$.

\begin{definition}\label{def:convexcomp}
A network game with network formation exhibits \textbf{strategic complements} if for any graph $G$, any $s'_i > s_i$, and any $s'_{-i} > s_{-i}$, we have
$$u_i(G, s'_i, s_{-i}) \geq (>) \; u_i(G, s_i, s_{-i}) \quad \implies \quad u_i(G, s'_i, s'_{-i}) \geq (>) \; u_i(G, s_i, s'_{-i}).$$
The game exhibits \textbf{convexity in links} if for any profile $\mathbf{s}$, any graph $G$, any pair $ij$, and any collection of edges $E$, we have
$$\Delta_{ij} u_i(G, \mathbf{s}) \geq (>) \; 0 \quad \implies \quad \Delta_{ij} u_i(G+E, \mathbf{s}) \geq (>) \; 0.$$
\end{definition}

A network game with network formation exhibits strategic complements if, holding the graph fixed, the underlying normal form game exhibits strategic complements.  The definition imposes a single-crossing condition on players' strategies, which implies that best responses are weakly increasing in others' actions.  The game exhibits convexity in links if, holding actions fixed, adding links to the network weakly increases players' incentives to form links.  Note that in all of our examples, linking incentives are independent of $G$ holding others' actions fixed, so this condition trivially holds.

To state our result, we also need to extend the notions of action--link complements/substitutes and positive/negative spillovers to arbitrary games.

\begin{definition}\label{def:linkcompsub}
A network game with network formation exhibits \textbf{action--link complements} if 
$$\Delta_{ij} u_i(G, \mathbf{s}) \geq (>)\; 0 \quad \implies \quad \Delta_{ij} u_i(G, s'_i, s_{-i}) \geq (>)\; 0)$$
whenever $s'_i > s_i$.  The game exhibits \textbf{action--link substitutes} if the above inequality holds whenever $s'_i < s_i$.
\end{definition}

\begin{definition}\label{def:spillovers}
A network game with network formation exhibits \textbf{positive spillovers} if 
$$\Delta_{ij} u_i(G, \mathbf{s}) \geq (>)\; 0 \quad \implies \quad \Delta_{ij} u_i(G, s'_j, s_{-j}) \geq (>)\; 0$$
whenever $s'_j > s_j$.  The game exhibits \textbf{negative spillovers} if the above inequality holds whenever $s'_j < s_j$.
\end{definition}

\begin{Prop}\label{prop:exist1}
Suppose a network game with network formation exhibits strategic complements and convexity in links.  If either
\begin{enumerate}
    \item the game exhibits action--link complements and positive spillovers, or
    \item the game exhibits action--link substitutes and negative spillovers,
\end{enumerate}
then there exists a pairwise stable outcome.  Moreover, the set of graphs that occur in pairwise stable outcomes contains a maximal and minimal element.\footnote{In fact, one can also show using convexity in links that the minimal graph is part of a pairwise Nash stable outcome.}
\end{Prop}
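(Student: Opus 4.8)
The plan is to identify pairwise stable outcomes with fixed points of two monotone operators on the complete lattice of (graph, action-profile) pairs, and to extract the maximal and minimal graphs from the greatest and least fixed points via Tarski's theorem. I would first reduce case (b) to case (a). Replace every action set $S_i$ by the same underlying set equipped with the reverse order; this is again a complete lattice, payoffs are unchanged as functions, and one checks that the hypotheses transform as needed: strategic complements is self-dual (its single-crossing defining condition, weak and strict parts together, is invariant under reversing all orders), convexity in links mentions no actions and is untouched, action--link substitutes becomes action--link complements, and negative spillovers becomes positive spillovers. So it suffices to prove the statement under the hypotheses of case (a): strategic complements, convexity in links, action--link complements, and positive spillovers.

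Order $\mathcal{G}$ by edge inclusion and $\mathcal{S} = \prod_{i} S_i$ coordinatewise, so $\mathcal{G} \times \mathcal{S}$ is a complete lattice. For a graph $G$ and a profile $\mathbf{s}$, let $\beta^{+}_i(G,\mathbf{s})$ and $\beta^{-}_i(G,\mathbf{s})$ denote, respectively, the largest and the smallest best response of player $i$ to $s_{-i}$ in the action game with graph $G$ (these exist under the standard order-semicontinuity of $u_i(G,\cdot,s_{-i})$ on the complete lattice $S_i$; in all of the paper's examples best responses are unique). Define
$$\Psi(G,\mathbf{s}) = \{\, ij : \Delta_{ij} u_i(G,\mathbf{s}) \geq 0 \ \text{ and } \ \Delta_{ij} u_j(G,\mathbf{s}) \geq 0 \,\},$$
the set of potential links that both endpoints weakly want to be present, and set $F^{\pm}(G,\mathbf{s}) = (\Psi(G,\mathbf{s}), \beta^{\pm}(G,\mathbf{s}))$. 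The key claim is that $F^{+}$ and $F^{-}$ are order-preserving. For the action coordinate, $\beta^{\pm}$ is increasing in $\mathbf{s}$ by strategic complements and increasing in $G$ by action--link complements (standard monotone comparative statics). For $\Psi$: it is increasing in $G$ by convexity in links; and it is increasing in $\mathbf{s}$ because raising $s_i$ preserves $\Delta_{ij}u_i \geq 0$ by action--link complements and preserves $\Delta_{ij}u_j \geq 0$ by positive spillovers, and symmetrically for $s_j$ --- while for separable payoffs of the form \eqref{eq:parametricpayoff}, and the extensions discussed after Corollary~\ref{cor:structure}, the marginal value of a link depends only on its two endpoints' actions, so the remaining coordinates of $\mathbf{s}$ are irrelevant. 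By Tarski's theorem each $F^{\pm}$ thus has a nonempty complete lattice of fixed points; let $(\overline{G},\overline{\mathbf{s}})$ be the greatest fixed point of $F^{+}$ (reachable by iterating $F^{+}$ downward from the complete graph together with the top action profile) and $(\underline{G},\underline{\mathbf{s}})$ the least fixed point of $F^{-}$ (reachable by iterating $F^{-}$ upward from the empty graph and the bottom profile).

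Next I relate fixed points to stability. If $(G,\mathbf{s})$ is a fixed point of $F^{+}$ or of $F^{-}$, then $\beta^{\pm}(G,\mathbf{s}) = \mathbf{s}$ says each $s_i$ is a best response to $s_{-i}$, so $\mathbf{s}$ is a Nash equilibrium given $G$; and $\Psi(G,\mathbf{s}) = G$ says that every link of $G$ is weakly desired by both its endpoints (so no player strictly profits from deleting a link) while no non-link is weakly desired by both endpoints (so no pair can weakly profit from adding one). Hence every such fixed point is a pairwise stable outcome, and in particular $(\overline{G},\overline{\mathbf{s}})$ and $(\underline{G},\underline{\mathbf{s}})$ are pairwise stable. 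For extremality, take any pairwise stable outcome $(G,\mathbf{s})$. Since $s_i$ is a best response to $s_{-i}$ we have $\beta^{-}_i(G,\mathbf{s}) \leq s_i \leq \beta^{+}_i(G,\mathbf{s})$ for each $i$; since no one wants to delete a link, $G \subseteq \Psi(G,\mathbf{s})$; and since no pair wants to add a link, $\Psi(G,\mathbf{s}) \subseteq G$. Therefore $(G,\mathbf{s}) \leq F^{+}(G,\mathbf{s})$ and $F^{-}(G,\mathbf{s}) \leq (G,\mathbf{s})$, i.e., $(G,\mathbf{s})$ is a post-fixed point of $F^{+}$ and a pre-fixed point of $F^{-}$. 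By the Knaster--Tarski characterization of the greatest fixed point as the supremum of all post-fixed points, $G \subseteq \overline{G}$; by the dual characterization of the least fixed point, $\underline{G} \subseteq G$. Since $\overline{G}$ and $\underline{G}$ are themselves graphs of pairwise stable outcomes, they are the maximum and the minimum among all graphs occurring in pairwise stable outcomes. Finally, the footnoted strengthening --- that $\underline{G}$ supports a pairwise \emph{Nash} stable outcome --- follows by checking, using convexity in links, that at $(\underline{G},\underline{\mathbf{s}})$ no player can gain by simultaneously severing a set of her links and changing her action.

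The main obstacle is establishing that $\Psi$ is monotone in the action coordinate: one must show that the mutual desirability of a link is not destroyed when the \emph{entire} action profile rises coordinatewise. Action--link complements (governing the link's own endpoint's action) and positive spillovers (governing the other endpoint's action) are precisely the two single-crossing properties that secure this, but the clean argument requires linking incentives to be bilateral enough that third parties' action changes cannot reverse them --- which is why the sharp statement sits most naturally inside the separable class \eqref{eq:parametricpayoff}. A lesser, but genuine, point is the well-definedness of $\beta^{\pm}$, i.e., that best-response sets contain largest and smallest elements, which is a mild order-continuity requirement on payoffs and holds trivially wherever best responses are unique.
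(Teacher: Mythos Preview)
Your approach is the same Tarski-on-$\mathcal{G}\times\mathcal{S}$ strategy as the paper's, but with a different choice of monotone operator. The paper uses $\overline{B}(G,\mathbf{s}) = (\overline{G}(\mathbf{s}), \overline{\mathbf{s}}(G))$, where the graph coordinate is the extremal \emph{pairwise stable graph} given $\mathbf{s}$ (built by an inner add/delete iteration using convexity in links) and the action coordinate is the extremal \emph{Nash equilibrium} given $G$ (via an inner Tarski step using strategic complements); it then applies Tarski to this composite map. You bypass the inner fixed-point constructions with the one-step $F^{\pm}(G,\mathbf{s}) = (\Psi(G,\mathbf{s}), \beta^{\pm}(G,\mathbf{s}))$. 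This buys you a cleaner check that fixed points are pairwise stable, and your explicit appeal to the Knaster--Tarski post/pre-fixed-point characterizations makes the extremality claim more transparent than the paper's bare assertion. The paper's decoupling (graph coordinate depends only on $\mathbf{s}$, action coordinate only on $G$) means it needs only two cross-monotonicity checks rather than your four, but your two extra ones are immediate from convexity in links and strategic complements, so there is no real cost.

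There is one gap, which you share with the paper. You write that $\beta^{\pm}$ is ``increasing in $G$ by action--link complements (standard monotone comparative statics),'' and the paper likewise asserts that the extremal Nash profile is increasing in $G$ via positive spillovers and link--action complements. But Definition~\ref{def:linkcompsub} is single-crossing of the \emph{link value} $\Delta_{ij}u_i$ in $s_i$, not single-crossing of $u_i$ in $(s_i;G)$; the latter is what monotone comparative statics needs for best responses in $s_i$ to shift up when a link is added, and the two directions are not equivalent. Concretely, in the separable class take $S=\{0,1\}$, $v_i(\mathbf{s})=\tfrac12 s_i$, and $g(s_i,s_j)=2-s_i$: action--link complements, positive spillovers, strategic complements, and convexity in links all hold, yet adding a neighbor strictly lowers $i$'s best response, so neither your $F^{+}$ nor the paper's $\overline{B}$ is monotone. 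The fix is to require the symmetric condition---increasing differences between $s_i$ and links, which every application in the paper satisfies---rather than one-sided single crossing as literally defined. (Your separate caveat about third-party actions affecting $\Psi$ is well taken and applies equally to the paper's claim that the graph coordinate is monotone in $\mathbf{s}$.)
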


\begin{proof}
Since the game exhibits strategic complements, for any fixed $G$ there exist minimal and maximal Nash equilibria of the induced normal form game---this follows from standard arguments using Tarski's fixed point theorem.  Likewise, since the game exhibits convexity in links, for any fixed profile $\mathbf{s}$ there exist minimal and maximal pairwise stable graphs.  To get the minimal graph, start from an empty graph and iteratively add links that pairs of players jointly wish to form.  Convexity in links implies that no player will later wish to remove a link that was added earlier, so we must eventually terminate at a stable graph.  Similarly to get the maximal graph, start from a complete graph and iteratively delete links that one of the players wishes to remove.

We now define two maps $\overline{B}(G,\mathbf{s})$ and $\underline{B}(G, \mathbf{s})$ mapping outcomes to outcomes.  Let $\overline{B}(G, \mathbf{s})$ return an outcome $(\overline{G}, \overline{\mathbf{s}})$ in which $\overline{G}$ is the maximal pairwise stable graph given $\mathbf{s}$, and $\overline{\mathbf{s}}$ is the maximal Nash equilibrium given $G$.  Similarly, let $\underline{B}(G, \mathbf{s})$ return an outcome $(\underline{G}, \underline{\mathbf{s}})$ in which $\underline{G}$ is the minimal pairwise stable graph given $\mathbf{s}$, and $\underline{\mathbf{s}}$ is the minimal Nash equilibrium given $G$.

In case (a), positive spillovers and action--link complements imply that the graphs $\overline{G}$ in $\overline{B}(G,\mathbf{s})$ and $\underline{G}$ in $\underline{B}(G, \mathbf{s})$ are weakly increasing in $\mathbf{s}$---holding the rest of the graph fixed, higher $\mathbf{s}$ makes link $ij$ more desirable to both player $i$ and player $j$.  Similarly, positive spillovers and action--link complements imply that the profiles $\overline{\mathbf{s}}$ in $\overline{B}(G, \mathbf{s})$ and $\underline{\mathbf{s}}$ in $\underline{B}(G,\mathbf{s})$ are weakly increasing in $G$.  This means that both $\overline{B}$ and $\underline{B}$ are monotone maps with respect to the natural product order on $\mathcal{G} \times \mathcal{S}$, so Tarski's theorem implies minimal and maximal fixed points exist for both---the maximal fixed point of $\overline{B}$ is the maximal pairwise stable outcome, and the minimal fixed point of $\underline{B}$ is the minimal pairwise stable outcome.

Case (b) follows from similar reasoning after reversing the order on action profiles.  Negative spillovers and action--link substitutes implies that the graphs in $\overline{B}(G, \mathbf{s})$ and $\underline{B}(G, \mathbf{s})$ are weakly decreasing in $\mathbf{s}$, and the profiles are weakly decreasing in $G$, so we can again apply Tarski's theorem.\end{proof}

\cref{prop:exist1} only applies within two out of four cells for the class of games in \cref{sec:class}.  In general, we cannot ensure existence for the other two cells, as the following example illustrates.  Suppose there are two players with a common action set $S = \{0,1\}$ and payoffs
\[
u_i(G, \mathbf{s}) = \begin{cases} s_i & \text{if } G \text{ is empty} \\ 2s_{-i} + \frac{2 s_i s_{-i} - 1}{4} - s_i & \text{if } G \text{ is complete.} \end{cases}
\]
For player $i$, the marginal value of a link to player $-i$ is
$$2(s_{-i} - s_i) + \frac{2 s_i s_{-i} - 1}{4}.$$
This is increasing in $s_{-i}$, and for the relevant range of values it is decreasing in $s_i$---the game exhibits positive spillovers and action--link substitutes.  Moreover, it should be clear that, if the two players are linked, the game exhibits strategic complements.  Nevertheless, there is no pairwise stable outcome.  In the empty graph, each player optimally takes action $1$, and the marginal value of adding the link between them is $\frac{1}{4} > 0$, so they should form the link.  In the complete graph, each player optimally takes action $0$, and the marginal value of the link is then $-\frac{1}{4} < 0$, so they should each drop the link.

Even though existence in not always assured, the structural result in \cref{cor:structure} greatly simplifies the process of searching for a stable outcome.  As we have already seen in three applications, starting with a clique structure and checking whether it is stable often provides a simple way to establish existence.

We note that our existence result extends the main finding in \citet*{Hellmann2013}, obtained in a setting with network formation only. The paper shows that pairwise stable graphs exist if payoffs are convex in own links, and others' links are complements to own links.  These conditions are jointly equivalent to convexity in links in \cref{def:convexcomp}.

\section{Discussion} \label{sec:discussion}

\subsection{Modeling choices in our solution concept}\label{sec:foundations}

\paragraph{Why consider link and action deviations separately?} Our notion of pairwise (Nash) stability is a static solution concept, with stability being defined by the absence of particular individual and pairwise deviations.  Players consider link deviations holding actions fixed and action deviations holding links fixed. As we have already noted, none of our predictions about structure would change if we allowed more deviations---for instance, if players could simultaneously revise both actions and links.  As long as players are \emph{allowed} to contemplate revising links and actions separately, the proofs implying the structured forms in our main results  go through.

To the extent that the specific choice of solution concept does matter, we believe there are good methodological reasons to adopt our definitions. In particular, this choice offers the most natural extension of Nash equilibrium in actions to games in which players jointly take actions and form links.  Suppose we permitted arbitrary joint deviations by any pair of players in both their links and actions, so that, e.g., two players forming a link could jointly agree to change their actions as well.\footnote{Or perhaps only players with an existing or newly forming link would be allowed to revise actions, capturing the the idea that links allow coalitional behavior, in the spirit of \citet{aumann1988endogenous}.}  If we permitted such deviations, the solution concept would no longer specialize to Nash equilibrium when we fix the graph.  While more cooperative solution concepts are worth exploring, our goal is to provide a simple and portable extension of the large literature on Nash equilibrium in network games, and our choice of solution concept is critical for this connection. Indeed, our solution concept is the ``minimal'' extension of both Nash equilibrium in network games and pairwise stability in network formation.

Moreover, even though pairwise stability for linking is defined in a cooperative manner---permitting joint deviations by two players---one can often obtain the same prediction by appealing to variations of trembling hand perfection in a non-cooperative link announcement game \citep{CalvoArmengolIlkilic2009,IlkilicIkizler2019}.\footnote{The idea is that two players who both want a link that is absent should each offer it, just in case another player trembles and accepts; incentives will be driven by single-link additions, since these are the overwhelmingly likely consequences of trembles.} Applying this approach to our setting would entail an equilibrium notion defined by the deviations we have studied, rather than richer multilateral deviations.   %
Still, one might ask:  Why not allow a broader set of unilateral deviations, allowing players to change actions and drop links at the same time?  This is precisely what pairwise Nash stability does, and whenever pairwise Nash stable outcomes exist, all of our results apply.

\paragraph{Why not have a stage of network formation prior to action choice?}

We could have taken a different modeling approach, in which a network formation stage would occur before players take actions. After network formation in an initial period $t=1$, a network game of effort choice (without any link choices) would occur at time $t=2$. Equilibrium play at $t=2$ would determine payoffs, and one could then apply a concept such as pairwise stability to make predictions about network formation.

This timing entails an important substantive assumption:  Agents have commitment power to maintain, or to refuse, links.  This commitment power clearly has strategic implications, and models with it will yield different predictions.  In practice, such commitment power over links after actions are taken seems absent in relevant settings (e.g., friendship formation). Players constantly have opportunities to revise both links and actions, and our solution concept reflects this.

\paragraph{Richer dynamic models}

As we have noted, our main goal is to extend the useful and versatile existing static models of network games and network formation. An alternative approach would consider the revision of actions and links in some kind of dynamic process. For instance, a player revising a link might plan on opponents changing actions in response to that, etc. 

Static models avoid this. The definition of a Nash equilibrium in a static game does not consider further responses after a potential deviation, and neither does our generalization for network games with network formation. Dynamic models that contemplate farsighted revision sequences often end up being very sensitive to the exact modeling of the timing and other aspects of revision opportunities. This presents theoretical challenges that have proved formidable in related settings, and so a static benchmark model generalizing standard static solution concepts is the natural place to start the analysis.

\subsection{More complex payoffs and network structures}
\label{sec:more_complex_payoffs}
\hspace{1 pc}
Our predictions about the structure of stable networks are stark. Real networks are typically not organized precisely into ordered cliques, nor are neighborhoods perfectly ordered by set inclusion.  Nevertheless, our results provide a benchmark for analyzing the qualitative interplay between equilibrium in actions and stable network formation.  In this section we discuss several natural directions to extend our analysis.  

\paragraph{Vertical and horizontal heterogeneity} As  \Cref{sec:ordinal_assumptions} demonstrated, conditions that make the heterogeneity among players ``vertical'' in nature are central to our results. We showed that in a class of environments, stark orderings of linking incentives are implied by natural single-crossing conditions---for instance, when the action of interest (e.g., greater study effort) increases one's attractiveness to any partner. 

Nevertheless, in groups with considerable horizontal heterogeneity, we would not expect consistency and alignment to hold globally; different subgroups might rank partners quite differently.  We leave such analysis for future work because the ``purely vertical'' setting is rich in its own right and can, on its own, offer new explanations of important phenomena, such as the perverse effects of group design.   \citet{Sadler2023} extends the theory a more general ``vertical'' model that accommodates convexity in linking costs, which is absent from our separable games model.

Handling more heterogeneity is an important frontier. Within the framework of the current paper, we can move in that direction by studying notions of consistency and alignment restricted to subsets of agents. For example, consider a group of people sharing some covariates (e.g., demographics), who may be embedded in a larger network. Consistency and alignment---with the ordering corresponding, e.g., to effort at work---may hold when restricted to agents in this group, even if these properties do not hold globally. For instance, we could relax anonymity in the separable games framework, allowing the payoffs from a neighbor to depend on both parties' demographic attributes; group-restricted  consistency and alignment would then follow from single-crossing properties conditional on group membership. Then our results would immediately characterize the network structure within the group. Thus, we see our contribution as highlighting some stark consequences of vertical orderings in linking incentives, which will also have implications in richer environments. 

\paragraph{Multiplex networks} Another approach to modeling more realistic networks is to layer different relationships on top of one another in a ``multiplex'' network---rigid patterns across different layers can combine to form more realistic arrangements. Consider a simple example with two activities: work on the weekdays---in which the activity is production---and religious services on the weekends---in which the activity is attendance and engagement.  Both entail positive spillovers, but work exhibits action--link substitutes---forming friendships takes time that could be devoted to production---while church exhibits action--link complements---attendance makes it easier to form ties. Assuming there is enough heterogeneity in ability or preferences, a non-trivial network will form through each activity. In the work network, we get ordered cliques. In the church network, we get a nested split graph, with the more committed members serving as a core connecting all others. Layering these networks on top of each other can produce a complex network with aspects of both ``centralization,'' mediated by the weekend ties, and homophily, driven by the work ties. This description ties into Simmel's account, subsequently developed by many scholars, of cross-cutting cleavages.

\paragraph{Stochastic payoffs} A third approach is to introduce noise.  \citet{Konigetal2014} provide an example, describing a dynamic process in which agents either add or delete one link at a time, and the underlying incentives exhibit positive spillovers and action--link complements.  If agents always make the myopically optimal link change, the graph is a nested split graph at every step of the process.  However, if agents sometimes make sub-optimal changes, then all graphs appear with positive probability, but the distribution is still heavily skewed towards those with a nested structure.  This allows the authors to fit the model to real-world data.  Based on our analysis, one could adapt this model to study peer effects or status games, and under suitable assumptions obtain a noisy version of our ordered cliques prediction. More generally, one could define stochastic generalizations of our ordinal properties and study the implications for the distribution over stable network structures.

\section{Related work} \label{sec:related}

\hspace{1 pc}
Our analysis sits at the intersection of two strands of work in network theory: games on fixed networks and strategic network formation.  It is most closely related to research that synthesizes the two.  

\paragraph{Network games.} Within the network games literature, the most widely used and tractable models feature real-valued actions and linear best replies or similarly tractable parameterizations \citep{Ballesteretal2006, BramoulleKranton2007,Bramoulleetal2014}. Our main results derive predictions from order conditions on the payoff functions and the action space, rather than particular functional forms.  Nevertheless, canonical models from the above-mentioned literature provide important leading examples.  See \citet*{Sadler2020a} for ordinal characterizations in a setting with exogenous networks. 

\paragraph{Network formation without action choices.} Among models in which the only endogenous choice is linking, \citet*{Hellmann2013,Hellmann2020} has some of the most closely related results.  Our existence result in \cref{sec:exist} directly extends that in \citet{Hellmann2013}, and \citet*{Hellmann2020} predicts nested split graphs in a class of network formation models that falls within one of our four cases.\footnote{Relatedly, \citet{JacksonWatts2001} provide a general existence result for pairwise stable graphs based on a potential function, and \citet*{ChakrabartiGilles2007} extend this analysis.  One can formulate similar results in our setting.}  The earliest precedent for ordered overlapping cliques comes from \citet*{JohnsonGilles2000}, in which they are called ``locally complete.'' These graphs arise as pairwise stable outcomes, but the mechanism---quite different from ours---is that players in a ``spatial connections model'' have an assumed preference to link with those who are geographically closer in one-dimensional space. In contrast, we assume a common ranking of the most desirable partners. Most importantly, our work differs from this strand in that preferences over links come from equilibrium payoffs of a network game rather than exogenous externalities of links themselves.\footnote{See, e.g., \citet*{JacksonWolinsky1996} and \citet*{GoyalJoshi2006} for canonical analyses with exogenous link payoffs.}

More recently, \citet*{Sadler2023} builds on an earlier version of the present paper, characterizing stable network structures based on ordinal payoff properties similar to those used here.  A key difference is that \citet{Sadler2023} explicitly focuses on payoffs with convex linking costs, which necessitates using a refinement of pairwise stability.  Convex costs may preclude nested split graphs as these networks require some players to maintain a large number of links, so predictions necessarily differ.

\paragraph{Network formation with endogenous actions.} Thus far, the literature on strategic choice of effort (or related actions) in endogenous networks is small.  \cite{jackson2002formation} observed that combining network formation and action choice could change predictions relative to those implied by studying either dimension separately,\footnote{They demonstrated this in a two-action coordination game with linear linking costs, focusing on stochastically stable play.} while \cite{Cabralesetal2011} studied network formation with a single global search investment in a setting interpreted as a large anonymous community---quite different from the small-group settings to which our analysis is suited.  Many subsequent papers made an important simplification in assuming players form links \emph{unilaterally}, in contrast with our model requiring mutual consent. For instance, in \citet{GaleottiGoyal2010}, players unilaterally form links, the proposer of a link incurs the cost, and a public goods game is subsequently played.  Equilibrium networks involve a core-periphery structure, and ex-ante identical players endogenously specialize---some invest in information and others invest in links.  Similarly, the model of \citet*{Baetz2015} entails unilateral link formation together with a linear-quadratic game of strategic complements, leading to strongly hierarchical network structures.  One key difference is that decreasing marginal returns to linking cause those at the top of the hierarchy to refrain from linking with one another.  In \citet*{HerskovicRamos2020}, agents receive exogenous signals and form links to observe others' signals, and they subsequently play a beauty contest game. A player whose signal is observed by many others exerts greater influence on the average action, which in turn makes this signal more valuable to observe.  The equilibrium networks again have a hierarchical structure closely related to nested split graphs. %

\citet*{JoshiMahmud2016}, \citet*{Hiller2017}, and \citet*{Badev2021} are closer to our approach.  \citet{Badev2021} studies a binary action coordination game with endogenous link formation, proposing a solution concept that interpolates between pairwise stability and pairwise Nash stability. This is a parametric model for estimation and simulation procedures in a high-dimensional environment; the goal is to study empirical counterfactuals rather than deriving theoretical results.  \citet{Hiller2017} studies a game in which each player chooses a real-valued effort level and simultaneously proposes a set of links---a link forms if and only if both players propose it.  The author then refines the set of Nash equilibria by ruling out pairwise deviations in which two players create a link between them and simultaneously adjust their actions.  In the underlying game, players have symmetric payoff functions that exhibit strategic complements and positive spillovers.  The setting therefore falls into the first cell in our table, and even though the solution concept differs slightly from ours, the resulting outcomes are indeed nested split graphs.\footnote{Within the pure network formation literature, without action choice, \citet{Hellmann2020} studies a network formation game in which all players are ex-ante identical and uses order properties of the payoff functions to characterize the architecture of stable networks.  A key result shows that if more central players are more attractive linking partners, then stable networks are nested split graphs.  By specifying an appropriate network game, one can view this finding as a special case of the action--link complements and positive spillovers cell in our table.}  \citet{JoshiMahmud2016} take a somewhat different approach, modeling link proposals followed by action choices in a canonical linear-quadratic game.  Their analysis includes both local and global interaction terms, but it still produces nested split graphs in the relevant cells.   Relative to this work, which corresponds to the nested split graph cells of our taxonomy, we significantly relax parametric and symmetry assumptions on players' payoffs, highlighting more fundamental properties that lead to this structure. 

In a related but distinct effort, \citet{Konigetal2014} study a dynamic network formation model in which agents choose strategic actions and myopically add and delete links.  Motivated by observed patterns in inter-bank lending and trade networks, the authors seek to explain the prevalence of hierarchical, nested structures.  The underlying incentives satisfy positive spillovers and action--link complements, and accordingly the stochastically stable outcomes are nested split graphs. Most recently,  \citet*{CerreiaVoglio2023Nonlinear} apply our notion of action-link substitutes in coordination games where there are neither positive nor negative spillovers. Using nonlinear fixed-point theory, they are able to characterize equilibrium links and actions in high-coordination limits. 

Our results on the endogenous emergence of ordered 
 overlapping cliques constitute a new substantive prediction, one that plays a central role in our applications inspired by \citet{Carrelletal2013} and \citet{Immorlicaetal2017}. Similarly motivated by \citet{Carrelletal2013},  \citet*{bolletta2021model} argued that such a structure arises in a model of players forming a pairwise stable network and then playing a coordination game on it, but this turned out to be false \citep*{golub2023difficulty}, and a tractable approach to endogenous fragmentation in such settings has remained an open problem.  Cliques that sort individuals on endogenous outcomes have appeared in quite different settings---typically in models involving opinion formation, a horizontal attribute. In those models, agents by assumption simply do not like to be friends with those whose opinions are too far from their own.\footnote{See, for instance, \citet*{holme2006nonequilibrium}, \citet*{kivinen2017polarization}, and \citet*{polanski2023homophily}.} The analysis is importantly different when, as in our setting, players share a common ranking of partners' desirability. 
 
    \paragraph{Work related to our applications} Several other literatures connect to our applications, as we have mentioned throughout in the context of \citet{Carrelletal2013} and \citet{Immorlicaetal2017}. \cite*{ghiglino2010keeping} is an important antecedent on status games with exogenous networks; see also \cite*{bramoulle2022loss} for a recent contribution.  For two of the cells in \cref{tab:2by2}, our results state that stable structures consist of ordered cliques, and the members of a clique share similar attributes. In some cases, these cliques are disjoint.  One can view this result as providing a microfoundation for group matching models.  In these models, players choose what group to join rather than what links to form, so it is assumed ex ante that the graph is a collection of disjoint cliques.  For instance, \citet{baccara2013homophily} study a model in which players choose to join clubs (i.e., cliques) before investing in club goods, finding that stable clubs exhibit homophily.\footnote{In other related work, \citet*{bandyopadhyay2020pricing} study pricing for group membership in a similar setting, and \citet{chade2018matching} study the allocation of experts to disjoint teams.}  Our analysis extends this finding, and one can use our results to find conditions under which the group matching assumption is without loss of generality.

\section{Final remarks}

\hspace{1 pc}
From academic peer effects to social status to trading networks, the connections people and firms choose to form affect the strategic actions they take and vice versa. Sound behavioral predictions and policy recommendations depend on taking these interactions into account, rather than studying each aspect separately. We offer a flexible framework that unites two types of models and solution concepts, pertaining to strategic actions and link choice. This unification enriches what we can capture and also supports tractable characterizations of equilibrium network structures and behavior.  Several important applications fit within this framework, and we highlight new insights that emerge from applying our results. 

One key point---for which our applications serve as a proof-of-concept---is that the structural predictions the theory offers greatly reduce the space of possible networks, as well as the actions they can support. This is crucial for the tractability of both numerical calculations and theoretical analyses of how equilibria and welfare depend on the environment.

Our framework is best suited to small communities with mainly ``vertical'' heterogeneity. The  conditions we impose rule out, e.g., significant  differences in linking costs due to geography, or convexities in linking costs that are likely to arise in a large group.\footnote{\citet{Sadler2023} undertakes a natural generalization to convex linking costs.} Our taxonomy and applications show that the vertical model delivers some interesting forces on its own. We have also discussed elaborations of the framework that can extend the insights to other settings. First, the characterizations we have derived apply if the conditions on payoffs hold restricted to a suitably defined group of players. For instance, we could restrict attention to a group of students that have the same demographics, with the main remaining heterogeneity being a vertical ability or activity level.  A different extension concerns overlaying different networks corresponding to different kinds of relationships. This raises issues about interactions among relationships studied in the literature on multiplex or multilayer networks. A final promising extension focuses on introducing noise in incentives or link measurements, which connects to the econometrics of networks.

{\footnotesize \singlespacing
\bibliographystyle{ecta}
\bibliography{refs}

\begin{thebibliography}{47}
\newcommand{\enquote}[1]{``#1''}
\expandafter\ifx\csname natexlab\endcsname\relax\def\natexlab#1{#1}\fi

\bibitem[\protect\citeauthoryear{Adler and Adler}{Adler and Adler}{1995}]{adler1995dynamics}
\textsc{Adler, P.~A. and P.~Adler} (1995): \enquote{Dynamics of inclusion and exclusion in preadolescent cliques,} \emph{Social Psychology Quarterly}, 145--162.

\bibitem[\protect\citeauthoryear{Aumann and Myerson}{Aumann and Myerson}{1988}]{aumann1988endogenous}
\textsc{Aumann, R. and R.~Myerson} (1988): \enquote{Endogenous Formation of Links Between Players and Coalitions: An Application of the Shapley Value,} in \emph{The Shapley Value}, ed. by A.~Roth, Cambridge University Press, 175--191.

\bibitem[\protect\citeauthoryear{Baccara and Yariv}{Baccara and Yariv}{2013}]{baccara2013homophily}
\textsc{Baccara, M. and L.~Yariv} (2013): \enquote{Homophily in peer groups,} \emph{American Economic Journal: Microeconomics}, 5, 69--96.

\bibitem[\protect\citeauthoryear{Badev}{Badev}{2021}]{Badev2021}
\textsc{Badev, A.} (2021): \enquote{{Nash equilibria on (un)stable networks},} \emph{Econometrica}, 89, 1179--1206.

\bibitem[\protect\citeauthoryear{Baetz}{Baetz}{2015}]{Baetz2015}
\textsc{Baetz, O.} (2015): \enquote{{Social activity and network formation},} \emph{Theoretical Economics}, 10, 315--340.

\bibitem[\protect\citeauthoryear{Ballester, Calv\'{o}-Armengol, and Zenou}{Ballester et~al.}{2006}]{Ballesteretal2006}
\textsc{Ballester, C., A.~Calv\'{o}-Armengol, and Y.~Zenou} (2006): \enquote{{Who's who in networks. Wanted: the key player},} \emph{Econometrica}, 74, 1403--1417.

\bibitem[\protect\citeauthoryear{Bandyopadhyay and Cabrales}{Bandyopadhyay and Cabrales}{2020}]{bandyopadhyay2020pricing}
\textsc{Bandyopadhyay, S. and A.~Cabrales} (2020): \enquote{Pricing group membership,} CESifo Working Paper.

\bibitem[\protect\citeauthoryear{Belhaj, Bervoets, and Dero{\"\i}an}{Belhaj et~al.}{2016}]{belhaj2016efficient}
\textsc{Belhaj, M., S.~Bervoets, and F.~Dero{\"\i}an} (2016): \enquote{Efficient networks in games with local complementarities,} \emph{Theoretical Economics}, 11, 357--380.

\bibitem[\protect\citeauthoryear{Bloch and Jackson}{Bloch and Jackson}{2006}]{bloch2006definitions}
\textsc{Bloch, F. and M.~O. Jackson} (2006): \enquote{Definitions of equilibrium in network formation games,} \emph{International Journal of Game Theory}, 34, 305--318.

\bibitem[\protect\citeauthoryear{Bolletta}{Bolletta}{2021}]{bolletta2021model}
\textsc{Bolletta, U.} (2021): \enquote{A model of peer effects in school,} \emph{Mathematical Social Sciences}, 114, 1--10.

\bibitem[\protect\citeauthoryear{Bramoull{\'e} and Ghiglino}{Bramoull{\'e} and Ghiglino}{2022}]{bramoulle2022loss}
\textsc{Bramoull{\'e}, Y. and C.~Ghiglino} (2022): \enquote{Loss aversion and conspicuous consumption in networks,} CEPR Discussion Paper No. DP17181.

\bibitem[\protect\citeauthoryear{Bramoull\'{e} and Kranton}{Bramoull\'{e} and Kranton}{2007}]{BramoulleKranton2007}
\textsc{Bramoull\'{e}, Y. and R.~Kranton} (2007): \enquote{{Public goods in networks},} \emph{Journal of Economic Theory}, 135, 478--494.

\bibitem[\protect\citeauthoryear{Bramoull\'{e}, Kranton, and {D'Amours}}{Bramoull\'{e} et~al.}{2014}]{Bramoulleetal2014}
\textsc{Bramoull\'{e}, Y., R.~Kranton, and M.~{D'Amours}} (2014): \enquote{{Strategic interaction and networks},} \emph{American Economic Review}, 104, 898--930.

\bibitem[\protect\citeauthoryear{Cabrales, {Calv\'{o}-Armengol}, and Zenou}{Cabrales et~al.}{2011}]{Cabralesetal2011}
\textsc{Cabrales, A., A.~{Calv\'{o}-Armengol}, and Y.~Zenou} (2011): \enquote{{Social interactions and spillovers},} \emph{Games and Economic Behavior}, 72, 339--360.

\bibitem[\protect\citeauthoryear{{Calv\'{o}-Armengol} and \.{I}lk{\i}l{\i}\c{c}}{{Calv\'{o}-Armengol} and \.{I}lk{\i}l{\i}\c{c}}{2009}]{CalvoArmengolIlkilic2009}
\textsc{{Calv\'{o}-Armengol}, A. and R.~\.{I}lk{\i}l{\i}\c{c}} (2009): \enquote{{Pairwise-Stability and Nash Equilibrium in Network Formation},} \emph{International Journal of Game Theory}, 38, 51--79.

\bibitem[\protect\citeauthoryear{Carrell, Sacerdote, and West}{Carrell et~al.}{2013}]{Carrelletal2013}
\textsc{Carrell, S., B.~Sacerdote, and J.~West} (2013): \enquote{{From natural variation to optimal policy? The importance of endogenous peer group formation},} \emph{Econometrica}, 81, 855--882.

\bibitem[\protect\citeauthoryear{Cerreia-Vioglio, Corrao, and Lanzani}{Cerreia-Vioglio et~al.}{2023}]{CerreiaVoglio2023Nonlinear}
\textsc{Cerreia-Vioglio, S., R.~Corrao, and G.~Lanzani} (2023): \enquote{Nonlinear Fixed Points and Stationarity: Economic Applications,} Working Paper, MIT, available at \url{https://economics.mit.edu/sites/default/files/inline-files/POSTING.pdf}.

\bibitem[\protect\citeauthoryear{Chade and Eeckhout}{Chade and Eeckhout}{2018}]{chade2018matching}
\textsc{Chade, H. and J.~Eeckhout} (2018): \enquote{Matching information,} \emph{Theoretical Economics}, 13, 377--414.

\bibitem[\protect\citeauthoryear{Chakrabarti and Gilles}{Chakrabarti and Gilles}{2007}]{ChakrabartiGilles2007}
\textsc{Chakrabarti, S. and R.~Gilles} (2007): \enquote{{Network potentials},} \emph{Review of Economic Design}, 11, 13--52.

\bibitem[\protect\citeauthoryear{Chandrasekhar and Jackson}{Chandrasekhar and Jackson}{2021}]{chandrasekhar2021network}
\textsc{Chandrasekhar, A.~G. and M.~O. Jackson} (2021): \enquote{A network formation model based on subgraphs,} \emph{Available at SSRN 2660381}.

\bibitem[\protect\citeauthoryear{Davis and Leinhardt}{Davis and Leinhardt}{1971}]{davis1967structure}
\textsc{Davis, J.~A. and S.~Leinhardt} (1971): \enquote{The structure of positive interpersonal relations in small groups,} in \emph{Sociological Theories in Progress}, ed. by J.~Berger, M.~Zelditch, and B.~Anderson, Houghton-Mifflin, vol.~2.

\bibitem[\protect\citeauthoryear{Galeotti and Goyal}{Galeotti and Goyal}{2010}]{GaleottiGoyal2010}
\textsc{Galeotti, A. and S.~Goyal} (2010): \enquote{{The law of the few},} \emph{American Economic Review}, 100, 1468--1492.

\bibitem[\protect\citeauthoryear{Gest, Davidson, Rulison, Moody, and Welsh}{Gest et~al.}{2007}]{gest2007features}
\textsc{Gest, S.~D., A.~J. Davidson, K.~L. Rulison, J.~Moody, and J.~A. Welsh} (2007): \enquote{Features of groups and status hierarchies in girls' and boys' early adolescent peer networks,} \emph{New Directions for Child and Adolescent Development}, 2007, 43--60.

\bibitem[\protect\citeauthoryear{Ghiglino and Goyal}{Ghiglino and Goyal}{2010}]{ghiglino2010keeping}
\textsc{Ghiglino, C. and S.~Goyal} (2010): \enquote{Keeping up with the neighbors: social interaction in a market economy,} \emph{Journal of the European Economic Association}, 8, 90--119.

\bibitem[\protect\citeauthoryear{Golub, Hsieh, and Sadler}{Golub et~al.}{2023}]{golub2023difficulty}
\textsc{Golub, B., Y.-C. Hsieh, and E.~Sadler} (2023): \enquote{On the difficulty of characterizing network formation with endogenous behavior,} \emph{Mathematical Social Sciences}.

\bibitem[\protect\citeauthoryear{Goyal and Joshi}{Goyal and Joshi}{2006}]{GoyalJoshi2006}
\textsc{Goyal, S. and S.~Joshi} (2006): \enquote{{Unequal connections},} \emph{International Journal of Game Theory}, 34, 319--349.

\bibitem[\protect\citeauthoryear{Hellmann}{Hellmann}{2013}]{Hellmann2013}
\textsc{Hellmann, T.} (2013): \enquote{{On the existence and uniqueness of pairwise stable networks},} \emph{International Journal of Game Theory}, 42, 211--237.

\bibitem[\protect\citeauthoryear{Hellmann}{Hellmann}{2020}]{Hellmann2020}
---\hspace{-.1pt}---\hspace{-.1pt}--- (2020): \enquote{{Pairwise stable networks in homogeneous societies with weak link externalities},} Forthcoming in European Journal of Operational Research.

\bibitem[\protect\citeauthoryear{Herskovic and Ramos}{Herskovic and Ramos}{2020}]{HerskovicRamos2020}
\textsc{Herskovic, B. and J.~Ramos} (2020): \enquote{{Acquiring information through peers},} \emph{American Economic Review}, 110, 2128--2152.

\bibitem[\protect\citeauthoryear{Hiller}{Hiller}{2017}]{Hiller2017}
\textsc{Hiller, T.} (2017): \enquote{{Peer effects in endogenous networks},} \emph{Games and Economic Behavior}, 105, 349--367.

\bibitem[\protect\citeauthoryear{Holme and Newman}{Holme and Newman}{2006}]{holme2006nonequilibrium}
\textsc{Holme, P. and M.~E. Newman} (2006): \enquote{Nonequilibrium phase transition in the coevolution of networks and opinions,} \emph{Physical Review E}, 74, 056108.

\bibitem[\protect\citeauthoryear{Homans}{Homans}{1950}]{homans1950human}
\textsc{Homans, G.~C.} (1950): \emph{The Human Group}, New York: Harcourt.

\bibitem[\protect\citeauthoryear{\.{I}lk{\i}l{\i}\c{c} and \.{I}kizler}{\.{I}lk{\i}l{\i}\c{c} and \.{I}kizler}{2019}]{IlkilicIkizler2019}
\textsc{\.{I}lk{\i}l{\i}\c{c}, R. and H.~\.{I}kizler} (2019): \enquote{{Equilibrium Refinements for the Network Formation Game},} \emph{Review of Economic Design}, 23, 13--25.

\bibitem[\protect\citeauthoryear{Immorlica, Kranton, Manea, and Stoddard}{Immorlica et~al.}{2017}]{Immorlicaetal2017}
\textsc{Immorlica, N., R.~Kranton, M.~Manea, and G.~Stoddard} (2017): \enquote{{Social status in networks},} \emph{American Economic Journal: Microeconomics}, 9, 1--30.

\bibitem[\protect\citeauthoryear{Jackson}{Jackson}{2008}]{jackson2008}
\textsc{Jackson, M.} (2008): \emph{{Social and Economic Networks}}, Princeton University Press.

\bibitem[\protect\citeauthoryear{Jackson and Watts}{Jackson and Watts}{2001}]{JacksonWatts2001}
\textsc{Jackson, M. and A.~Watts} (2001): \enquote{{The existence of pairwise stable networks},} \emph{Seoul Journal of Economics}, 14, 299--321.

\bibitem[\protect\citeauthoryear{Jackson and Wolinsky}{Jackson and Wolinsky}{1996}]{JacksonWolinsky1996}
\textsc{Jackson, M. and A.~Wolinsky} (1996): \enquote{{A strategic model of social and economic networks},} \emph{Journal of Economic Theory}, 71, 44--74.

\bibitem[\protect\citeauthoryear{Jackson}{Jackson}{2019}]{jackson2019friendship}
\textsc{Jackson, M.~O.} (2019): \enquote{The friendship paradox and systematic biases in perceptions and social norms,} \emph{Journal of Political Economy}, 127, 777--818.

\bibitem[\protect\citeauthoryear{Jackson and Watts}{Jackson and Watts}{2002}]{jackson2002formation}
\textsc{Jackson, M.~O. and A.~Watts} (2002): \enquote{On the formation of interaction networks in social coordination games,} \emph{Games and Economic Behavior}, 41, 265--291.

\bibitem[\protect\citeauthoryear{Johnson and Gilles}{Johnson and Gilles}{2000}]{JohnsonGilles2000}
\textsc{Johnson, C. and R.~Gilles} (2000): \enquote{{Spatial social networks},} \emph{Review of Economic Design}, 5, 273--299.

\bibitem[\protect\citeauthoryear{Joshi and Mahmud}{Joshi and Mahmud}{2016}]{JoshiMahmud2016}
\textsc{Joshi, S. and A.~Mahmud} (2016): \enquote{{Network formation under multiple sources of externalities},} \emph{Journal of Public Economic Theory}, 18, 148--167.

\bibitem[\protect\citeauthoryear{Kivinen}{Kivinen}{2017}]{kivinen2017polarization}
\textsc{Kivinen, S.} (2017): \enquote{Polarization in strategic networks,} \emph{Economics Letters}, 154, 81--83.

\bibitem[\protect\citeauthoryear{K\"{o}nig, Tessone, and Zenou}{K\"{o}nig et~al.}{2014}]{Konigetal2014}
\textsc{K\"{o}nig, M., C.~Tessone, and Y.~Zenou} (2014): \enquote{{Nestedness in networks: a theoretical model and some applications},} \emph{Theoretical Economics}, 9, 695--752.

\bibitem[\protect\citeauthoryear{Kranton and Minehart}{Kranton and Minehart}{2001}]{kranton2001theory}
\textsc{Kranton, R.~E. and D.~F. Minehart} (2001): \enquote{A theory of buyer-seller networks,} \emph{American economic review}, 91, 485--508.

\bibitem[\protect\citeauthoryear{Polanski and Vega-Redondo}{Polanski and Vega-Redondo}{2023}]{polanski2023homophily}
\textsc{Polanski, A. and F.~Vega-Redondo} (2023): \enquote{Homophily and influence,} \emph{Journal of Economic Theory}, 207, 105576.

\bibitem[\protect\citeauthoryear{Sadler}{Sadler}{2022}]{Sadler2020a}
\textsc{Sadler, E.} (2022): \enquote{{Ordinal centrality},} \emph{Journal of Political Economy}, 130, 926--955.

\bibitem[\protect\citeauthoryear{Sadler}{Sadler}{2023}]{Sadler2023}
---\hspace{-.1pt}---\hspace{-.1pt}--- (2023): \enquote{{Making a swap: network formation with increasing marginal costs},} Working Paper, available at SSRN 4294169.

\end{thebibliography}
}

\newpage
{
\appendix

\section{Omitted proofs and details}

\subsection{Formalization of anonymity and separability conditions} \label{sec:formalization_payoff}

We introduce the key payoff conditions. For this purpose, we fix a numerical representation $u_i(G,\mathbf{s})$ for each player's preferences over outcomes (which are taken to be complete and transitive), which has both ordinal and cardinal content (insofar as it captures the intensity of preferences for various links, or equivalently preferences over lotteries).

\begin{definition} Let $\sigma$ be a permutation of the agents and let $G_\sigma$ and $\mathbf{s}_{\sigma}$ denote the graph and action profile after agents are relabeled according to $\sigma$.\footnote{So that $G_{\sigma}$ has link $\sigma(i)\sigma(j)$ if and only if $G$ has link $ij$, and $s_{\sigma(i)}=s_{i}$.}  Linking incentives are \textbf{anonymous} if incentives to form links do not depend on players' labels: player $i$ strictly prefers to add  link $ij$ at $(G,\mathbf{s})$ if and only if agent $\sigma(i)$ strictly prefers to add link $\sigma(i)\sigma(j)$ at  $(G_\sigma,\mathbf{s}_\sigma)$. \end{definition}

Our next definition, of separability, posits that the incremental values of links are separable.\footnote{This definition could, of course, be restated in terms of choice behavior over lotteries for links.}  

\begin{definition} Linking incentives are \textbf{separable} if the following holds for every $i$, $j$, and $\mathbf{s}$. Player $i$'s valuation of adding the  link $ij$, $$ u_i(G+ij,\mathbf{s})-u_i(G,\mathbf{s}), $$ depends only on $s_i$ and $s_j$, and not on any other actions or links. 
\end{definition}

In this section we establish a simple result: 
\begin{lemma}
Linking incentives are separable and anonymous if and only if payoffs can be represented via the form (\eqref{eq:parametricpayoff}), which we reproduce here for convenience:
\begin{equation}\label{eq:parametricpayoff2}
u_{i}(G, \mathbf{s}) = v_i(\mathbf{s}) + \sum_{j \in G_i} g(s_i, s_j).
\end{equation} \end{lemma}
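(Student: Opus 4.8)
The plan is to prove both directions separately. The ``if'' direction is the easier one: assuming payoffs have the form \eqref{eq:parametricpayoff2}, I would directly compute the incremental value of a link. For any $i$, $j$, and outcome $(G,\mathbf{s})$ with $ij \notin G$, we have
\[
u_i(G+ij,\mathbf{s}) - u_i(G,\mathbf{s}) = g(s_i,s_j),
\]
since the $v_i(\mathbf{s})$ term is unchanged and the sum over neighbors gains exactly the single term $g(s_i,s_j)$. This expression depends only on $s_i$ and $s_j$, which is separability. For anonymity, observe that after relabeling by a permutation $\sigma$, agent $\sigma(i)$'s incremental value of link $\sigma(i)\sigma(j)$ at $(G_\sigma,\mathbf{s}_\sigma)$ equals $g(s_{\sigma(i)}^{\sigma}, s_{\sigma(j)}^{\sigma}) = g(s_i,s_j)$ because $g$ is a common function (not indexed by the player) and $s^{\sigma}_{\sigma(i)} = s_i$; hence $i$ strictly prefers to add $ij$ iff $\sigma(i)$ strictly prefers to add $\sigma(i)\sigma(j)$.

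For the ``only if'' direction, suppose linking incentives are separable and anonymous. Separability gives, for each ordered pair $(i,j)$, a function $g_{ij} : S_i \times S_j \to \bb{R}$ with $u_i(G+ij,\mathbf{s}) - u_i(G,\mathbf{s}) = g_{ij}(s_i,s_j)$ whenever $ij \notin G$. Anonymity then forces these functions to coincide: applying a permutation $\sigma$ that sends $i \mapsto i'$ and $j \mapsto j'$ and comparing strict preferences shows $g_{ij}$ and $g_{i'j'}$ induce the same strict-preference relation, so (using that $u_i$ is a fixed numerical representation with cardinal content, and after possibly noting the functions agree up to the normalization implicit in a common representation) we may take a single function $g := g_{ij}$ independent of labels. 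Next I would build $v_i$ by peeling off the link contributions: fix a reference and define
\[
v_i(\mathbf{s}) := u_i(G,\mathbf{s}) - \sum_{j \in G_i} g(s_i,s_j),
\]
and argue this is well-defined, i.e.\ independent of which graph $G$ we use. This follows by adding links one at a time: for any $G$ and any $ij \notin G$, separability gives $u_i(G+ij,\mathbf{s}) - u_i(G,\mathbf{s}) = g(s_i,s_j)$, so $u_i(G,\mathbf{s}) - \sum_{k \in G_i} g(s_i,s_k)$ is unchanged when we pass from $G$ to $G+ij$; since any two graphs are connected by a path of single-link additions and deletions, the expression is constant across all graphs, and in particular equals its value at the empty graph, so $v_i(\mathbf{s}) = u_i(\varnothing, \mathbf{s})$. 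Rearranging yields $u_i(G,\mathbf{s}) = v_i(\mathbf{s}) + \sum_{j \in G_i} g(s_i,s_j)$, which is \eqref{eq:parametricpayoff2}.

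The main obstacle is the precise handling of the cardinal/ordinal content in the ``only if'' direction: separability and anonymity are stated in terms of strict preference over adding a link, which is ordinal information, whereas the target representation \eqref{eq:parametricpayoff2} is an additive cardinal decomposition. I would address this by being explicit that we have fixed a numerical representation $u_i$ with cardinal content (as the appendix's preamble does), so that the incremental quantity $u_i(G+ij,\mathbf{s})-u_i(G,\mathbf{s})$ is a well-defined real number, not merely a sign; separability then says this number depends only on $(s_i,s_j)$, giving a genuine function $g_{ij}$, and anonymity (applied to the strict-preference statements, which pin down the function given that it is a difference of the fixed representation) forces $g_{ij}$ to be the same function $g$ for all ordered pairs. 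Once this is pinned down, the telescoping argument over single-link changes is routine. I would also remark that $g$ need not be symmetric in its two arguments, since anonymity only permutes labels and does not swap the roles of proposer and partner within a fixed pair.
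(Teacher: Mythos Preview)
Your approach is essentially the same as the paper's: both set $v_i(\mathbf{s}) = u_i(\varnothing,\mathbf{s})$, telescope $u_i(G,\mathbf{s})-u_i(\varnothing,\mathbf{s})$ over single-link additions using separability to get per-link functions, and then invoke anonymity to collapse these into a single $g$. The paper is terser---it declares the ``if'' direction trivial and does not flag the ordinal/cardinal tension you identify as the main obstacle---but the skeleton of the argument is identical.
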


\begin{proof} The ``if'' direction is trivial. For the ``only if'' direction, we suppose that linking incentives are anonymous and separable. Fix any $\mathbf{s}$. Let $v_i(\mathbf{s})$ be defined as $u_i(\varnothing,\mathbf{s})$, where $\varnothing$ is the empty graph. Now let the neighbors of $i$ in $G$ be enumerated as $j_1, j_2, \ldots,j_d$. Let $G_k$ be the graph with edges $(i,j_{k'})$ for all $k' \leq k$, with $G_0$ understood to be the empty graph $\varnothing$. By separability of linking incentives, we can inductively show $$ u_i(G,\mathbf{s}) =v_i(\mathbf{s})+\sum_{k=1}^d [u_i(G_k,\mathbf{s})-u_i(G_{k-1},\mathbf{s})].$$ Separability further implies that $u_i(G_k,\mathbf{s})-u_i(G_{k-1},\mathbf{s})$ is a function, say $g_k(s_i,s_j)$, only of $s_i$ and $s_j$. Thus we write $$ u_i(G,\mathbf{s}) =v_i(\mathbf{s})+\sum_{k=1}^d g_k(s_i,s_{j_k}).$$
Anonymity then requires that $g_k$ does not depend on $k$. Thus we may write $$ u_i(G,\mathbf{s}) =v_i(\mathbf{s})+\sum_{k=1}^d g(s_i,s_{j_k}),$$ which is the desired form.
\end{proof}

\subsection{Proofs of results in \cref{sec:framework}}

\paragraph{Proof of Lemma 1}
Fixing an outcome $(G, \mathbf{s})$, define the binary relation $\succeq_{\text{in}}$ via $i \succeq_{\text{in}} j$ if there is no player $k$ such that
$$\Delta_{ik} u_k(G,\mathbf{s}) < 0 \quad \text{and} \quad \Delta_{jk} u_k(G, \mathbf{s}) \geq 0.$$
That is, every player with a weak incentive to link with $j$ also has a weak incentive to link with $i$.  Because linking preferences are consistent, the relation $\succeq_{\text{in}}$ is complete---if we do not have $i \succeq_{\text{in}} j$, then there exists $k$ with a weak incentive to link with $j$ but not $i$, and if we do not have $j \succeq_{\text{in}} i$, then there exists $\ell$ with a weak incentive to link with $i$ but not $j$, contradicting consistent link preferences.  It is straightforward to check that this relation is also transitive.  Similarly, the binary relation $\succeq_{\text{out}}$ defined via $i \succeq_{\text{out}} j$ if there is no player $k$ such that
$$\Delta_{ik} u_i(G, \mathbf{s}) < 0 \quad \text{and} \quad \Delta_{jk} u_k(G, \mathbf{s}) \geq 0$$
is complete and transitive.

Suppose there exists $i^*$ maximal under $\succeq_{\text{in}}$, and $i_*$ minimal under $\succeq_{\text{in}}$, such that $i^* \succ_{\text{out}} i_*$.  Fix any players $i$ and $j$.  If $i_*$ wants to link with $j$, then $i^*$ wants to link with $j$ because $i^* \succeq_{\text{out}} i_*$, and $i$ also wants to link with $j$ by alignment.  Similarly, if $i^*$ does not want to link with $j$, then $i_*$ and $i$ do not want to link with $j$.  This tells us that $i^*$ is maximal under $\succeq_{\text{out}}$ and $i_*$ is minimal under $\succeq_{\text{out}}$.

We now show that $i \succeq_{\text{out}} j$ whenever $i \succeq_{\text{in}} j$.  If not, there exists $k$ such that $j$ wants to link with $k$ but $i$ does not.  Since $i^*$ is maximal under $\succeq_{\text{out}}$, we know $i^*$ wants to link with $k$.  We now have
$$i^* \succeq_{\text{in}} i \succeq_{\text{in}} j,$$
and both $i^*$ and $j$ want to link with $k$.  Alignment now implies that $i$ wants to link with $k$, a contradiction.  A similar argument shows that if $i^* \preceq_{\text{out}} i_*$ for any $i^*$ and $i_*$ maximal and minimal respectively under $\succeq_{\text{in}}$, then $i \preceq_{\text{out}} j$ whenever $i \succeq_{\text{in}} j$. \hfill \qedsymbol

\subsection{Supporting results and proofs for  \cref{sec:carrell}} \label{sec:carrell_support}

We begin with a  lemma that characterizes equilibrium actions for players connected in a clique.

Throughout this subsection, we write $\overline{b}_{N'} = \frac{1}{n} \sum_{i \in N'} b_i$ for the average private incentive in a subset $N' \subseteq N$. We write $\overline{b}$ for $\overline{b}_N$, the average incentive in the whole population. Also, recall a clique with nodes $C$ is a graph such that players $i$ and $j$ are linked for every $i,j \in C$.

\begin{lemma}\label{lem:cliqueaction}
Suppose $\alpha \leq 1$, and the stable network $G$ is a clique with nodes $C$.  Each player $i \in C$ has a unique equilibrium action
\begin{equation}\label{eq:cliqueaction}
    s_i = \frac{1}{\alpha + |C|} \left(b_i + \frac{\alpha |C| \overline{b}_C}{\alpha + (1-\alpha)|C|}\right).
\end{equation}
Consequently, the payoff to player $i$ in clique $C$ is
\begin{equation}\label{eq:welfare}
u_i = \frac{1}{2}\frac{|C|}{(\alpha + |C|)^2}\left(b_i + \frac{\alpha |C| \overline{b}_C}{\alpha + (1-\alpha)|C|}\right)^2.
\end{equation}
\end{lemma}

\medskip

One consequence of \cref{lem:cliqueaction} is that welfare is higher in larger cliques only if spillovers are sufficiently strong.  For any $\alpha < 1$, the denominator in \eqref{eq:welfare} contains a higher power of $|C|$ than the numerator.  Holding private incentives fixed, this means that utility declines if the clique becomes large enough. In the case with $\alpha = 1$, equation \eqref{eq:welfare} gives
$$u_i = \frac{|C|}{2}\frac{\left(b_i + |C| \overline{b}_C\right)^2}{(1 + |C|)^2}.$$
If $|C|$ gets larger without significantly affecting the average private incentive $\overline{b}_C$, this payoff increases: the first factor is proportional to $|C|$, and the second factor approaches $\overline{b}_C^2$.

\paragraph{Proof of \cref{lem:cliqueaction}} 

From the first-order condition, we have
$$s_i = \frac{1}{|C|}\left(b_i - \alpha s_i + \alpha \sum_{j \in C} s_j\right) \quad \implies \quad b_i = (|C| + \alpha)s_i - \alpha \sum_{j \in C} s_j.$$
Summing over all $i \in C$, we get
$$|C| \overline{b}_C = (|C| + \alpha - \alpha |C|)\sum_{j \in C} s_j \quad \implies \quad \sum_{j \in C} s_j = \frac{|C| \overline{b}_C}{\alpha + (1-\alpha)|C|}.$$
Substituting into the first-order condition and solving yields the result.

\paragraph{Proof of \cref{prop:empty}}

Recall player $i$ finds it strictly beneficial to link with $j$ if and only if $\frac{s_i}{s_j} < 2\alpha$---the first claim follows.  It is weakly beneficial only if $\frac{s_i}{s_j} \leq 2\alpha$, and this can never hold for both $i$ and $j$ if $\alpha < \frac{1}{2}$.  If $\alpha = \frac{1}{2}$, it holds for both if and only if $s_i = s_j$---if $b_i = b_j$ for some pair $ij$, then $i$ and $J$ can optimally form a clique, so the empty graph is not uniquely stable.  Going forward, we assume the $b_i$ are distinct.

Suppose we $G$ is not empty, and let $S$ be the largest connected component.  Write $\overline{s}$ and $\underline{s}$ for the maximal and minimal equilibrium actions among players in $S$.  If the outcome is stable, we need $\frac{s_i}{s_j} \leq 2 \alpha$ whenever $i$ and $j$ are linked, implying $\frac{\overline{s}}{\underline{s}} \leq (2\alpha)^n$.  Using the first-order condition, we have
$$s_i = \frac{1}{d_i + 1}\left(b_i + \alpha \sum_{j \in G_i} s_j\right),$$
so for $i \in S$, we must have
$$\frac{b_i + \alpha d_i \underline{s}}{d_i+1} \leq s_i \leq \frac{b_i + \alpha d_i \overline{s}}{d_i+1} \leq \frac{b_i + \alpha d_i (2\alpha)^n \underline{s}}{d_i+1}.$$

Within a connected component, there exist two players $i$ and $j$ with the same degree---if there are $m$ players in a component, then there are $m-1$ possible degree values, and the pigeonhole principle tells us two players have the same degree.  Suppose $i,j \in S$ have $d_i = d_j = d$, and without loss of generality assume $b_i < b_j$.  We then have
$$(d+1)s_i \leq b_i + \alpha d \overline{s} \leq b_i + \alpha d (2\alpha)^n \underline{s}, \quad \text{and} \quad (d+1)s_j \geq b_j + \alpha d \underline{s}.$$
Choosing $\alpha > \frac{1}{2}$, but sufficiently close to $\frac{1}{2}$, we can conclude that $(2\alpha)^n s_i <  s_j$, which implies that $i$ and $j$ cannot connected---there exists a link that a player would like to sever.  Hence, for $\alpha$ sufficiently close to $\frac{1}{2}$, the only pairwise stable outcome involves an empty graph. \hfill \qedsymbol

\paragraph{Proof of \cref{prop:complete}}

To ensure that the complete graph is part of a pairwise stable outcome, we only need to check that player $n$, who takes the highest action in the complete graph equilibrium, does not want to delete her link to player $1$, who takes the lowest action.  Using \cref{lem:cliqueaction}, this means
\begin{equation}\label{eq:complete1}
    2\alpha \geq \frac{s_n}{s_1} = \frac{b_n(\alpha + (1-\alpha)n) + \alpha n \overline{b}}{b_1(\alpha + (1-\alpha)n) + \alpha n \overline{b}},
\end{equation}
which is equivalent to
\begin{equation}\label{eq:complete2}
(b_n - 2\alpha b_1)\left(\alpha + (1 - \alpha)n\right) \leq (2\alpha - 1)\alpha n \overline{b}.
\end{equation}
This inequality is hardest to satisfy if we minimize the right hand side.  To do this, we take $b_j = b_1$ for each $j \neq n$.  Hence, the complete graph is part of a pairwise stable outcome if
$$(b_n - 2\alpha b_1) \left(\alpha + (1-\alpha)n\right) \leq \alpha (2\alpha - 1) \left(b_n + (n-1)b_1\right),$$
which is equivalent to
$$b_n(1-\alpha)\left(2 + \frac{n}{\alpha}\right) \leq b_1\left(1 + n\right) \quad \implies \quad \frac{b_n}{b_1} \leq \frac{\alpha(1 + n)}{(1-\alpha)(2\alpha + n)}.$$

For the second claim, we need to show that no other graph can occur in a pairwise stable outcome.  Recall the first-order condition 
$$s_i = \frac{1}{1 + d_i}\left(b_i + \alpha \sum_{j \in G_i} s_j\right).$$
Notice that the best response to any $\mathbf{s}$ in which $s_i \leq b_n$ for each $i$ is a vector $\mathbf{s}'$ in which $s'_i \leq b_n$ for each $i$, so we necessarily have $s_i \leq b_n$ in the unique Nash equilibrium on any fixed graph.  Conversely, let $\underline{s}$ denote the lowest equilibrium action across all graphs $G$ that are not complete.  We necessarily have
$$\underline{s} \geq \frac{1}{d+1}\left(b_1 + \alpha d \underline{s}\right) \quad \implies \quad \underline{s} \geq \frac{b_1}{\alpha + (1-\alpha)(d-1)}$$
for each $d = 0,1, 2, \ldots, n-2$.  The weakest bound occurs for $d = n-2$---we need not consider the case in which the player taking the lowest action has degree $n-1$ because \cref{cor:structure} then implies the graph is complete.  Consequently, the ratio of any two players' actions across all graphs that are not complete is bounded above by
$$\frac{b_n(\alpha + (1-\alpha)(n-1))}{b_1} < 2 \alpha,$$
implying that all players have a strict incentive to link.

For the last claim, we return to equation \eqref{eq:complete2} and \emph{maximize} the right hand size by taking $b_2 = b_3 = \cdots = b_n$.  We find that the complete graph cannot be pairwise stable if
$$(b_n - 2\alpha b_1)\left(\alpha + (1 - \alpha) n\right) > \alpha (2\alpha - 1) \left((n-1)b_n + b_1\right),$$
which is equivalent to the stated condition. \hfill \qedsymbol

}

\end{document}